\newcommand{\SKIP}{\mathtt{skip}}
\newcommand{\EMPTY}{\mathtt{empty}}
\newcommand{\PASSIGN}[2]{{#1}\mathrel{:\approx}{#2}}
\newcommand{\DIVERGE}{\mathtt{diverge}}
\newcommand{\ITE}[3]{\mathtt{if} \left( {#1} \right) \left\{ {#2} \right\} \mathtt{else} \left\{ {#3} \right\}}
\newcommand{\COMPOSE}[2]{{#1};\:{#2}}
\newcommand{\WHILE}[1]{\mathtt{while} \left( {#1}\right) \{ }
\newcommand{\WHILEDO}[2]{\mathtt{while} \left( {#1}\right) \left\{ {#2} \right\}}
\newcommand{\IF}[1]{\mathtt{if}\left({#1}\right)\left\{\right.}
\newcommand{\IFBRACK}[1]{\mathtt{if} \left( {#1} \right) \{}
\newcommand{\ELSE}{~\mathtt{else}~}
\newcommand{\REPEAT}{\mathtt{repeat}~\{}
\newcommand{\UNTIL}[1]{\}~\mathtt{until}\left({#1}\right)}
\newcommand{\REPEATUNTIL}[2]{\mathtt{repeat}\left\{{#1}\right\}\mathtt{until}\left( {#2}\right)}
\newcommand{\To}{\rightarrow}
\newcommand{\iverson}[1]{\left[ {#1} \right]}
\newcommand{\wpsymbol}{\mathsf{wp}}
\renewcommand{\wp}[2]{\wpsymbol\left\llbracket{#1}\right\rrbracket\left( {#2} \right)}
\newcommand{\ertsymbol}{\mathsf{ert}}
\newcommand{\boldertsymbol}{\textsf{\textbf{ert}}}
\newcommand{\ert}[2]{\ertsymbol\left\llbracket{#1}\right\rrbracket\left( {#2} \right)}
\newcommand{\subst}[2]{[{#1}/{#2}]}
\newcommand{\lfp}{\textup{\textsf{lfp}}~}
\newcommand{\statesubst}[2]{\left[ {#1} \mapsto {#2}\right]}
\newcommand{\Exp}[3]{\int_{#1}~{#2}~d{#3}}
\newcommand{\semantics}[2]{\left\llbracket {#1} \right\rrbracket_{#2}}
\newcommand{\RelSymbolNewRule}{\mathrel{\not\mathrel{\Cap}}}
\newcommand{\RelNewRule}[2]{{#1} \RelSymbolNewRule {#2}}
\newcommand{\Nats}{\mathbb{N}}
\newcommand{\Reals}{\mathbb{R}}
\newcommand{\Rpos}{\mathbb{R}_{\geq 0}}
\newcommand{\Rposinf}{\mathbb{R}_{\geq 0}^{\infty}}
\newcommand{\E}{\mathbb{E}}
\newcommand{\Vars}{\textsf{Vars}}
\newcommand{\VarsInExp}[1]{\textsf{Vars}\left( #1 \right)}
\newcommand{\VarsAssign}[1]{\textsf{Mod} \left( #1 \right)}
\newcommand{\VarsGuard}[1]{\textsf{Vars}_{\textsf{Guard}} \left( #1 \right) }
\newcommand{\Vals}{\textsf{Vals}}
\newcommand{\Rats}{\mathbb{Q}}
\newcommand{\States}{\Sigma}
\newcommand{\pgcl}{\textnormal{\sfsymbol{pGCL}}\xspace}   
\newcommand{\bnl}{\textnormal{\sfsymbol{BNL}}\xspace}   
\newcommand{\Dists}[1]{\mathcal{D}\left( {#1} \right)}
\newcommand{\BN}{\mathcal{B}}
\newcommand{\EBN}{\mathcal{B}}
\newcommand{\EBNS}{\text{EBN}}
\newcommand{\NODES}{V}
\newcommand{\INPUTS}{I}
\newcommand{\EDGES}{E}
\newcommand{\VALUES}{\Vals}
\newcommand{\DEP}{\mathsf{dep}}
\newcommand{\CPTSYM}{\mathsf{cpt}}
\newcommand{\CPT}[1]{\CPTSYM[#1]}
\newcommand{\PROB}[1]{\mathsf{Pr}\left(#1\right)}
\newcommand{\T}[1]{\mathbf{#1}}
\newcommand{\EMPTYT}{\mathbf{\varepsilon}}
\newcommand{\El}[2]{#1(#2)}
\newcommand{\BVAL}[1]{\underline{#1}}
\newcommand{\BNCOND}{\textit{cond}}
\newcommand{\TGUARD}[3]{\textit{guard}_{#1}(#2,#3)}
\newcommand{\TASSIGN}[3]{\textit{assign}_{#1}(#2,#3)}
\newcommand{\TBLOCK}[2]{\textit{block}_{#1}(#2)}
\newcommand{\TNODE}[3]{\textit{block}_{#1}(#2,#3)}
\newcommand{\TBN}[2]{\textit{BNL}(#1,#2)}
\newcommand{\TB}[1]{\textit{BNL}(#1)}
\newcommand{\TVAR}[1]{x_{#1}}
\newcommand{\ROOTS}[1]{\textit{roots}(#1)}
\newcommand{\true}{\textsf{true}}
\newcommand{\false}{\textsf{false}}
\newcommand{\sfsymbol}[1]{\textsf{\upshape {#1}}}
\newcommand{\mydot}{\text{{\huge\textbf{.}}~}}
\newcommand{\qiff}{\quad\textnormal{iff}\quad}
\newcommand{\qqiff}{\qquad\textnormal{iff}\qquad}
\newcommand{\ppreceq}{~{}\preceq{}~}
\newcommand{\eeq}{~{}={}~}
\newcommand{\nneq}{~{}\neq{}~}
\newcommand{\lleq}{~{}\leq{}~}
\newcommand{\nodeS}{S}
\newcommand{\nodeR}{R}
\newcommand{\nodeG}{G}
\newcommand{\setcomp}[2]{\left\{\, {#1} ~\middle|~ {#2} \,\right\}}
\newcommand{\valueIn}[1]{\langle #1 \rangle}
\colorlet{shadegray}{gray!40}
\begin{document}
\mainmatter              
\title{How long, O Bayesian network,\\ will I sample thee?}
\subtitle{A program analysis perspective on\\ expected sampling times.\\[1ex](arXiv extended version)}                     
\titlerunning{How long, O Bayesian network, will I sample thee?}  
%
\author{Kevin Batz \and Benjamin Lucien Kaminski \and \\Joost-Pieter Katoen \and Christoph Matheja}
\authorrunning{Batz, Kaminski, Katoen, \& Matheja} 
%
\tocauthor{Kevin Batz, Benjamin Lucien Kaminski, Joost-Pieter Katoen, Christoph Matheja}
\institute{RWTH Aachen University}

\maketitle              

\begin{abstract}
Bayesian networks (BNs) are probabilistic graphical models for describing complex joint probability distributions.
The main problem for BNs is inference:
Determine the probability of an event given observed evidence.
Since exact inference is often infeasible for large BNs, popular approximate inference methods rely on sampling.

We study the problem of determining the expected time to obtain a single valid sample from a BN.
To this end, we translate the BN together with observations into a probabilistic program.
We provide proof rules that yield the exact expected runtime of this program in a fully automated fashion.
We implemented our approach and successfully analyzed various real--world BNs taken from the Bayesian network repository.

\keywords{Probabilistic Programs, Expected Runtimes, Weakest Preconditions, Program Verification}
\end{abstract}
%


\section{Introduction}

\paragraph{Bayesian networks} (BNs) are \emph{probabilistic graphical models} representing joint probability distributions of sets of random variables with conditional dependencies.
Graphical models are a popular and appealing modeling formalism, as they allow to succinctly represent complex distributions in a human--readable way.
Bayesian Networks have been intensively studied at least since 1985~\cite{pearl1985bayesian} and
have a wide range of applications including
machine learning~\cite{DBLP:series/sci/Heckerman08}, 
speech recognition~\cite{DBLP:conf/aaai/ZweigR98}, 
sports betting~\cite{DBLP:journals/kbs/ConstantinouFN12}, 
gene regulatory networks~\cite{DBLP:conf/recomb/FriedmanLNP00}, 
diagnosis of \mbox{diseases~\cite{DBLP:journals/jamia/JiangC10}, 
and finance \cite{neapolitan2010probabilistic}}.

\paragraph{Probabilistic programs} are programs with the key ability to draw values at random.
Seminal papers by Kozen from the 1980s consider formal semantics~\cite{DBLP:journals/jcss/Kozen81} as well as initial work on verification~\cite{DBLP:journals/siamcomp/SharirPH84,DBLP:journals/jcss/Kozen85}.
McIver \& Morgan~\cite{mciver} build on this work to further weakest--precondition style verification for imperative probabilistic programs.

The interest in probabilistic programs has been rapidly growing in recent years~\cite{DBLP:conf/popl/Goodman13,DBLP:conf/icse/GordonHNR14}.
Part of the reason for this d\'{e}j\`{a} vu is 
their use for representing probabilistic graphical models~\cite{DBLP:books/daglib/0023091} such as BNs.
The full potential of modern probabilistic programming languages like 
\textsf{Anglican}~\cite{DBLP:conf/aistats/WoodMM14}, 
\textsf{Church}~\cite{DBLP:conf/uai/GoodmanMRBT08},
\textsf{Figaro}~\cite{pfeffer2009figaro},
\textsf{R2}~\cite{DBLP:conf/aaai/NoriHRS14}, or
\textsf{Tabular}~\cite{DBLP:conf/popl/GordonGRRBG14} 
is that they enable rapid prototyping and obviate the need to manually provide inference methods tailored to an individual model.

\paragraph{Probabilistic inference}
is the problem of determining the probability of an event given observed evidence.
It is a major problem for both BNs and probabilistic programs, and has been subject to intense investigations by both theoreticians and practitioners for more than three decades; see~\cite{DBLP:books/daglib/0023091} for a survey.
In particular, it has been shown that for probabilistic programs exact inference is highly undecidable~\cite{DBLP:conf/mfcs/KaminskiK15}, while for BNs both \emph{exact inference} as well as \emph{approximate inference} to an arbitrary precision are \textsc{NP}--hard~\cite{DBLP:journals/ai/Cooper90,DBLP:journals/ai/DagumL93}.
In light of these complexity--theoretical hurdles, a popular way to analyze probabilistic graphical models as well as probabilistic programs
is to gather a large number of independent and identically distributed (i.i.d.\ for short) samples and then do statistical reasoning on these samples.
In fact, all of the aforementioned probabilistic programming languages support sampling based inference methods.

\paragraph{Rejection sampling}
is a fundamental approach to obtain valid samples from BNs with observed evidence.
In a nutshell, this method first samples from the joint (unconditional) distribution of the BN. If the sample complies with all evidence, it is valid and accepted; otherwise it is rejected and one has to resample.

Apart from rejection sampling, there are more sophisticated sampling techniques, which mainly fall in two categories: Markov Chain Monte Carlo (MCMC) and importance sampling.
But while MCMC requires heavy hand--tuning and suffers from slow convergence rates on real--world instances~\cite[Chapter 12.3]{DBLP:books/daglib/0023091},
virtually all variants of importance sampling rely again on rejection sampling \cite{DBLP:books/daglib/0023091,DBLP:journals/mcm/YuanD06}.
%

A major problem with rejection sampling is that for poorly conditioned data, this approach might have to reject and resample very often in order to obtain just a single accepting sample.
Even worse, being poorly conditioned need not be immediately evident for a given BN, let alone a probabilistic program.
In fact, Gordon~et~al.~\cite[p.~177]{DBLP:conf/icse/GordonHNR14} point out that
\begin{quote}
``the main challenge in this setting [i.e. sampling based approaches] is that many samples that are generated during execution are ultimately rejected for not satisfying the observations.''
\end{quote}
If too many samples are rejected, the expected sampling time grows so large that sampling becomes infeasible.
The expected sampling time of a BN is therefore a key figure for deciding whether sampling based inference is the method of choice.

\paragraph{How long, O Bayesian network, will I sample thee?}

More precisely, we use techniques from program verification to give an answer to the following question:
\begin{quote}
Given a Bayesian network with observed evidence, how long does it take in expectation to obtain a \emph{single} sample that satisfies the observations?
\end{quote}
%
%
\begin{figure}[t]
\begin{tikzpicture}
\node (s) at (-2,0) [ellipse, draw, fill=gray!20, minimum width=2.2cm, minimum height=1.0cm
] {\scriptsize $\nodeS$};
\node (r) at (2,0) [ellipse, draw, fill=gray!20, minimum width=2.2cm, minimum height=1.0cm
] {\scriptsize $\nodeR$};
\node (g) at (0,-2) [ellipse, draw, fill=gray!20, minimum width=2.2cm, minimum height=1.0cm
] {\scriptsize $\nodeG$};

\draw[line width=1.0pt] (r) edge[->] (s) (r)edge[->](g) (g) (s)edge[->](g);

\node at (4.6, 0.2) {\begin{tabular}{|c | c|} \hline
\rowcolor{gray!20}\scriptsize $\nodeR = 0$ & \scriptsize $\nodeR = 1$  \\ \hline
\scriptsize $a$ &\scriptsize  $1-a$ \\
\hline
 \end{tabular}};
 
\node at (-5,0) {\begin{tabular}{ | c | c | c | } \hline
\rowcolor{gray!20} \phantom{\scriptsize ja}&\scriptsize $\nodeS = 0$&\scriptsize $\nodeS = 1$ \\ \hline
\scriptsize $\nodeR = 0$ & \scriptsize $a$ & \scriptsize $1-a$ \\
\hline
\scriptsize $\nodeR = 1$ & \scriptsize $0.2$ & \scriptsize $0.8$ \\
\hline
\end{tabular}};

  \node at (3.705,-2.8) {\begin{tabular}{ | c | c | c | } \hline
\rowcolor{gray!20} \phantom{\scriptsize ja}&\scriptsize $G=0$&\scriptsize $G=1$ \\ \hline
\scriptsize $\nodeS = 0$, $\nodeR =0$ & \scriptsize $0.01$ & \scriptsize $0.99$ \\
\hline
\scriptsize $\nodeS =0$, $\nodeR =1$ & \scriptsize $0.25$ & \scriptsize $0.75$ \\
\hline
\scriptsize $\nodeS =1$, $\nodeR =0$ & \scriptsize $0.9$ & \scriptsize $0.1$ \\
\hline
\scriptsize $\nodeS =1$, $\nodeR =1$ & \scriptsize $0.2$ & \scriptsize $0.8$ \\
\hline
\end{tabular}};

\end{tikzpicture}
\caption{
A simple Bayesian network.
}
\label{fig:intro:bn}
\end{figure}
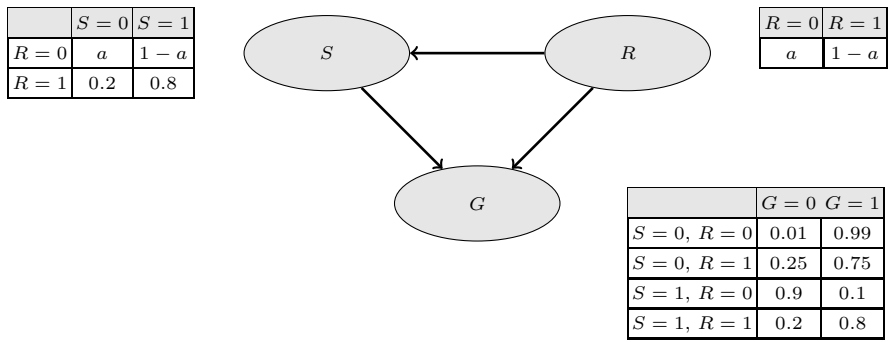
As an example, consider the BN in Figure~\ref{fig:intro:bn} which consists of just three nodes (random variables) that can each assume values $0$ or $1$.
Each node $X$ comes with a conditional probability table determining the probability of $X$ assuming some value given the values of all nodes $Y$ that $X$ depends on (i.e. $X$ has an incoming edge from $Y$), see Appendix \ref{app:example:calculations} for detailed calculations.
For instance, the probability that $G$ assumes value $0$, given that $S$ and $R$ are both assume $1$, is $0.2$.
Note that this BN is paramterized by $a \in [0,1]$.

Now, assume that our observed evidence is the event $G{=}0$ and we apply rejection sampling to obtain \emph{one} accepting sample from this BN. 
Then our approach will yield that a rejection sampling algorithm will, on average, require
\begin{align*}
	\frac{200 a^2 - 40 a - 460}{89 a^2 - 69 a - 21}
\end{align*}
guard evaluations, random assignments, etc. until it obtains a single sample that complies with the observation $G{=}0$ (the underlying runtime model is discussed in detail in Section \ref{sec:ert}).
By examination of this function, we see that for large ranges of values of $a$ the BN is rather well--behaved:
For $a \in [0.08,\, 0.78]$ the expected sampling time stays below 18. 
Above $a = 0.95$ the expected sampling time starts to grow rapidly up to $300$.

While 300 is still moderate, we will see later that expected sampling times of real--world BNs can be much larger.
For some BNs, the expected sampling time even exceeded $10^{18}$, rendering sampling based methods infeasible.
In this case, exact inference (despite \textsf{NP}--hardness) was a viable alternative (see Section~\ref{sec:implementation}).

\paragraph{Our approach.}
We apply weakest precondition style reasoning a l\'a McIver \& Morgan~\cite{mciver} and Kaminski et al.~\cite{DBLP:conf/esop/KaminskiKMO16} to analyze both expected outcomes and \emph{expected runtimes} (ERT) of a \emph{syntactic fragment of \pgcl}, which we call  the \emph{Bayesian Network Language} (\bnl).
Note that since \bnl is a syntactic fragment of \pgcl, every \bnl program is a \pgcl program but \emph{not vice versa}.
The main restriction of \bnl is that (in contrast to \pgcl) loops are of a special form that prohibits undesired data flow across multiple loop iterations.
While this restriction renders \bnl incapable of, for instance, counting the number of loop iterations\footnote{An example of a program that is \emph{not} expressible in \bnl is given in Example~\ref{ex:geo}.}, \bnl is expressive enough to encode Bayesian networks \mbox{with observed evidence}.

For \bnl, we develop dedicated proof rules to determine \emph{exact} expected values and the \emph{exact} ERT of any \bnl program, including loops,
without any user--supplied data, such as invariants~\cite{mciver,DBLP:conf/esop/KaminskiKMO16}, ranking or metering functions~\cite{DBLP:conf/cade/FrohnNHBG16}, (super)martingales~\cite{DBLP:conf/cav/ChakarovS13,DBLP:conf/popl/ChatterjeeFNH16,DBLP:conf/popl/ChatterjeeNZ17}, etc. 

As a central notion behind these rules, we introduce \emph{$f$--i.i.d.--ness} of probabilistic loops, a concept closely related to stochastic independence, that allows us to \emph{rule out undesired parts of the data flow across loop iterations}.
Furthermore, we show how every BN with observations is translated into a \bnl \mbox{program, such that}
\begin{enumerate}
  \item[(a)] executing the \bnl program corresponds to sampling from the \emph{conditional} joint distribution given by the BN and observed data, and
  \item[(b)] the ERT of the \bnl program corresponds to the expected time until a sample that satisfies the observations is obtained from the BN. 
\end{enumerate}
As a consequence, exact expected sampling times of BNs can be inferred by means of weakest precondition reasoning in a fully automated fashion.
This can be seen as a first step towards formally evaluating the quality of a plethora of different sampling methods (cf.~\cite{DBLP:books/daglib/0023091,DBLP:journals/mcm/YuanD06}) on source code level.

\paragraph{Contributions.} To summarize, our main contributions are as follows:
\begin{itemize}
  \item We develop easy--to--apply proof rules to reason about expected outcomes and expected runtimes of probabilistic programs with $f$--i.i.d. loops.
  \item We study a syntactic fragment of probabilistic programs, the Bayesian network language (\bnl), and show that our proof rules are applicable to every \bnl program; expected runtimes of $\bnl$ programs can thus be inferred.
  \item We give a formal translation from Bayesian networks with observations to $\bnl$ programs; expected sampling times of BNs can thus be inferred.
  \item We implemented a prototype tool that automatically analyzes the expected sampling time of BNs with observations. 
          An experimental evaluation on real--world BNs demonstrates that very large expected sampling times (in the magnitude of millions of years) can be inferred within less than a second; 
          This provides practitioners the means to decide whether sampling based methods are appropriate for their models. 
\end{itemize}

\paragraph{Outline.}
We discuss related work in Section~\ref{sec:related-work}.
Syntax and semantics of the probabilistic programming language $\pgcl$ are presented in Section~\ref{sec:pprogs}.
Our proof rules are introduced in Section~\ref{sec:ert-rules} and applied to BNs in Section~\ref{sec:applications}.
Section~\ref{sec:implementation} reports on experimental results and Section~\ref{sec:conclusion} concludes.


\section{Related Work}\label{sec:related-work}

While various techniques for formal reasoning about runtimes and expected outcomes of probabilistic programs have been developed, 
e.g.~\cite{Hehner:FAC:2011,McIver:FM:2005,DBLP:journals/jcss/BrazdilKKV15,luis,DBLP:conf/sas/Monniaux01}, none of them explicitly apply formal methods to reason about Bayesian networks on source code level.
In the following, we focus on approaches close to our work.

\paragraph{Weakest preexpectation calculus.}
Our approach builds upon the expected runtime calculus~\cite{DBLP:conf/esop/KaminskiKMO16}, which is itself based on work by Kozen~\cite{DBLP:journals/jcss/Kozen81,DBLP:journals/jcss/Kozen85} and McIver and Morgan~\cite{mciver}.
In contrast to~\cite{DBLP:conf/esop/KaminskiKMO16}, we develop specialized proof rules for a clearly specified program fragment \emph{without} requiring user--supplied invariants.
Since finding invariants often requires heavy calculations, our proof rules contribute towards simplifying and automating verification of probabilistic programs.

\paragraph{Ranking supermartingales.}
Reasoning about almost--sure termination is often based on ranking (super)martingales (cf.~\cite{DBLP:conf/cav/ChakarovS13,DBLP:conf/popl/ChatterjeeNZ17}).
In particular, Chatterjee et al.~\cite{DBLP:conf/popl/ChatterjeeFNH16} consider the class of affine probabilistic programs for which linear ranking supermartingales exist (\textsc{Lrapp}); thus proving (positive\footnote{Positive almost--sure termination means termination in finite expected time~\cite{DBLP:conf/rta/BournezG05}.}) almost--sure termination for all programs within this class.
They also present a doubly--exponential algorithm to approximate ERTs of \textsc{Lrapp} programs.
While all \bnl programs lie within \textsc{Lrapp}, our proof rules yield \emph{exact} ERTs as \emph{expectations} (thus allowing for compositional proofs), in contrast to a single number for a fixed initial state. 

\paragraph{Bayesian networks and probabilistic programs.}
Bayesian networks are a --- if not the most --- popular probabilistic graphical model (cf.~\cite{bishop,DBLP:books/daglib/0023091} for details) for reasoning about conditional probabilities.
They are closely tied to (a fragment of) probabilistic programs.
For example, \textsc{Infer.NET}~\cite{infernet} performs inference by compiling a probabilistic program into a Bayesian network.
While correspondences between probabilistic graphical models, such as BNs, have been considered in the literature~\cite{DBLP:conf/nips/MinkaW08,DBLP:conf/uai/GoodmanMRBT08,DBLP:conf/icse/GordonHNR14},
we are not aware of a formal soudness proof for a translation from classical BNs into probabilistic programs including conditioning.

Conversely, some probabilistic programming languages such as \textsc{Church}~\cite{DBLP:conf/uai/GoodmanMRBT08}, \textsc{Stan}~\cite{hoffman2014no}, and 
\textsc{R2}~\cite{DBLP:conf/aaai/NoriHRS14}
directly perform inference on the program level using sampling techniques similar to those developed for Bayesian networks.
Our approach is a step towards understanding sampling based approaches formally: 
We obtain the exact expected runtime required to generate a sample that satisfies all observations.
This may ultimately be used to evaluate the quality of a plethora of proposed sampling methods for Bayesian inference (cf.~\cite{DBLP:books/daglib/0023091,DBLP:journals/mcm/YuanD06}).


\section{Probabilistic Programs}\label{sec:pprogs}

We briefly present the probabilistic programming language that is used throughout this paper.
Since our approach is embedded into weakest-precondition style approaches, we also recap calculi for reasoning about both expected outcomes and expected runtimes of probabilistic programs.

\subsection{The Probabilistic Guarded Command Language}

We enhance Dijkstra's Guarded Command Language \cite{DBLP:books/ph/Dijkstra76,DBLP:journals/cacm/Dijkstra75} by a probabilistic construct, namely a random assignment.
We thereby obtain a \emph{probabilistic Guarded Command Language} (for a closely related language, see~\cite{mciver}).
%

%
	Let $\Vars$ be a finite set of \emph{program variables}.
    Moreover, let $\Rats$ be the set of rational numbers, and let $\Dists{\Rats}$ be the set of discrete probability distributions over $\Rats$.
	The set of \emph{program states} is given by
		$\States = \setcomp{\sigma}{\sigma \colon \Vars \To \Rats}$.
	
	A \emph{distribution expression} $\mu$ is a function of type
	%
		$\mu \colon \States \To \Dists{\Rats}$
	%
	that takes a program state and maps it to a probability distribution on values from $\Rats$.
	We denote by $\mu_\sigma$ the distribution obtained from applying $\sigma$ to $\mu$.
	
	The probabilistic guarded command language ($\pgcl$) is given by the grammar
	\begin{align*}
		C ~~\longrightarrow~~ &\quad\SKIP \tag{effectless program} \\
		& \quad |~~ \DIVERGE \tag{endless loop} \\
		& \quad |~~ \PASSIGN{x}{\mu} \tag{random assignment} \\
		& \quad |~~ \COMPOSE{C}{C} \tag{sequential composition} \\
		& \quad |~~ \ITE{\varphi}{C}{C}\quad{} & \tag{conditional choice} \\
		& \quad |~~ \WHILEDO{\varphi}{C} & \tag{while loop} \\
		& \quad |~~ \REPEATUNTIL{C}{\varphi}~, & \tag{repeat--until loop}
	\end{align*}
	where $x \in \Vars$ is a program variable, $\mu$ is a distribution expression, and $\varphi$ is a Boolean expression guarding a choice or a loop.
    A $\pgcl$ program that contains neither $\DIVERGE$, nor $\mathtt{while}$, nor $\mathtt{repeat-until}$ loops is called loop--free.
	
	For $\sigma \in \States$ and an arithmetical expression $E$ over $\Vars$, we denote by $\sigma(E)$ the evaluation of $E$ in $\sigma$, i.e.\ the value that is obtained by evaluating $E$ after replacing any occurrence of any program variable $x$ in $E$ by the value $\sigma(x)$.
	Analogously, we denote by $\sigma(\varphi)$ the evaluation of a guard $\varphi$ in state $\sigma$ to either $\true$ or $\false$.
	Furthermore, for a value $v \in \Rats$ we write $\sigma\statesubst{x}{v}$ to indicate that we set program variable $x$ to value $v$ in program state $\sigma$, i.e.\footnote{We use $\lambda$--expressions to construct functions: Function $\lambda X \mydot \epsilon$ applied to an argument $\alpha$ evaluates to $\epsilon$ in which every occurrence of $X$ is replaced by $\alpha$.}
	\begin{align*}
		\sigma\statesubst{x}{v} ~=~ \lambda\, y\mydot \begin{cases}
			v, & \textnormal{if } y = x\\
			\sigma(y), & \textnormal{if } y \neq x~.
		\end{cases}
	\end{align*}
	
	We use the Iverson bracket notation
	to associate with each guard its according indicator function.
	Formally, the Iverson bracket $\iverson{\varphi}$ of $\varphi$ is thus defined as the function
		$\iverson{\varphi} = \lambda\, \sigma \mydot \sigma(\varphi)$.
%
%

Let us briefly go over the $\pgcl$ constructs and their effects:
$\SKIP$ does not alter the current program state.
The program $\DIVERGE$ is an infinite busy loop, thus takes infinite time to execute.
It returns no final state whatsoever.

The random assignment $\PASSIGN{x}{\mu}$ is (a) the only construct that can actually alter the program state and (b) the only construct that may introduce random behavior into the computation. It takes the current program state $\sigma$, then \emph{samples} a value $v$ from probability distribution $\mu_\sigma$, and then assigns $v$ to program variable $x$.
An example of a random assignment is
\begin{align*}
	\PASSIGN{x}{\sfrac{1}{2} \cdot \langle 5 \rangle + \sfrac{1}{6} \cdot \langle y + 1 \rangle + \sfrac{1}{3} \cdot \langle y - 1 \rangle}~.
\end{align*}
If the current program state is $\sigma$, then the program state is altered to either $\sigma\statesubst{x}{5}$ with probability $\sfrac{1}{2}$, or to $\sigma\statesubst{x}{\sigma(y) + 1}$ with probability $\sfrac{1}{6}$, or to $\sigma\statesubst{x}{\sigma(y) - 1}$ with probability $\sfrac{1}{3}$.
The remainder of the \pgcl constructs are standard programming language constructs.

%
%
%

In general, a $\pgcl$ program $C$ is executed on an input state and yields a \emph{probability distribution} over final states due to possibly occurring random assignments inside of $C$.
We denote that resulting distribution by $\semantics{C}{\sigma}$.
Strictly speaking, programs can yield \emph{subdistributions}, i.e.\ probability distributions whose total mass may be below 1.
The ``missing" probability mass represents the probability of nontermination.
Let us conclude our presentation of \pgcl \mbox{with an example}:
\begin{example}[Geometric Loop]
\label{ex:geo}
	Consider the program $C_\mathit{geo}$ given by
	\begin{align*}
		&\COMPOSE{\PASSIGN{x}{0}}{\quad\PASSIGN{c}{\sfrac{1}{2} \cdot \langle 0 \rangle + \sfrac{1}{2} \cdot \langle 1 \rangle}}; \\
		&\WHILE{c = 1} \COMPOSE{\PASSIGN{x}{x + 1}}{\PASSIGN{c}{\sfrac{1}{2} \cdot \langle 0 \rangle + \sfrac{1}{2} \cdot \langle 1 \rangle}}\}
	\end{align*}
	This program basically keeps flipping coins until it flips, say, heads ($c=0$).
	In $x$ it counts the number of unsuccessful trials.\footnote{This counting is also the reason that $C_{\mathit{geo}}$ is an example of a program that is not expressible in our \bnl language that we present later.}
	In effect, it almost surely sets $c$ to $0$ and moreover it establishes a geometric distribution on $x$.
	The resulting distribution is given by
	\begin{align*}
		\semantics{C_\mathit{geo}}{\sigma}(\tau) \eeq \sum_{n = 0}^{\omega} \iverson{\tau = \sigma\statesubst{c,x}{0,n}} \cdot \frac{1}{2^{n+1}}~.\tag*{$\triangle$}
	\end{align*}
\end{example}

\subsection{The Weakest Preexpectation Transformer}

We now present the weakest preexpectation transformer $\wpsymbol$  for reasoning about expected outcomes of executing probabilistic programs in the style of McIver \& Morgan~\cite{mciver}.
Given a random variable $f$ mapping program states to reals, it allows us to reason about the expected value of $f$ after executing a probabilistic program on a given state.

\subsubsection{Expectations.}
The random variables the $\wpsymbol$ transformer acts upon are taken from a set of so-called expectations, a term coined by McIver \& Morgan~\cite{mciver}:
\begin{definition}[Expectations]
\label{def:expectations}
	The set of expectations $\E$ is defined as
	\begin{align*}
		\E ~{}={}~ \left\{f ~\middle|~ f\colon \Sigma \To \Rposinf\right\} ~.
	\end{align*}
	
	We will use the notation $f\subst{x}{E}$ to indicate the \emph{replacement} of every occurrence of $x$ in $f$ by $E$.
	Since $x$, however, does not actually occur in $f$, we more formally define $f\subst{x}{E} = \lambda \sigma\mydot f(\sigma\statesubst{x}{\sigma(E)})$.

	A complete partial order $\leq$ on $\E$ is obtained by point--wise lifting the canonical total order on $\Rposinf$, i.e.\
	\begin{align*}
		f_1 \ppreceq f_2 \quad\text{iff}\quad \forall \sigma\in\Sigma\colon~~ f_1(\sigma) \lleq f_2(\sigma) ~.
	\end{align*}
	Its least element is given by $\lambda \sigma\mydot 0$ which we (by slight abuse of notation) also denote by $0$.
    Suprema are constructed pointwise, i.e.\ for $S \subseteq \E$ the supremum $\sup S$ is given by
		$\sup S = \lambda \sigma\mydot \sup_{f \in S}~f(\sigma)$. 
\end{definition}
\noindent
We allow expectations to map only to positive reals, so that we have a complete partial order readily available, which would not be the case for expectations of type $\States \To \Reals \cup \{-\infty,\, +\infty\}$.
A $\wpsymbol$ calculus that \emph{can} handle expectations of such type needs more technical machinery and cannot make use of this underlying natural partial order~\cite{lics17}.
Since we want to reason about ERTs which are by nature non--negative, we will not need such complicated calculi.

Notice that we use a slightly different definition of expectations than McIver \& Morgan~\cite{mciver}, as we allow for \emph{unbounded} expectations, whereas \cite{mciver} requires that expectations are \emph{bounded}.
This however would prevent us from capturing ERTs, which are potentially unbounded.

\subsubsection{Expectation Transformers.}

For reasoning about the expected value of $f \in \E$ after execution of $C$, we employ a backward--moving weakest preexpectation transformer
%
	$\wpsymbol \llbracket C \rrbracket \colon \E \To \E$,
%
that maps a \emph{postexpectation} $f \in \E$ to a \emph{preexpectation} $\wp{C}{f}  \in \E$, such that $\wp{C}{f}(\sigma)$ is the expected value of $f$ after executing $C$ on initial state $\sigma$. 
Formally, if $C$ executed on input $\sigma$ yields final distribution $\semantics{C}{\sigma}$, then
the \emph{weakest preexpectation} $\wp{C}{f}$ \emph{of} $C$ \emph{with respect to postexpectation} $f$ \emph{is given by}
\abovedisplayskip=0pt
\begin{align}
	\wp{C}{f}(\sigma) \eeq \Exp{\Sigma}{f}{\semantics{C}{\sigma}}~,\label{eq:defwp}
\end{align}
\normalsize
where we denote by $\Exp{A}{h}{\nu}$ the expected value of a random variable $h\colon A \To \Rposinf$ with respect to a probability distribution $\nu \colon A \To [0,\, 1]$.
Weakest preexpectations
can be defined in a very systematic way:
\begin{definition}[The $\textbf{\textsf{wp}}$ Transformer~\textnormal{\cite{mciver}}]
\label{def:wp}
	The weakest preexpectation transformer $\wpsymbol\colon \pgcl \To \E \To \E$ is defined by induction on all $\pgcl$ programs according to the rules in \autoref{table:wp}.
	We call $F_f(X) = \iverson{\neg \varphi} \cdot f + \iverson{\varphi} \cdot \wp{C}{X}$ the \emph{$\wpsymbol$--characteristic functional} of the loop $\WHILEDO{\varphi}{C}$ with respect to postexpectation $f$.
For a given $\wpsymbol$--characteristic function $F_f$, we call the sequence $\{F_f^n(0) \}_{n\in\Nats}$ the \emph{orbit of $F_f$}.
\end{definition}
\begin{table}[t]
\centering
\renewcommand{\arraystretch}{1.5}
\begin{tabular}{@{\hspace{1em}}l@{\hspace{2em}}l@{\hspace{1em}}}
	\hline\hline
	$\boldsymbol{C}$			& $\boldsymbol{\textbf{\textsf{wp}}\,\left \llbracket C\right\rrbracket  \left(f \right)}$\\
	\hline\hline
	$\SKIP$					& $f$ \\
	$\DIVERGE$				& $0$ \\
	$\PASSIGN{x}{\mu}$			& $\lambda \sigma\mydot \Exp{\Rats}{\big(\lambda v\mydot f\subst{x}{v}\big)}{\mu_\sigma}$ \\
	$\ITE{\varphi}{C_1}{C_2}$		& $\iverson{\varphi} \cdot \wp{C_1}{f} + \iverson{\neg \varphi} \cdot \wp{C_2}{f}$ \\
	$\COMPOSE{C_1}{C_2}$		& $\wp{C_1}{\wp{C_2}{f}}$ \\
	$\WHILEDO{\varphi}{C'}$		& $\lfp X\mydot \iverson{\neg \varphi} \cdot f + \iverson{\varphi} \cdot \wp{C'}{X}$\\
	$\REPEATUNTIL{C'}{\varphi}$	& $\wp{\COMPOSE{C'}{\WHILEDO{\neg \varphi}{C'}}}{f}$\\
	\hline
\end{tabular}
\vspace{1ex}
\caption{Rules for the $\wpsymbol$--transformer.}
\label{table:wp}
\end{table}
\noindent
Let us briefly go over the definitions in \autoref{table:wp}:
For $\SKIP$ 
the program state is not altered and thus the expected value of $f$ is just $f$.
The program $\DIVERGE$ will never yield any final state. 
The distribution over the final states yielded by $\DIVERGE$ is thus the null distribution $\nu_0(\tau) = 0$, that assigns probability 0 to \emph{every} state.
Consequently, the expected value of $f$ after execution of $\DIVERGE$ is given by $\Exp{\Sigma}{f}{\nu_0} = \sum_{\tau \in \Sigma}0 \cdot f(\tau) = 0$.

The rule for the random assignment $\PASSIGN{x}{\mu}$ is a bit more technical:
Let the current program state be $\sigma$.
Then for every value $v \in \Rats$, the random assignment assigns $v$ to $x$ with probability $\mu_\sigma(v)$, where $\sigma$ is the current program state.
The value of $f$ after assigning $v$ to $x$ is $f(\sigma\statesubst{x}{v}) = f\subst{x}{v}(\sigma)$ and therefore the expected value of $f$ after executing the random assignment is given by
\begin{align*}
	\sum_{v \in \Rats} \mu_\sigma(v) \cdot f\subst{x}{v}(\sigma) 
	\eeq \Exp{\Rats}{\big(\lambda v \mydot f\subst{x}{v}(\sigma) \big)}{\mu_\sigma}~.
\end{align*}
Expressed as a function of $\sigma$, the latter yields precisely the definition in \autoref{table:wp}.

The definition for the conditional choice $\ITE{\varphi}{C_1}{C_2}$ is not surprising:
if the current state satisfies $\varphi$, we have to opt for the weakest preexpectation of $C_1$, whereas if it does not satisfy $\varphi$, we have to choose the weakest preexpectation of $C_2$.
This yields precisely the definition in \autoref{table:wp}.

The definition for the sequential composition $\COMPOSE{C_1}{C_2}$ is also straightforward:
We first determine $\wp{C_2}{f}$ to obtain the expected value of $f$ after executing $C_2$.
Then we mentally prepend the program $C_2$ by $C_1$ and therefore determine the expected value of $\wp{C_2}{f}$ after executing $C_1$.
This gives the weakest preexpectation of $\COMPOSE{C_1}{C_2}$ with respect to postexpectation $f$.

The definition for the while loop makes use of a least fixed point, which is a standard construction in program semantics.
Intuitively, the fixed point iteration of the $\wpsymbol$--characteristic functional, given by $0,\, F_f(0),\, F_f^2(0),\, F_f^3(0),\, \ldots$, corresponds to the portion the expected value of $f$ after termination of the loop, that can be collected within at most $0,\, 1,\, 2,\, 3,\, \ldots$ loop guard evaluations.
The Kleene Fixed Point Theorem~\cite{DBLP:journals/ipl/LassezNS82} ensures that this iteration converges to the least fixed point, i.e.\
\begin{align*}
	\sup_{n \in \Nats} F_f^n(0) \eeq \lfp F_f \eeq \wp{\WHILEDO{\varphi}{C}}{f}~.
\end{align*}
By inspection of the above equality, we see that the least fixed point is exactly the construct that we want for while loops, since $\sup_{n \in \Nats} F_f^n(0)$ in principle allows the loop to run for any number of iterations, which captures precisely the semantics of a while loop, where the number of loop iterations is --- in contrast to e.g.\ \texttt{for} loops --- not determined upfront.

Finally, since $\REPEATUNTIL{C}{\varphi}$ is syntactic sugar for $\COMPOSE{C}{\WHILEDO{\varphi}{C}}$, we simply define the weakest preexpectation of the former as the weakest preexpectation of the latter.
Let us conclude our study of the effects of the $\wpsymbol$ transformer by means of an example: 
%
%
\begin{example}
\label{ex:wp-loop-free}
Consider the following program $C$:
\begin{align*}
	&\PASSIGN{c}{\sfrac{1}{3} \cdot \langle 0 \rangle + \sfrac{2}{3} \cdot \langle 1 \rangle}; \\
	&\ITE{c = 0}{\PASSIGN{x}{\sfrac{1}{2} \cdot \langle 5 \rangle + \sfrac{1}{6} \cdot \langle y + 1 \rangle + \sfrac{1}{3} \cdot \langle y - 1 \rangle}}{\SKIP}
\end{align*}
Say we wish to reason about the expected value of $x + c$ after execution of the above program.
We can do so by calculating $\wp{C}{x + c}$ using the rules in Table~\ref{table:wp}.
This calculation in the end yields
	$\wp{C}{x + c} \eeq \sfrac{3y + 26}{18}$
The expected valuation of the expression $x + c$ after executing $C$ is thus $\sfrac{3y + 26}{18}$.
Note that $x + c$ can be thought of as an expression that is evaluated in the final states after execution, whereas $\sfrac{3y + 26}{18}$ must be evaluated in the initial state before execution of $C$.
\hfill$\triangle$
\end{example}
\noindent

\subsubsection{Healthiness Conditions of \textsf{\textbf{wp}}.}
The $\wpsymbol$ transformer enjoys some useful properties, sometimes called \emph{healthiness conditions}~\cite{mciver}.
Two of these healthiness conditions that we will heavily make use of are given below:
\begin{theorem}[Healthiness Conditions for the $\textbf{\textsf{wp}}$ Transformer~\textnormal{\cite{mciver}}]
	\label{thm:basic-prop}
	For all $C \in \pgcl$, $f_1, f_2 \in \E$, and $a \in \Rpos$, the following holds:
	\begin{enumerate}
		\item \qquad \label{thm:basic-prop-linearity}
			$\wp{C}{a \cdot f_1 + f_2} \eeq a \cdot \wp{C}{f_1} + \wp{C}{f_2}$ \hfill(linearity)\\[-.75em]
		\item \qquad \label{thm:basic-prop-strictness}
			$\wp{C}{0} \eeq 0$ \hfill (strictness)
	\end{enumerate}
\end{theorem}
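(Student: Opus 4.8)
The plan is to prove both statements simultaneously by structural induction on the program $C$. A simultaneous induction is convenient because the loop case of linearity invokes strictness to control the base of the fixed-point iteration, and both properties share the same inductive skeleton. Throughout I would read off each defining equation from \autoref{table:wp} and feed in the appropriate induction hypotheses.

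The base cases are elementary. For $\SKIP$ linearity is the identity $a\cdot f_1 + f_2 = a\cdot f_1 + f_2$ and strictness reads $\wp{\SKIP}{0} = 0$; for $\DIVERGE$ both sides collapse to $0$. For the random assignment $\PASSIGN{x}{\mu}$, strictness holds since the expected value of the zero function is $0$, and linearity follows from linearity of the integral $\Exp{\Rats}{\cdot}{\mu_\sigma}$ together with the syntactic fact that $(a\cdot f_1 + f_2)\subst{x}{v} = a\cdot f_1\subst{x}{v} + f_2\subst{x}{v}$. The two easy compositional cases are equally routine: for $\ITE{\varphi}{C_1}{C_2}$ I would substitute the induction hypotheses for $C_1$ and $C_2$ into $\iverson{\varphi}\cdot\wp{C_1}{\cdot} + \iverson{\neg\varphi}\cdot\wp{C_2}{\cdot}$ and regroup, and for $\COMPOSE{C_1}{C_2}$ I would push the linear combination through $\wp{C_2}{\cdot}$ first, then apply the hypothesis for $C_1$ to the postexpectations $\wp{C_2}{f_1}$ and $\wp{C_2}{f_2}$. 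The repeat--until loop reduces directly to the sequential-composition and while cases via its definitional unfolding.

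The crux is the while loop $\WHILEDO{\varphi}{C'}$, for which I would use the Kleene characterization $\wp{\WHILEDO{\varphi}{C'}}{f} = \sup_{n}F_f^n(0)$ from Definition~\ref{def:wp}. The idea is to establish the linear relation on each finite approximant by an inner induction on $n$: the case $n=0$ is trivial since both sides are $0$, and the step uses the definition of the characteristic functional together with the structural hypothesis applied to the body $C'$, yielding $F_{a\cdot f_1 + f_2}^{n}(0) = a\cdot F_{f_1}^{n}(0) + F_{f_2}^{n}(0)$ for every $n$. Strictness for the loop then falls out as the special case $a = 0$, $f_1 = f_2 = 0$, where $F_0^{n}(0) = 0$ for all $n$ forces $\lfp F_0 = 0$.

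The hard part is passing this per-approximant identity to the limit, i.e. justifying
\[
\sup_n\bigl(a\cdot F_{f_1}^{n}(0) + F_{f_2}^{n}(0)\bigr) \eeq a\cdot\sup_n F_{f_1}^{n}(0) + \sup_n F_{f_2}^{n}(0)~.
\]
This interchange fails for arbitrary sequences; it holds here because each orbit is a monotonically increasing chain in $\E$ (as the characteristic functional is monotone and $0$ is the least element) and because $a \geq 0$, so scalar multiplication preserves the order and hence the pointwise supremum. For monotone chains in $\Rposinf$ the supremum of a pointwise sum is the sum of the suprema, and $\sup_n(a\cdot g_n) = a\cdot\sup_n g_n$ when $a\geq 0$; the restriction $a\in\Rpos$ is essential, since a negative scalar would reverse monotonicity and invalidate the argument. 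Combining the per-approximant identity with this interchange closes the loop case and thereby the whole induction.
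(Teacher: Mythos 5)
The paper does not prove this theorem at all --- it imports it from McIver \& Morgan~\cite{mciver} --- so there is no in-paper argument to compare against; your structural induction, with the inner induction on the Kleene approximants for the loop case and the sup/sum interchange justified by monotonicity of the orbits and $a \geq 0$, is the standard proof of these healthiness conditions and is correct (it closely mirrors the induction the paper does carry out for Lemma~\ref{lem:newrule-main} in Appendix~\ref{proof-lem:newrule-main}). The only tacit prerequisite is monotonicity of $\wpsymbol\llbracket C' \rrbracket$, needed so that each orbit is an increasing chain; this is itself a healthiness condition that would have to be folded into the same induction or established first, but the paper already presupposes it when invoking the Kleene fixed point theorem in Definition~\ref{def:wp}.
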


\subsection{The Expected Runtime Transformer}

\label{sec:ert}

While for deterministic programs we can speak of \emph{the} runtime of a program on a given input, the situation is different for probabilistic programs:
For those we instead have to speak of the \emph{expected runtime} (ERT).
Notice that the ERT can be finite (even constant) while the program may still admit infinite executions.
An example of this is the geometric loop in Example~\ref{ex:geo}.

A $\wpsymbol$--like transformer designed specifically for reasoning about ERTs is the $\ertsymbol$ transformer~\cite{DBLP:conf/esop/KaminskiKMO16}.
Like $\wpsymbol$, it is of type $\ertsymbol\llbracket C \rrbracket \colon \E \To \E$ and it can be shown that
%
	$\ert{C}{0}(\sigma)$
%
is precisely the \emph{expected runtime of executing $C$ on input $\sigma$}.
More generally, if $f\colon \Sigma \To \Rposinf$ measures the time that is needed after executing $C$ (thus $f$ is evaluated in the final states after termination of $C$), then $\ert{C}{f}(\sigma)$ is the expected time that is needed to run $C$ on input $\sigma$ and then let time $f$ pass.
For a more in--depth treatment of the $\ertsymbol$ transformer, see~\cite[Section 3]{DBLP:conf/esop/KaminskiKMO16}.
The transformer is defined as follows:
\begin{definition}[The $\boldertsymbol$ Transformer~\textnormal{\cite{DBLP:conf/esop/KaminskiKMO16}}]
\label{def:ert}
	The expected runtime transformer $\ertsymbol\colon \pgcl \To \E \To \E$ is defined by induction on all $\pgcl$ programs according to the rules given in \autoref{table:ert}.
	We call $F_f(X) = 1 + \iverson{\neg \varphi} \cdot f + \iverson{\varphi} \cdot \wp{C}{X}$ the \emph{$\ertsymbol$--characteristic functional} of the loop $\WHILEDO{\varphi}{C}$ with respect to postexpectation $f$.
As with $\wpsymbol$, for a given $\ertsymbol$--characteristic function $F_f$, we call the sequence $\{F_f^n(0) \}_{n\in\Nats}$ the \emph{orbit of $F_f$}.
Notice that 
\begin{align*}
	\ert{\WHILEDO{\varphi}{C}}{f} \eeq \lfp F_f \eeq \sup~\{F_f^n(0)\}_{n\in\Nats}~.
\end{align*}
\normalsize
\end{definition}
\begin{table}[t]
\centering
\renewcommand{\arraystretch}{1.5}
\begin{tabular}{@{\hspace{1em}}l@{\hspace{2em}}l@{\hspace{1em}}}
	\hline\hline
	$\boldsymbol{C}$			& $\boldsymbol{\textbf{\textsf{ert}}\,\left \llbracket C\right\rrbracket  \left(f \right)}$\\
	\hline\hline
	$\SKIP$					& $1 + f$ \\
	$\DIVERGE$				& $\infty$ \\
	$\PASSIGN{x}{\mu}$			& $1 + \lambda \sigma\mydot \Exp{\Rats}{\big(\lambda v\mydot f\subst{x}{v}\big)}{\mu_\sigma}$ \\
	$\ITE{\varphi}{C_1}{C_2}$		& $1 + \iverson{\varphi} \cdot \ert{C_1}{f} + \iverson{\neg \varphi} \cdot \ert{C_2}{f}$ \\
	$\COMPOSE{C_1}{C_2}$		& $\ert{C_1}{\big(\ert{C_2}{f}\big)}$ \\
	$\WHILEDO{\varphi}{C'}$			& $\lfp X\mydot 1 + \iverson{\neg \varphi} \cdot f + \iverson{\varphi} \cdot \ert{C'}{X}$\\
	$\REPEATUNTIL{C'}{\varphi}$	& $\ert{\COMPOSE{C'}{\WHILEDO{\neg\varphi}{C'}}}{f}$\\
	\hline
\end{tabular}
\vspace{1ex}
\caption{Rules for the $\ertsymbol$--transformer.}
\label{table:ert}
\end{table}
\noindent
The rules for $\ertsymbol$ are very similar to the rules for $\wpsymbol$.
The runtime model we assume is that $\SKIP$ statements, random assignments, and guard evaluations for both conditional choice and while loops cost one unit of time.
This runtime model can easily be adopted to count only the number of loop iterations or only the number of random assignments, etc.
We conclude with a strong connection between the $\wpsymbol$ and the $\ertsymbol$ transformer, that is crucial in our proofs:
\begin{theorem}[Decomposition of $\boldertsymbol$ \textnormal{\cite{DBLP:conf/lics/OlmedoKKM16}}]
\label{thm:ert-wp}
       For any $C \in \pgcl$ and $f \in \E$,
       \begin{align*}
              \ert{C}{f} ~{}={}~ \ert{C}{0} + \wp{C}{f}~.
       \end{align*}
\end{theorem}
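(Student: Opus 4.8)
The plan is to prove the identity by structural induction on $C$, stating the induction hypothesis for \emph{all} postexpectations $f \in \E$ at once (this universal quantification is what makes the compositional cases go through). For the non-loop constructs the claim reduces to unfolding the defining rules in \autoref{table:wp} and \autoref{table:ert} and checking that $\ert{C}{f}$ exceeds $\wp{C}{f}$ by exactly the $f$-independent summand $\ert{C}{0}$; for instance $\ert{\SKIP}{f} = 1 + f$ with $\ert{\SKIP}{0}=1$ and $\wp{\SKIP}{f}=f$, and the cases $\DIVERGE$ and $\PASSIGN{x}{\mu}$ are equally direct. For the conditional choice I would substitute the hypotheses for the two branches into the $\ertsymbol$-rule and regroup the $f$-independent terms. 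The sequential case $\COMPOSE{C_1}{C_2}$ is the first case that really uses linearity of $\wpsymbol$ from \autoref{thm:basic-prop}: starting from $\ert{C_1}{\ert{C_2}{f}}$ I apply the hypothesis for $C_2$, then the hypothesis for $C_1$, then split $\wp{C_1}{\ert{C_2}{0} + \wp{C_2}{f}}$ into $\wp{C_1}{\ert{C_2}{0}} + \wp{C_1}{\wp{C_2}{f}}$; two further applications of the hypothesis for $C_1$ (to the postexpectations $\ert{C_2}{0}+\wp{C_2}{f}$ and $\ert{C_2}{0}$) reassemble the pieces into $\ert{\COMPOSE{C_1}{C_2}}{0}$ and $\wp{\COMPOSE{C_1}{C_2}}{f}$.

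The only genuinely delicate case is the while loop $\WHILEDO{\varphi}{C'}$, and this is where I expect the main work to lie, since both sides are least fixed points and cannot be compared purely equationally. Writing $e_n$, $\hat e_n$ and $w_n$ for the $n$th elements of the orbits (all iterated from $0$) of, respectively, the $\ertsymbol$-characteristic functional for $f$, the $\ertsymbol$-characteristic functional for $0$, and the $\wpsymbol$-characteristic functional for $f$, I would first establish the pointwise identity
\begin{align*}
e_n \eeq \hat e_n + w_n
\end{align*}
by an inner induction on $n$. The base case is $0 = 0 + 0$. In the step I unfold all three functionals once; matching the two sides reduces the goal to the auxiliary identity $\ert{C'}{e_n} = \ert{C'}{\hat e_n} + \wp{C'}{w_n}$, which I discharge using the \emph{outer} (structural) hypothesis for the loop body $C'$ twice (namely $\ert{C'}{e_n} = \ert{C'}{0} + \wp{C'}{e_n}$ and $\ert{C'}{\hat e_n} = \ert{C'}{0} + \wp{C'}{\hat e_n}$), together with the inner hypothesis $e_n = \hat e_n + w_n$ and one more use of linearity to split $\wp{C'}{\hat e_n + w_n}$.

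I then pass to the supremum. By the Kleene characterization recalled in \autoref{def:wp} and \autoref{def:ert} we have $\ert{\WHILEDO{\varphi}{C'}}{f} = \sup_n e_n$, and similarly for the two other fixed points. Since all three orbits are monotonically increasing chains in the complete partial order on $\E$ (the characteristic functionals are monotone and iterated from the least element), the supremum of the sum equals the sum of the suprema, so $\sup_n e_n = \sup_n \hat e_n + \sup_n w_n$, which is exactly $\ert{\WHILEDO{\varphi}{C'}}{f} = \ert{\WHILEDO{\varphi}{C'}}{0} + \wp{\WHILEDO{\varphi}{C'}}{f}$. The $\REPEATUNTIL{C'}{\varphi}$ case then follows immediately, as both transformers define it via $\COMPOSE{C'}{\WHILEDO{\neg\varphi}{C'}}$, already covered by the previous cases.

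I expect the loop case to be the main obstacle, and within it the interchange of supremum and sum. This relies on the monotonicity of the orbits: one has to verify that for monotone increasing chains $\{\hat e_n\}$ and $\{w_n\}$ valued in $\Rposinf$ the identity $\sup_n(\hat e_n + w_n) = \sup_n \hat e_n + \sup_n w_n$ holds pointwise, with appropriate care taken for $+\infty$ values (unproblematic, since addition is continuous along increasing sequences in $\Rposinf$). A secondary point demanding consistent care is the repeated, legitimate use of linearity of $\wpsymbol$ to separate the $f$-dependent and $f$-independent parts of the intermediate expectations.
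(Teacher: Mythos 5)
The paper does not prove Theorem~\ref{thm:ert-wp} itself; it imports it from the cited reference \cite{DBLP:conf/lics/OlmedoKKM16}, so there is no in-paper proof to compare against. Your argument is correct and is essentially the standard one: the universally quantified structural induction handles sequencing via linearity of $\wpsymbol$, the inner induction establishing $e_n = \hat{e}_n + w_n$ for the three orbits goes through exactly as you describe (the key step $\ert{C'}{e_n} = \ert{C'}{\hat{e}_n} + \wp{C'}{w_n}$ follows from two uses of the outer hypothesis plus linearity), and the final interchange of supremum and sum is legitimate because both orbits are increasing chains in $\E$ and addition on $\Rposinf$ is continuous along increasing sequences. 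No gaps.
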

%
%



\section{Expected Runtimes of i.i.d.\ Loops}
\label{sec:ert-rules}

\newcommand{\kGuard}{\varphi} 
\newcommand{\kGuardB}{\iverson{\kGuard}}
\newcommand{\nkGuardB}{\iverson{\neg \kGuard}}
\newcommand{\exampleLoop}{C_{\mathit{flip}}}
\newcommand{\rejectionProg}{C_{\mathit{circle}}}
\newcommand{\rejectionProgBody}{C_{\mathit{body}}}
\noindent
We derive a proof rule that allows to determine \emph{exact ERTs of independent and identically distributed loops} (or \emph{i.i.d.\ loops} for short). 
Intuitively, a loop is i.i.d.\ if the distributions of states that are reached at the end of different loop iterations are equal.  
This is the case whenever there is no data flow across different iterations. 
In the non--probabilistic case, such loops either terminate after exactly one iteration or never. 
This is different for probabilistic programs.

As a running example, consider the program $\rejectionProg$ in Figure~\ref{fig:circle-loop}.
%
%
%
%
%
%
\begin{figure}[t]
\begin{center}
\begin{minipage}{0.40\textwidth}
       \begin{align*}
              & \WHILE{(x-5)^2 + (y-5)^2 \geq 25} \\
              & \qquad \PASSIGN{x}{\mathtt{Unif}[0\ldots 10]}; \\
              & \qquad \PASSIGN{y}{\mathtt{Unif}[0\ldots 10]} \\ 
              & \}
       \end{align*} 
\end{minipage}
\qquad
\begin{minipage}{0.4\textwidth}
\begin{center}
\scalebox{.75}{
\begin{tikzpicture}[scale=.3]

\draw[help lines, use as bounding box, white] (-1, -1) grid (11, 11);

\draw[->, thick] (-0.3, 0) -- (11, 0);
\draw[->, thick] (0, -0.3) -- (0, 11);

\foreach \x in {1, 2, ..., 10} {
	\draw (\x, -0.3) -- (\x, 0.3);
}
\node at (5, -0.7) {\scriptsize $5$};
\node at (10, -0.7) {\scriptsize $10$};
\foreach \y in {1, 2, ..., 10} {
	\draw (-0.3, \y) -- (0.3, \y);
}
\node at (-0.8, 5) {\scriptsize $5$};
\node at (-0.8, 10) {\scriptsize $10$};

\draw[thick, fill=shadegray] (0,0) -- (10,0) -- (10,10) -- (0,10) -- (0,0);
\draw[thick, fill=white] (5,5) circle (5cm);

\node at (0.9, 0.9) {\scriptsize $\times$};
\node at (0.7, 1.8) {\scriptsize $\times$};

\node at (6, 7) {\scriptsize $\times$};
\node at (2.3, 3.5) {\scriptsize $\times$};
\node at (7.1, 2.1) {\scriptsize $\times$};
\node at (2, 7.5) {\scriptsize $\times$};
\node at (6.4, 8.3) {\scriptsize $\times$};
\node at (2, 7.5) {\scriptsize $\times$};
\node at (8.4, 4.5) {\scriptsize $\times$};
\node at (5.1, 4.9) {\scriptsize $\times$};
\node at (8.9, 3.5) {\scriptsize $\times$};

\node at (9.5, 9.1) {\scriptsize $\times$};
\node at (9.2, 9.3) {\scriptsize $\times$};

\node at (0.5, 9.3) {\scriptsize $\times$};

\node at (9, 0.8) {\scriptsize $\times$};
\node at (2, 7.5) {\scriptsize $\times$};
\node at (8.2, 0.3) {\scriptsize $\times$};
\node at (9.2, 1.1) {\scriptsize $\times$};
\end{tikzpicture}
}
\end{center}
\end{minipage} 
\end{center}
\caption{An i.i.d.\ loop sampling a point within a circle uniformly at random using rejection sampling. 
The picture on the right--hand side visualizes the procedure: 
In each iteration a point ($\times$) is sampled.
If we obtain a point within the white area inside the square, we terminate. 
Otherwise, i.e.\ if we obtain a point within the gray area outside the circle, we resample.}
\label{fig:circle-loop}
\end{figure}
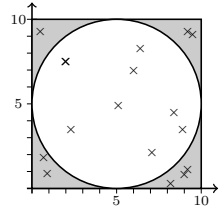
$\rejectionProg$ samples a point within a circle with center $(5,5)$ and radius $r=5$ uniformly at random using rejection sampling.
In each iteration, it samples a point $(x,y) \in [0, \ldots, 10]^2$ within the square (with some fixed precision). 
The loop ensures that we resample if a sample is not located within the circle.
Our proof rule will allow us to systematically determine the ERT of this loop, i.e. the average amount of time required until a single point within the circle is sampled.

Towards obtaining such a proof rule, we first present a syntactical notion of the i.i.d.\ property.
It relies on expectations that are not affected by a $\pgcl$ program:
%
%
\begin{definition}
  Let $C \in \pgcl$ and $f \in \E$.
  Moreover, let $\VarsAssign{C}$ denote the set of all variables 
  that occur on the left--hand side of an assignment in $C$, and let $\VarsInExp{f}$ be the set of all variables that ``occur in $f$", i.e.\ formally
  \begin{align*}
  	x \in \VarsInExp{f} \qqiff \exists\, \sigma~ \exists\, v, v' \colon \quad f(\sigma\statesubst{x}{v}) \nneq f(\sigma\statesubst{x}{v'})~.
  \end{align*}
  Then $f$ is \emph{unaffected} by $C$, denoted $\RelNewRule{f}{C}$, iff $\VarsInExp{f} \cap \VarsAssign{C} = \emptyset$. 
\end{definition}
\noindent
%
%
%
We are interested in expectations that are unaffected by $\pgcl$ programs because of a simple, yet useful observation:
If $\RelNewRule{g}{C}$, then \emph{$g$ can be treated like a constant} w.r.t.\  the transformer $\wpsymbol$ (i.e.\ like the $a$ in Theorem~\ref{thm:basic-prop}~(\ref{thm:basic-prop-linearity})).
For our running  example $\rejectionProg$ (see Figure~\ref{fig:circle-loop}), the expectation $f = \wp{\rejectionProgBody}{\iverson{x + y \leq 10}}$ is unaffected by the loop body $\rejectionProgBody$ of $\rejectionProg$. Consequently, we have $\wp{\rejectionProgBody}{f} = f \cdot \wp{\rejectionProgBody}{1} = f$.
In general, we obtain the following property:
%
%
%
\begin{lemma}[Scaling by Unaffected Expectations]
      \label{lem:newrule-main}
      Let $C\in \pgcl$ and $f,g \in \E$.
      Then $\RelNewRule{g}{C}$ implies $\wp{C}{g \cdot f} =  g \cdot \wp{C}{f}$.
\end{lemma}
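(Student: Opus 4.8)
The plan is to prove the identity by structural induction on the program $C$, following the inductive definition of $\wpsymbol$ in \autoref{table:wp}, with the statement read as universally quantified over the postexpectation $f \in \E$ so that the induction hypothesis is available for \emph{every} postexpectation. Throughout I adopt the convention $0 \cdot \infty = 0$ on $\Rposinf$. The cases $\SKIP$ and $\DIVERGE$ are immediate: for $\SKIP$ both sides equal $g \cdot f$, and for $\DIVERGE$ both sides equal $0$. The decisive base case is the random assignment $\PASSIGN{x}{\mu}$, where $\VarsAssign{\PASSIGN{x}{\mu}} = \{x\}$. Here $\RelNewRule{g}{C}$ gives $x \notin \VarsInExp{g}$, which by the defining property of $\VarsInExp{\cdot}$ means $g(\sigma\statesubst{x}{v}) = g(\sigma)$ for every $\sigma$ and every $v$; that is, $g$ does not depend on the $x$-coordinate. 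Consequently $(g \cdot f)\subst{x}{v}(\sigma) = g(\sigma) \cdot f\subst{x}{v}(\sigma)$, and since $g(\sigma)$ is constant in the integration variable $v$, I can pull it out of the expectation over $\mu_\sigma$ by linearity of the integral, obtaining exactly $g(\sigma) \cdot \wp{\PASSIGN{x}{\mu}}{f}(\sigma)$.

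For the compositional cases I first note that $\VarsAssign{\cdot}$ is monotone with respect to subprograms, i.e.\ every subprogram $C'$ of $C$ satisfies $\VarsAssign{C'} \subseteq \VarsAssign{C}$; hence $\RelNewRule{g}{C}$ entails $\RelNewRule{g}{C'}$ for each subprogram, so the induction hypothesis applies to the subprograms. The conditional choice $\ITE{\varphi}{C_1}{C_2}$ then follows by applying the hypothesis to $C_1$ and $C_2$ and factoring $g$ out of the sum $\iverson{\varphi} \cdot \wp{C_1}{f} + \iverson{\neg\varphi} \cdot \wp{C_2}{f}$. The sequential composition $\COMPOSE{C_1}{C_2}$ follows by a double application of the hypothesis: first to $C_2$ with postexpectation $f$ to rewrite $\wp{C_2}{g \cdot f}$ as $g \cdot \wp{C_2}{f}$, and then to $C_1$ with the postexpectation $\wp{C_2}{f} \in \E$ to pull $g$ past the outer transformer $\wp{C_1}{\cdot}$ --- this is precisely where the quantification over all postexpectations is used. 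The $\REPEATUNTIL{C'}{\varphi}$ case is immediate, since it is defined as the weakest preexpectation of $\COMPOSE{C'}{\WHILEDO{\neg\varphi}{C'}}$ and thus reduces to the sequential-composition and loop cases.

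The main obstacle is the while loop $\WHILEDO{\varphi}{C'}$, whose weakest preexpectation is $\lfp F_h$ for the characteristic functional $F_h(X) = \iverson{\neg\varphi} \cdot h + \iverson{\varphi} \cdot \wp{C'}{X}$, so that I must show $\lfp F_{g \cdot f} = g \cdot \lfp F_f$. Using the Kleene characterization $\lfp F_h = \sup_{n} F_h^n(0)$, the strategy is to prove the orbit-wise identity $F_{g \cdot f}^n(0) = g \cdot F_f^n(0)$ by an inner induction on $n$. The base case $n = 0$ reads $0 = g \cdot 0$. For the step I unfold $F_{g \cdot f}^{n+1}(0) = \iverson{\neg\varphi} \cdot (g \cdot f) + \iverson{\varphi} \cdot \wp{C'}{F_{g \cdot f}^{n}(0)}$, apply the inner hypothesis inside the argument of $\wp{C'}{\cdot}$, then invoke the \emph{structural} hypothesis on the body $C'$ (legitimate because $\RelNewRule{g}{C'}$) with postexpectation $F_f^n(0)$ to get $\wp{C'}{g \cdot F_f^n(0)} = g \cdot \wp{C'}{F_f^n(0)}$, and finally factor $g$ out of the whole expression to recognize $g \cdot F_f^{n+1}(0)$.

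It then remains to commute the scalar factor $g$ with the supremum, i.e.\ to argue $\sup_n \bigl(g \cdot F_f^n(0)\bigr) = g \cdot \sup_n F_f^n(0)$. This is the genuinely delicate point, and it rests on two facts: the orbit $\{F_f^n(0)\}_n$ is an ascending chain (since $F_f$ is monotone and starts from the least element $0$), so pointwise at each $\sigma$ the sequence $F_f^n(0)(\sigma)$ is non-decreasing in $\Rposinf$; and multiplication by the fixed non-negative (possibly infinite) scalar $g(\sigma)$ is Scott-continuous on $\Rposinf$, hence commutes with suprema of non-decreasing sequences --- the only situations requiring the convention $0 \cdot \infty = 0$ are when $g(\sigma) = 0$ or the supremum is infinite, and both are checked directly. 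Combining the orbit-wise identity with this commutation gives $\lfp F_{g \cdot f} = g \cdot \lfp F_f$, which closes the loop case and completes the induction.
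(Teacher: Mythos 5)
Your proof is correct and follows essentially the same route as the paper's: structural induction on $C$ with the random assignment as the decisive base case, a double application of the induction hypothesis for sequential composition, and an inner induction on $n$ establishing the orbit-wise identity $F_{g \cdot f}^n(0) = g \cdot F_f^n(0)$ for the while loop. If anything, you are more careful than the paper at two points it leaves implicit --- that $\RelNewRule{g}{C}$ propagates to subprograms via $\VarsAssign{C'} \subseteq \VarsAssign{C}$, and that multiplication by the fixed scalar $g(\sigma)$ commutes with the supremum of the ascending orbit under the convention $0 \cdot \infty = 0$.
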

\begin{proof}
By induction on the structure of $C$. See Appendix~\ref{proof-lem:newrule-main}. \qed
\end{proof}
\noindent
%
%
%
We develop a proof rule that only requires
that both the probability of the guard evaluating to true after one iteration of the loop body (i.e.\ $\wp{C}{\iverson{\kGuard}}$) as well as the expected value of $\iverson{\neg \varphi} \cdot f$ after one iteration (i.e.\ $\wp{C}{\iverson{\neg \kGuard} \cdot f}$) are unaffected by the loop body.
We thus define the following:
%
%
%
\begin{definition}[$\boldsymbol{f}$--Independent and Identically Distributed Loops]\label{def:f-iid}
       Let $C \in \pgcl$, $\kGuard$ be a guard, and $f \in \E$. 
       Then we call the loop $\WHILEDO{\kGuard}{C}$
       \emph{$f$--independent and identically distributed} (or \emph{$f$--i.i.d.} for short), if both
       \begin{align*}
              \RelNewRule{\wp{C}{\kGuardB}}{C} \qquad \text{and} \qquad
 \RelNewRule{\wp{C}{\nkGuardB \cdot f}}{C}~.
       \end{align*}
\end{definition}
\noindent
%
%
%
%
%
%
%
%
%
\begin{example}
        Our example program $\rejectionProg$ (see Figure~\ref{fig:circle-loop}) is $f$--i.i.d. for all $f \in \E$. 
        This is due to the fact that
       \begin{equation*}
              \wp{\rejectionProgBody}{\iverson{(x-5)^2 + (y-5)^2 \geq 25}}
              ~{}={}~ \RelNewRule{\frac{48}{121}}{\rejectionProgBody}
              \tag{by Table~\ref{table:wp}}
       \end{equation*}
       and (again for some fixed precision $p \in \Nats \setminus \{0\}$)
       \begin{align*}
             &\wp{\rejectionProgBody}{\iverson{(x-5)^2 + (y-5)^2 > 25} \cdot f} \\
             &{}={}~ \RelNewRule{\frac{1}{121} \cdot \sum\limits_{i=0}^{10p}\sum\limits_{j=0}^{10p} \iverson{(\nicefrac i p -5)^2 + (\nicefrac j p -5)^2 > 25} \cdot f[x/(\nicefrac i p),y/(\nicefrac j p)]}{\rejectionProgBody}~.
             \tag*{$\triangle$}
       \end{align*}
\end{example}
\noindent
Our main technical Lemma is that we can express the orbit of the $\wpsymbol$--characteristic function as a partial geometric series:
\begin{lemma}[Orbits of $\boldsymbol{f}$--i.i.d.\ Loops]\label{thm:newrule_wp_finite_approximations}
        Let $C \in \pgcl$,  $\kGuard$ be a guard, $f \in \E$ such that the loop  $\WHILEDO{\kGuard}{C}$ is $f$--i.i.d, and let $F_f$ be the corresponding $\wpsymbol$--characteristic function. 
        Then for all  $n \in \Nats \setminus \{ 0 \}$, it holds that 
         \begin{align*}
              &F_f^n(0)
              \eeq
              \iverson{\kGuard} \cdot \wp{C}{\iverson{\neg \kGuard} \cdot f}
               \cdot \left. \sum\limits_{i=0}^{n-2} \middle( \wp{C}{\iverson{\kGuard}}^i \right) ~{}+{}~ \iverson{\neg \kGuard} \cdot f~.
       \end{align*}
\end{lemma}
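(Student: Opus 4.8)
The plan is to prove the identity by induction on $n$, using the $f$--i.i.d.\ hypothesis precisely to license treating the two expectations $\wp{C}{\kGuardB}$ and $\wp{C}{\nkGuardB \cdot f}$ as constants under the transformer $\wp{C}{\cdot}$. To lighten the bookkeeping I would abbreviate $h \eeq \wp{C}{\kGuardB}$ and $g \eeq \wp{C}{\nkGuardB \cdot f}$, so that $f$--i.i.d.\ reads exactly $\RelNewRule{h}{C}$ and $\RelNewRule{g}{C}$, and the goal becomes $F_f^n(0) \eeq \kGuardB \cdot g \cdot \sum_{i=0}^{n-2} h^i + \nkGuardB \cdot f$. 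For the base case $n = 1$ I would unfold $F_f(0) \eeq \nkGuardB \cdot f + \kGuardB \cdot \wp{C}{0}$ and invoke strictness of $\wpsymbol$ (Theorem~\ref{thm:basic-prop}(\ref{thm:basic-prop-strictness})) to annihilate the second summand, leaving $\nkGuardB \cdot f$; this matches the right--hand side, whose sum $\sum_{i=0}^{-1}$ is empty.

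For the inductive step I would write $F_f^{n+1}(0) \eeq F_f(F_f^n(0)) \eeq \nkGuardB \cdot f + \kGuardB \cdot \wp{C}{F_f^n(0)}$ and substitute the induction hypothesis in the form $F_f^n(0) \eeq \kGuardB \cdot K + \nkGuardB \cdot f$, where $K \eeq g \cdot \sum_{i=0}^{n-2} h^i$. Pushing $\wp{C}{\cdot}$ through by linearity (Theorem~\ref{thm:basic-prop}(\ref{thm:basic-prop-linearity})) splits it into $\wp{C}{\kGuardB \cdot K} + \wp{C}{\nkGuardB \cdot f}$; the second term is $g$ by definition, and on the first term I would apply the Scaling Lemma~\ref{lem:newrule-main} to pull the unaffected factor $K$ out, obtaining $K \cdot \wp{C}{\kGuardB} \eeq K \cdot h$. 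Reinserting $K$ and reindexing then gives $K \cdot h + g \eeq g \cdot \sum_{i=0}^{n-2} h^{i+1} + g \eeq g \cdot \sum_{i=0}^{n-1} h^i$, and multiplying by $\kGuardB$ and re-adding $\nkGuardB \cdot f$ reproduces the claimed formula with $n$ replaced by $n+1$.

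The one step that requires genuine care — and where the $f$--i.i.d.\ assumption actually earns its keep — is justifying the use of Lemma~\ref{lem:newrule-main}, which demands $\RelNewRule{K}{C}$ rather than merely $\RelNewRule{g}{C}$ and $\RelNewRule{h}{C}$. This needs the (easy but not vacuous) observation that unaffectedness is closed under the operations that build $K$ from $g$ and $h$: since a variable lying outside both $\VarsInExp{g}$ and $\VarsInExp{h}$ changes neither factor, and hence changes none of their powers, products, or finite sums, we get $\VarsInExp{g \cdot \sum_{i=0}^{n-2} h^i} \subseteq \VarsInExp{g} \cup \VarsInExp{h}$, and both of these sets are disjoint from $\VarsAssign{C}$ by the $f$--i.i.d.\ hypothesis. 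I would state this closure property as a one--line auxiliary remark before the induction, as it is the only nontrivial ingredient; the remainder is a mechanical unfolding of the characteristic functional combined with linearity and strictness of $\wpsymbol$.
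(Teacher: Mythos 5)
Your proof is correct and follows essentially the same route as the paper's: induction on $n$, strictness of $\wpsymbol$ for the base case, and linearity plus the Scaling Lemma~\ref{lem:newrule-main} for the inductive step (the paper merely distributes $\wpsymbol$ over the finite sum first and pulls out $\wp{C}{\kGuardB}^i$ and $\wp{C}{\nkGuardB\cdot f}$ termwise, rather than extracting the whole coefficient $K$ in one application). Your explicit remark that unaffectedness is closed under products, powers, and finite sums is a welcome addition --- the paper tacitly relies on the same fact when it treats $\wp{C}{\kGuardB}^i$ as unaffected.
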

\begin{proof}
 By use of \autoref{lem:newrule-main}, see Appendix~\ref{lem:newrule_wp_finite_approximations}.
\end{proof}
\noindent
Using this precise description of the $\wpsymbol$ orbits, we now establish proof rules for $f$--i.i.d.\ loops, first for $\wpsymbol$ and later for $\ertsymbol$. 
\begin{theorem}[Weakest Preexpectations of $\boldsymbol{f}$--i.i.d.\ Loops] \label{thm:newrule-wpsemantics}
       Let $C \in \pgcl$, $\kGuard$ be a guard, and $f \in \E$.
       If the loop $\WHILEDO{\kGuard}{C}$ is $f$--i.i.d., then
                       %
                       %
                       \begin{align*}
                              \wp{\WHILEDO{\kGuard}{C}}{f} \eeq 
                              \iverson{\kGuard} \cdot 
                              \frac{\wp{C}{\iverson{\neg \kGuard} \cdot f}}{1-\wp{C}{\iverson{\kGuard}}}
                              + \iverson{\neg \kGuard} \cdot f ~,
                       \end{align*}
	where we define $\frac{0}{0} \coloneqq 0$.
\end{theorem}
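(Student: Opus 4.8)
The plan is to compute the least fixed point defining $\wp{\WHILEDO{\kGuard}{C}}{f}$ directly from its Kleene approximation together with the closed form for the orbit already established. Recall from Definition~\ref{def:wp} that
\begin{align*}
\wp{\WHILEDO{\kGuard}{C}}{f} \eeq \lfp F_f \eeq \sup_{n \in \Nats} F_f^n(0)~,
\end{align*}
so it suffices to evaluate this supremum. Since $F_f^0(0) = 0$ is the bottom element and the iteration is monotone, the supremum over $n \in \Nats$ coincides with the supremum over $n \geq 1$, and for those $n$ I would substitute the partial--geometric--series form of the orbit supplied by Lemma~\ref{thm:newrule_wp_finite_approximations}.

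First I would abbreviate $q \coloneqq \wp{C}{\kGuardB}$ and $h \coloneqq \wp{C}{\nkGuardB \cdot f}$, so that for $n \geq 1$ the orbit reads $F_f^n(0) = \kGuardB \cdot h \cdot \sum_{i=0}^{n-2} q^i + \nkGuardB \cdot f$. Because suprema in $\E$ are taken pointwise (Definition~\ref{def:expectations}) and every quantity appearing is non--negative, I can fix an arbitrary state $\sigma$ and compute the supremum there. The summand $\nkGuardB \cdot f$ does not depend on $n$, and the factor $\kGuardB(\sigma)\cdot h(\sigma)$ is a non--negative constant with respect to $n$; only the partial sum $\sum_{i=0}^{n-2} q(\sigma)^i$ varies with $n$, and it is monotonically non--decreasing. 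Hence the supremum passes through the constant factor (using $0 \cdot \infty = 0$ in $\Rposinf$), giving
\begin{align*}
\sup_{n \geq 1} F_f^n(0)(\sigma) \eeq \kGuardB(\sigma) \cdot h(\sigma) \cdot \sum_{i=0}^{\infty} q(\sigma)^i + \nkGuardB(\sigma) \cdot f(\sigma)~.
\end{align*}

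The remaining step is to evaluate the geometric series $\sum_{i=0}^{\infty} q(\sigma)^i$ pointwise, which is where the case distinction --- and the only genuine subtlety --- arises. If $q(\sigma) < 1$, the series converges to $\nicefrac{1}{(1 - q(\sigma))}$, matching the claimed closed form exactly. If $q(\sigma) = 1$, the series diverges to $\infty$, and I would split on the value of $h(\sigma)$: when $h(\sigma) > 0$ the product is $\infty$, agreeing with $\nicefrac{h(\sigma)}{0} = \infty$, and when $h(\sigma) = 0$ the product is $0 \cdot \infty = 0$, which is precisely what the stipulated convention $\nicefrac{0}{0} \coloneqq 0$ delivers. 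Re--expressing $q$ and $h$ then recovers the stated formula in all cases. The main obstacle is therefore not any hard estimate but the careful pointwise bookkeeping of the boundary case $q(\sigma) = 1$, ensuring the divergent series is correctly reconciled with the $\nicefrac{0}{0} \coloneqq 0$ convention and the standard arithmetic of $\Rposinf$.
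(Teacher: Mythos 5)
Your overall route is the same as the paper's: express the loop's weakest preexpectation as $\sup_n F_f^n(0)$, substitute the partial--geometric--series orbit from Lemma~\ref{thm:newrule_wp_finite_approximations}, push the supremum through pointwise, and then evaluate the resulting geometric series by a case split on $q(\sigma) = \wp{C}{\iverson{\varphi}}(\sigma)$. The case $q(\sigma) < 1$ is handled identically in both proofs.

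There is, however, a genuine gap in your treatment of the boundary case $q(\sigma) = 1$. You split on $h(\sigma) = \wp{C}{\iverson{\neg\varphi}\cdot f}(\sigma)$ and, for $h(\sigma) > 0$, argue that the left--hand side is $\infty$ and that this ``agrees with $h(\sigma)/0 = \infty$.'' But the theorem only stipulates the convention $\tfrac{0}{0} \coloneqq 0$; it does \emph{not} define $\tfrac{a}{0}$ for $a > 0$ (that additional convention appears only later, in Theorem~\ref{thm:newrule-proof-rule} for ERTs). Under the conventions actually in force, the right--hand side of the claimed identity is simply undefined in your subcase, so the equality cannot be verified there. The missing observation --- and the one the paper supplies --- is that this subcase is vacuous: if $\wp{C}{\iverson{\varphi}}(\sigma) = \Exp{\Sigma}{\iverson{\varphi}}{\semantics{C}{\sigma}} = 1$, then the (sub)distribution $\semantics{C}{\sigma}$ must have total mass $1$ concentrated entirely on states satisfying $\varphi$, whence
\begin{align*}
\wp{C}{\iverson{\neg\varphi}\cdot f}(\sigma) \eeq \Exp{\Sigma}{\iverson{\neg\varphi}\cdot f}{\semantics{C}{\sigma}} \eeq 0~.
\end{align*}
Thus $q(\sigma) = 1$ forces $h(\sigma) = 0$, only the $\tfrac{0}{0}$ situation ever arises, and the stated convention suffices. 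Adding this short argument (it is the step marked $\ddagger$/$\star$ in Appendix~\ref{proof:thm:newrule-wpsemantics}) closes the gap; the rest of your bookkeeping with $0\cdot\infty = 0$ in $\Rposinf$ and the monotone convergence of the partial sums is fine.
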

%
%
\begin{proof}
	We have
        \begin{align*}
                 & \wp{\WHILEDO{\kGuard}{C}}{f} \\
        & \eeq  \sup_{n \in \Nats}~ F_f^n(0) \tag{by Definition~\ref{def:wp}}
                      \\
        & \eeq  \sup_{n \in \Nats}~
                      \iverson{\kGuard} \cdot  \wp{C}{\iverson{\neg \kGuard} \cdot f}
                     \cdot \left. \sum\limits_{i=0}^{n-2} \middle( \wp{C}{\iverson{\kGuard}}^i \right) 
                    + \iverson{\neg \kGuard} \cdot f 
                    \tag{by Lemma \ref{thm:newrule_wp_finite_approximations}} \\
        & \eeq \iverson{\kGuard} \cdot  \wp{C}{\iverson{\neg \kGuard} \cdot f}
                     \cdot \left. \sum_{i=0}^{\omega} \middle( \wp{C}{\iverson{\kGuard}}^i \right) 
                    + \iverson{\neg \kGuard} \cdot f~. \tag{$\dagger$}
        \end{align*}
The preexpectation ($\dagger$) is to be evaluated in some state $\sigma$ for which we have two cases:
The first case is when $\wp{C}{\iverson{\kGuard}}(\sigma) < 1$.
Using the closed form of the geometric series,
        i.e.\ $\sum_{i=0}^{\omega} q  = \frac{1}{1-q}$ if $|q| < 1$, we get
        \begin{align*}
                 & \iverson{\kGuard}(\sigma) \cdot  \wp{C}{\iverson{\neg \kGuard} \cdot f}(\sigma)
                     \cdot \left. \sum_{i=0}^{\omega} \middle( \wp{C}{\iverson{\kGuard}}(\sigma)^i \right) 
                    + \iverson{\neg \kGuard}(\sigma) \cdot f(\sigma) \tag{$\dagger$ instantiated in $\sigma$}\\
        & \eeq  \kGuardB (\sigma) \cdot 
                     \frac{\wp{C}{\iverson{\neg \kGuard} \cdot f}(\sigma)}{1-\wp{C}{\iverson{\kGuard}}(\sigma)}
                     + \iverson{\neg \kGuard}(\sigma) \cdot f(\sigma)~. \tag{closed form of geometric series}
        \end{align*}
The second case is when $\wp{C}{\iverson{\kGuard}}(\sigma) = 1$.
This case is technically slightly more involved.
The full proof can be found in Appendix~\ref{proof:thm:newrule-wpsemantics}
\qed
\end{proof} 

\noindent              
%
%
%
%
%
%
%
%
%
%
%
%
%
%
%
%
We now derive a similar proof rule for the ERT of an $f$--i.i.d.\ loop $\WHILEDO{\kGuard}{C}$.
%
%
%
\begin{theorem}[Proof Rule for ERTs of $\boldsymbol{f}$--i.i.d.\ Loops]
       \label{thm:newrule-proof-rule}
       Let $C \in \pgcl$, $\kGuard$ be a guard, and $f \in \E$ such that all of the following conditions hold:
       \begin{enumerate}
              \item $\WHILEDO{\kGuard}{C}$ is $f$--i.i.d.
              \item $\wp{C}{1} = 1$ (loop body terminates almost--surely).
              \item $\RelNewRule{\ert{C}{0}}{C}$ (every iteration runs in the same expected time).
       \end{enumerate}
       Then for the ERT of the loop $\WHILEDO{\kGuard}{C}$ w.r.t.\  postruntime $f$ it holds
       that
       \begin{align*}
              &\ert{\WHILEDO{\kGuard}{C}}{f} \eeq 1+ \frac{\kGuardB \cdot \left(1 + \ert{C}{\nkGuardB \cdot f} \right)}{1- \wp{C}{\kGuardB}} + \nkGuardB \cdot f~,
       \end{align*}
       where we define $\frac{0}{0} \coloneqq 0$ and $\frac{a}{0} \coloneqq \infty$, for $a \neq 0$.
\end{theorem}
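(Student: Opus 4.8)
The plan is to mirror the proof of Theorem~\ref{thm:newrule-wpsemantics}: first derive a closed form for the orbit $\{F_f^n(0)\}_{n \in \Nats}$ of the $\ertsymbol$--characteristic function $F_f(X) = 1 + \nkGuardB \cdot f + \kGuardB \cdot \ert{C}{X}$, and then take its supremum, which equals $\ert{\WHILEDO{\kGuard}{C}}{f}$ by Definition~\ref{def:ert}. The crucial simplification comes from the Decomposition Theorem~\ref{thm:ert-wp}, which lets me rewrite $\ert{C}{X} = \ert{C}{0} + \wp{C}{X}$ inside each iteration. Abbreviating $t = \ert{C}{0}$, $h = \wp{C}{\kGuardB}$, and $g = \wp{C}{\nkGuardB \cdot f}$, assumptions (1) and (3) guarantee that $t$, $h$, and $g$ are all unaffected by $C$, and hence so is every expectation obtained from them via $+$ and $\cdot$.

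First I would prove by induction on $n \geq 1$ that the orbit has the form
\begin{align*}
F_f^n(0) \eeq 1 + \nkGuardB \cdot f + \kGuardB \cdot c_n, \qquad \text{with each } c_n \text{ unaffected by } C~,
\end{align*}
where the base case yields $c_1 = t$. For the inductive step I compute $\ert{C}{F_f^n(0)} = t + \wp{C}{F_f^n(0)}$ by Theorem~\ref{thm:ert-wp}, expand $\wp{C}{1 + \nkGuardB \cdot f + \kGuardB \cdot c_n}$ using linearity (Theorem~\ref{thm:basic-prop}) together with $\wp{C}{1} = 1$ (assumption (2)), and — since $c_n$ is unaffected — invoke the Scaling Lemma~\ref{lem:newrule-main} to factor $\wp{C}{\kGuardB \cdot c_n} = c_n \cdot h$. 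This produces the linear recurrence $c_{n+1} = (t + 1 + g) + h \cdot c_n$, whose solution is
\begin{align*}
c_n \eeq (t + 1 + g) \cdot \sum_{i=0}^{n-2} h^i + t \cdot h^{n-1}~.
\end{align*}
Using Theorem~\ref{thm:ert-wp} once more to identify $t + g = \ert{C}{\nkGuardB \cdot f}$, this is exactly the partial--geometric--series shape of Lemma~\ref{thm:newrule_wp_finite_approximations}, but with an extra $+1$ per iteration (the guard evaluation) and a vanishing correction term $t \cdot h^{n-1}$.

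Then I take the supremum over $n$ pointwise in a state $\sigma$. Since the orbit is an increasing chain (Kleene), the coefficient $c_n$ is monotone in $n$ wherever $\kGuardB(\sigma) = 1$, so the supremum may be pushed inside $\kGuardB \cdot (\cdot)$. I distinguish the same two cases as in Theorem~\ref{thm:newrule-wpsemantics}. If $h(\sigma) < 1$, the geometric series converges to $\tfrac{1}{1-h}$ while $t \cdot h^{n-1} \to 0$ (the subcase $t(\sigma) = \infty$ being subsumed, as then $\ert{C}{\nkGuardB \cdot f}(\sigma) = \infty$ already), giving precisely the claimed closed form $1 + \tfrac{\kGuardB \cdot (1 + \ert{C}{\nkGuardB \cdot f})}{1 - \wp{C}{\kGuardB}} + \nkGuardB \cdot f$. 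If $h(\sigma) = 1$, then $\sum_{i=0}^{n-2} h^i = n-1$ diverges and, since $1 + \ert{C}{\nkGuardB \cdot f} \geq 1$, the coefficient tends to $\infty$; the supremum is $1 + f(\sigma)$ when $\kGuardB(\sigma) = 0$ (matching $\tfrac{0}{0} = 0$) and $\infty$ when $\kGuardB(\sigma) = 1$ (matching $\tfrac{a}{0} = \infty$ for $a \neq 0$).

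I expect the main obstacle to be the divergent case $h(\sigma) = 1$ together with the bookkeeping of the $\infty$--conventions: one must check that the boundary behaviour of the formula ($\tfrac{0}{0} = 0$ versus $\tfrac{a}{0} = \infty$) coincides with the actual supremum of the orbit in \emph{every} state, in particular depending on whether the guard holds initially. A secondary subtlety is verifying that unaffectedness is preserved throughout the induction — that each $c_n$ remains unaffected by $C$ — which is what licenses the repeated application of Lemma~\ref{lem:newrule-main}; this holds because $t$, $g$, and $h$ are unaffected and unaffectedness is closed under the arithmetic used to build $c_n$.
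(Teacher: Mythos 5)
Your proof is correct, and it takes a slightly different (though closely related) route from the paper's. The paper first computes $\ert{\WHILEDO{\varphi}{C}}{0}$ by exhibiting the orbit of the $\ertsymbol$--characteristic functional with respect to postruntime $0$ only, and then obtains the general case by applying the decomposition of Theorem~\ref{thm:ert-wp} \emph{at the level of the whole loop}, $\ert{\WHILEDO{\varphi}{C}}{f} = \ert{\WHILEDO{\varphi}{C}}{0} + \wp{\WHILEDO{\varphi}{C}}{f}$, importing the $\wpsymbol$--part from Theorem~\ref{thm:newrule-wpsemantics}. You instead run a single induction on the orbit of $F_f$ for general $f$, using the decomposition only \emph{at the level of the loop body} ($\ert{C}{X} = \ert{C}{0} + \wp{C}{X}$) inside each step; your recurrence $c_{n+1} = \bigl(1 + \ert{C}{\iverson{\neg\varphi}\cdot f}\bigr) + \wp{C}{\iverson{\varphi}}\cdot c_n$ and its solution specialize exactly to the paper's orbit formula when $f = 0$, and your closing case distinction on $\wp{C}{\iverson{\varphi}}(\sigma) < 1$ versus ${}=1$ matches the paper's. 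The ingredients --- Theorem~\ref{thm:ert-wp}, the Scaling Lemma~\ref{lem:newrule-main}, linearity, $\wp{C}{1}=1$, and the geometric series --- are identical; what your organization buys is that Theorem~\ref{thm:newrule-wpsemantics} need not be invoked as a separate black box, at the cost of re-deriving its geometric series within the same induction. The two subtleties you flag --- that each $c_n$ remains unaffected by $C$ because unaffectedness is closed under $+$ and $\cdot$ of unaffected expectations, and the $\nicefrac{0}{0}$ versus $\nicefrac{a}{0}$ bookkeeping in the divergent case --- are indeed exactly the points that need care, and your treatment of both is sound.
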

\begin{proof}
	We first prove 
	\begin{align*}
		\ert{\WHILEDO{\kGuard}{C}}{0}  \eeq 1 + \iverson{\kGuard} \cdot \frac{1 + \ert{C}{0}}{1 - \wp{C}{\iverson{\kGuard}}}~. \tag{$\ddag$}
	\end{align*}
	To this end, we propose the following expression as the orbit of the $\ertsymbol$--char{\-}ac{\-}ter{\-}is{\-}tic function of
        the loop
        w.r.t.\ $0$:
        \begin{align*}
              F_0^n(0) \eeq 1 + \kGuardB \cdot \left( \ert{C}{0} \cdot
              \sum\limits_{i=0}^{n} \wp{C}{\kGuardB}^i ~+~ 
              \sum\limits_{i=0}^{n-1} \wp{C}{\kGuardB}^i \right)
       \end{align*}
        %
        %
        For a verification that the above expression is indeed the correct orbit, we refer to the rigorous proof of this theorem located in Appendix~\ref{lem:newrule-omega-invariant}.
        Now, analogously to the reasoning in the proof of Theorem~\ref{thm:newrule-wpsemantics} (i.e.\ using the closed form of the geometric series and case distinction on whether $\wp{C}{\kGuardB} < 1$ or $\wp{C}{\kGuardB} = 1$), we get that the supremum of this orbit is indeed the right--hand side of ($\ddag$).
        To complete the proof, consider the following:
        \belowdisplayskip=0pt
        \begin{align*}
              &\ert{\WHILEDO{\kGuard}{C}}{f} \\
              & \eeq \ert{\WHILEDO{\kGuard}{C}}{0} + \wp{\WHILEDO{\kGuard}{C}}{f} \tag{by Theorem~\ref{thm:ert-wp}} \\
              & \eeq 1 + \iverson{\kGuard} \cdot \frac{1 + \ert{C}{0}}{1 - \wp{C}{\iverson{\kGuard}}} + \iverson{\kGuard} \cdot 
                              \frac{\wp{C}{\iverson{\neg \kGuard} \cdot f}}{1-\wp{C}{\iverson{\kGuard}}}
                              + \iverson{\neg \kGuard} \cdot f \tag{by  ($\ddag$) and Theorem~\ref{thm:newrule-wpsemantics}}\\
              & \eeq 1 + \iverson{\kGuard} \cdot \frac{1 + \ert{C}{\iverson{\neg \kGuard} \cdot f}}{1 - \wp{C}{\iverson{\kGuard}}}
                              + \iverson{\neg \kGuard} \cdot f \tag{by Theorem~\ref{thm:ert-wp}}
       \end{align*}
       \normalsize
\qed
\end{proof}
%

\section{A Programming Language for Bayesian Networks}
\label{sec:applications}

\newcommand{\varXI}{x_{i}}
\newcommand{\lBody}{B_{\varXI}}
\newcommand{\psiB}{\iverson{\psi}} 
\newcommand{\npsiB}{\iverson{\neg \psi}}
\newcommand{\Blk}{\textit{Seq}}
\newcommand{\Obs}{O}

So far we have derived proof rules for formal reasoning about expected outcomes and expected run-times of i.i.d.\ loops
(Theorems~
\ref{thm:newrule-wpsemantics} and~\ref{thm:newrule-proof-rule}).
In this section, we apply these results to develop a syntactic $\pgcl$ fragment that allows exact computations of closed forms of ERTs.
In particular, no invariants, (super)martingales or fixed point computations are required.

After that, we show how BNs with observations can be translated into $\pgcl$ programs within this fragment.
Consequently, we call our $\pgcl$ fragment the \emph{Bayesian Network Language}.
As a result of the above translation, we obtain a systematic and automatable approach to compute the \emph{expected sampling time} of a BN in the presence of observations.
That is, the expected time it takes to obtain a single sample that satisfies all observations.

\subsection{The Bayesian Network Language}\label{sec:bnl}

Programs in the Bayesian Network Language are organized as sequences of blocks.
Every block is associated with a single variable, say $x$, and satisfies two constraints:
First, no variable other than $x$ is modified inside the block, i.e.\ occurs on the left--hand side of a random assignment.
Second, every variable accessed inside of a guard has been initialized before.
These restrictions ensure that there is no data flow across multiple executions of the same block.
Thus, intuitively, all loops whose body is composed from blocks (as described above) \mbox{are $f$--i.i.d.\ loops}.
\begin{definition}[The Bayesian Network Language]
    Let $\Vars = \{x_1,\, x_2,\, \ldots\}$ be a finite set of program variables as in Section~\ref{sec:pprogs}.
	The set of programs in Bayesian Network Language, denoted $\bnl$, is given by the grammar
	\begin{align*}
        C \quad\longrightarrow\quad & \Blk \quad|\quad \REPEATUNTIL{\Blk}{\psi} \quad|\quad \COMPOSE{C}{C}\\[0.5em]
        \Blk \quad\longrightarrow\quad & \COMPOSE{\Blk}{\Blk} \quad|\quad B_{x_1} \quad|\quad B_{x_2} \quad|\quad \ldots \qquad \\[0.5em]
		B_{x_i} \quad\longrightarrow\quad & \PASSIGN{x_i}{\mu} \quad|\quad \ITE{\varphi}{\PASSIGN{x_i}{\mu}}{B_{x_i}} \qquad \tag{rule exists for all $x_i \in \Vars$}
	\end{align*}
	where $x_i \in \Vars$ is a program variable, all variables in $\varphi$ have been initialized before, and 
    $B_{x_i}$ is a non--terminal parameterized with program variable $x_i \in \Vars$.
    That is, for all $x_i \in \Vars$ there is a non--terminal $B_{x_i}$.
    Moreover, $\psi$ is an arbitrary guard and $\mu$ is a distribution expression of the form 
    $\mu = \sum_{j=1}^{n} p_j \cdot \langle a_j \rangle$ with $a_j \in \Rats$ for $1 \leq j \leq n$.
\end{definition}
\begin{example}\label{ex:dice}
  Consider the \bnl program $C_{\textit{dice}}$:
  \begin{align*}
          & \COMPOSE{\PASSIGN{x_1}{\mathtt{Unif}[1\ldots 6]}}{\REPEATUNTIL{\PASSIGN{x_2}{\mathtt{Unif}[1\ldots 6]}}{x_2 \geq x_1}}
  \end{align*}
  This program first throws a fair die. 
  After that it keeps throwing a second die until its result is at least as large as the first die.
  \hfill $\triangle$
\end{example}
\noindent
For any $C \in \bnl$, our goal is to 
compute the exact ERT of $C$, i.e.\ $\ert{C}{0}$.
In case of loop--free programs, this amounts to a straightforward application of the $\ertsymbol$ calculus presented in Section~\ref{sec:pprogs}.
To deal with loops, however, we have to perform fixed point computations or require user--supplied artifacts, e.g.~invariants, supermartingales, etc.
For $\bnl$ programs, on the other hand, it suffices to apply the proof rules developed in Section~\ref{sec:ert-rules}.
As a result, we directly obtain an exact closed form solution for the ERT of a loop.
This is a consequence of 
the fact that 
all loops in $\bnl$ are $f$--i.i.d., which we establish in the following.

By definition, every loop in $\bnl$ is of the form $\REPEATUNTIL{\lBody}{\psi}$, which is equivalent to $\COMPOSE{\lBody}{\WHILEDO{\neg \psi}{\lBody}}$.
Hence, we want to apply Theorem~\ref{thm:newrule-proof-rule} to that while loop.
Our first step is to discharge the theorem's premises:
\begin{lemma}
\label{lem:apply-rule-relation}
  Let $\Blk$ be a sequence of \bnl--blocks, $g \in \E$, and $\psi$ be a guard. Then:
  \begin{enumerate}
          \item 
                The expected value of $g$ after executing $\Blk$ is unaffected by $\Blk$.
                That is, $\RelNewRule{\wp{\Blk}{g}}{\Blk}$.
          \item 
                The ERT of $\Blk$ is unaffected by $\Blk$, i.e.
                $\RelNewRule{\ert{\Blk}{0}}{\Blk}$.
          \item For every $f \in \E$, the loop $\WHILEDO{\neg \psi}{\Blk}$ is $f$--i.i.d.
  \end{enumerate}
\end{lemma}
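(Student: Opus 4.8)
The plan is to derive all three claims from a single statement about $\wpsymbol$, proved by structural induction on the block sequence $\Blk$. Statement~3 is in fact an immediate corollary of statement~1: unfolding Definition~\ref{def:f-iid} for the loop $\WHILEDO{\neg\psi}{\Blk}$ (whose guard is $\neg\psi$ and whose body is $\Blk$), being $f$--i.i.d.\ means exactly $\RelNewRule{\wp{\Blk}{\npsiB}}{\Blk}$ and $\RelNewRule{\wp{\Blk}{\psiB\cdot f}}{\Blk}$, and these are the instances of statement~1 with $g=\npsiB$ and with $g=\psiB\cdot f$. Statement~2 will likewise be reduced to statement~1 through the decomposition $\ert{\COMPOSE{\Blk_1}{\Blk_2}}{0}=\ert{\Blk_1}{0}+\wp{\Blk_1}{\ert{\Blk_2}{0}}$ (Theorem~\ref{thm:ert-wp}), which isolates a $\wpsymbol$--term to which statement~1's machinery applies. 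Hence the heart of the proof is statement~1.

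For statement~1 a naive induction on ``$\wp{\Blk}{g}$ is disjoint from $\VarsAssign{\Blk}$'' does not go through, because when $\Blk=\COMPOSE{\Blk_1}{\Blk_2}$ the inner preexpectation $\wp{\Blk_2}{g}$ may legitimately mention variables that $\Blk_1$ assigns, and one must argue that $\wp{\Blk_1}{(\cdot)}$ neither retains those nor reintroduces variables of $\VarsAssign{\Blk_2}$. I would therefore prove the stronger invariant
\begin{equation*}
\VarsInExp{\wp{\Blk}{g}} ~\subseteq~ \bigl( \VarsInExp{g} \cup \VarsGuard{\Blk} \bigr) \setminus \VarsAssign{\Blk} \tag{$\star$}
\end{equation*}
for every $g\in\E$, where $\VarsGuard{\Blk}$ collects the variables occurring in guards of $\Blk$; since the right--hand side is disjoint from $\VarsAssign{\Blk}$, ($\star$) yields statement~1 at once. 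The base case is a single block $B_{x_i}$, handled by an inner induction on its if--then--else nesting: a leaf assignment $\PASSIGN{x_i}{\mu}$ with $\mu=\sum_j p_j\langle a_j\rangle$ and $a_j\in\Rats$ substitutes the \emph{constants} $a_j$ for $x_i$, so $x_i$ disappears from $\wp{\PASSIGN{x_i}{\mu}}{g}$ and only variables of $g$ other than $x_i$ survive; the conditional rule of Table~\ref{table:wp} then only adds the variables of its guard, which by the block discipline (no data flow through the block's own variable) do not include $x_i$. This gives $\VarsInExp{\wp{B_{x_i}}{g}}\subseteq(\VarsInExp{g}\cup\VarsGuard{B_{x_i}})\setminus\{x_i\}$, i.e.\ ($\star$) for one block.

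For the inductive step $\Blk=\COMPOSE{\Blk_1}{\Blk_2}$ I would write $\wp{\Blk}{g}=\wp{\Blk_1}{\wp{\Blk_2}{g}}$, apply ($\star$) to $\Blk_2$ with postexpectation $g$ and then to $\Blk_1$ with postexpectation $h:=\wp{\Blk_2}{g}$, and compose the two inclusions. The only term that does not obviously land inside the target set $(\VarsInExp{g}\cup\VarsGuard{\Blk_1}\cup\VarsGuard{\Blk_2})\setminus(\VarsAssign{\Blk_1}\cup\VarsAssign{\Blk_2})$ is a variable lying in $\VarsGuard{\Blk_1}\cap\VarsAssign{\Blk_2}$, which the composed expression removes only via $\setminus\VarsAssign{\Blk_1}$, hence not necessarily at all. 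This is exactly where the well-formedness of \bnl is essential, and I expect it to be the main obstacle: a variable read in a guard of the earlier segment $\Blk_1$ has, by the ``initialized before'' constraint, been assigned strictly earlier, so it cannot be \emph{first} produced by the later segment $\Blk_2$; were it additionally assigned in $\Blk_2$, then inside the loop $\WHILEDO{\neg\psi}{\Blk}$ the value read by $\Blk_1$ in one iteration would be the one written by $\Blk_2$ in the previous iteration, i.e.\ precisely the cross-iteration data flow the block discipline forbids. Hence $\VarsGuard{\Blk_1}\cap\VarsAssign{\Blk_2}=\emptyset$, the offending term vanishes, and the composed inclusion collapses to ($\star$) for $\Blk$.

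Finally, statement~2 follows by the same pattern. An inner induction on $B_{x_i}$ shows $\VarsInExp{\ert{B_{x_i}}{0}}\subseteq\VarsGuard{B_{x_i}}\setminus\{x_i\}$, since $\ert{\PASSIGN{x_i}{\mu}}{0}=1$ is constant and the $\ertsymbol$--rule for conditional choice (Table~\ref{table:ert}) only adds guard variables. For $\Blk=\COMPOSE{\Blk_1}{\Blk_2}$ the decomposition $\ert{\Blk}{0}=\ert{\Blk_1}{0}+\wp{\Blk_1}{\ert{\Blk_2}{0}}$ splits the runtime into a first summand controlled by the induction hypothesis on $\Blk_1$ and a second summand $\wp{\Blk_1}{\ert{\Blk_2}{0}}$ to which ($\star$) applies with $g=\ert{\Blk_2}{0}$; invoking once more the disjointness $\VarsGuard{\Blk_1}\cap\VarsAssign{\Blk_2}=\emptyset$ shows that the union of the two variable sets avoids $\VarsAssign{\Blk}$, giving $\RelNewRule{\ert{\Blk}{0}}{\Blk}$. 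Together with the reduction of statement~3 to statement~1 noted at the outset, this completes the plan.
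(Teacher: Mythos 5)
Your overall architecture is the same as the paper's: prove a strengthened statement about the variable set of $\wp{\Blk}{g}$ by induction on the block sequence (with an inner induction on the if--then--else structure of a single $B_{x_i}$), then obtain part~2 from part~1 and part~3 by instantiating $g$ with $\iverson{\neg\psi}$ and $\iverson{\psi}\cdot f$. The paper proves the invariant as the equality $\VarsInExp{\wp{\Blk}{g}} = \VarsInExp{g}\setminus\VarsAssign{\Blk}$ and, for part~2, simply notes that $\VarsInExp{\ert{\Blk}{g}}=\VarsInExp{\wp{\Blk}{g}}$ for loop--free $\Blk$, so that $\VarsInExp{\ert{\Blk}{0}}=\VarsInExp{0}\setminus\VarsAssign{\Blk}=\emptyset$; your detour through $\ert{\COMPOSE{\Blk_1}{\Blk_2}}{0}=\ert{\Blk_1}{0}+\wp{\Blk_1}{\ert{\Blk_2}{0}}$ is more laborious but workable.

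The one genuine gap is in your composition step. You dispose of the offending variables by asserting $\VarsGuard{\Blk_1}\cap\VarsAssign{\Blk_2}=\emptyset$, justified by appeal to ``the cross--iteration data flow the block discipline forbids.'' That slogan is a motivating remark in the paper, not a syntactic constraint of \bnl; the only stated constraints are that each block assigns only its own variable and that guard variables are initialized before use. Neither forbids a variable that is read in a guard of $\Blk_1$ from \emph{also} being assigned by a later block in $\Blk_2$ (the grammar does not prohibit two blocks for the same variable), so the disjointness you claim can fail. More importantly, it is not the fact you need. What the initialization constraint actually gives you is $\VarsGuard{\Blk_1}\subseteq\VarsAssign{\Blk_1}$: a variable read in a guard of $\Blk_1$ must have been assigned by an \emph{earlier} block of the same sequence, hence inside $\Blk_1$ itself. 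That containment already removes every element of $\VarsGuard{\Blk_1}$ via the outer $\setminus\VarsAssign{\Blk_1}$, regardless of whether $\Blk_2$ re-assigns it, and it is exactly the fact the paper invokes --- in the form $\VarsGuard{B^{n+1}}\subseteq\VarsAssign{\COMPOSE{B^{1}}{\ldots;B^{n}}}$, obtained by peeling a single block off the \emph{end} of the sequence so that the containment is immediate. Replace your disjointness claim by this containment (or switch to the paper's one-block-at-a-time decomposition) and the proof closes; the same repair is needed where you reuse the disjointness in your argument for part~2.
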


\begin{proof}
1.~is proven by induction on the length of the sequence of blocks $\Blk$ and 2.~is a consequence of 1.,
see Appendix~\ref{app:apply-rule-relation}.
3.~follows immediately from 1.~by instantiating $g$ with $\npsiB$ and $\psiB \cdot f$, respectively.
\qed
\end{proof}
\noindent
We are now in a position to derive a closed form for the ERT of loops in $\bnl$.
\begin{theorem}\label{thm:bnl:repeat}
       For every loop $\REPEATUNTIL{\Blk}{\psi} \in \bnl$ and every $f \in \E$,
       \begin{align*}
            \ert{\REPEATUNTIL{\Blk}{\psi}}{f}
            \eeq
            \frac{1 + \ert{\Blk}{\psiB \cdot f}}{\wp{\Blk}{\psiB}}~.
       \end{align*}
       %
       %
\end{theorem}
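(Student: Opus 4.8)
The plan is to unfold the repeat--until loop into its while--loop form and reduce the statement to a single application of Theorem~\ref{thm:newrule-proof-rule}, followed by an algebraic simplification. Since $\REPEATUNTIL{\Blk}{\psi}$ abbreviates $\COMPOSE{\Blk}{\WHILEDO{\neg \psi}{\Blk}}$, the sequential--composition rule for $\ertsymbol$ gives
\begin{align*}
  \ert{\REPEATUNTIL{\Blk}{\psi}}{f} \eeq \ert{\Blk}{\ert{\WHILEDO{\neg \psi}{\Blk}}{f}}~.
\end{align*}
So the first task is to obtain a closed form for the inner while loop, and the second is to push $\Blk$ through it.

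To invoke Theorem~\ref{thm:newrule-proof-rule} on $\WHILEDO{\neg \psi}{\Blk}$ I first discharge its three premises. The $f$--i.i.d.\ property is exactly Lemma~\ref{lem:apply-rule-relation}(3), and $\RelNewRule{\ert{\Blk}{0}}{\Blk}$ is Lemma~\ref{lem:apply-rule-relation}(2). The remaining premise $\wp{\Blk}{1} = 1$ holds because $\Blk$ is loop--free (a finite sequence of blocks $B_{x_i}$, each built only from assignments and conditionals), hence terminates almost surely. With all premises in place, instantiating Theorem~\ref{thm:newrule-proof-rule} with guard $\neg \psi$ (so that $\iverson{\neg \neg \psi} = \psiB$) yields
\begin{align*}
  \ert{\WHILEDO{\neg \psi}{\Blk}}{f} \eeq 1 + \frac{\npsiB \cdot \bigl(1 + \ert{\Blk}{\psiB \cdot f}\bigr)}{1 - \wp{\Blk}{\npsiB}} + \psiB \cdot f~.
\end{align*}

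It then remains to compute $\ert{\Blk}{\,\cdot\,}$ of this expression. The key observation is that the coefficient $K = \tfrac{1 + \ert{\Blk}{\psiB \cdot f}}{1 - \wp{\Blk}{\npsiB}}$ is unaffected by $\Blk$: indeed $\wp{\Blk}{\npsiB}$ is unaffected by Lemma~\ref{lem:apply-rule-relation}(1), and by Theorem~\ref{thm:ert-wp} we have $\ert{\Blk}{\psiB\cdot f} = \ert{\Blk}{0} + \wp{\Blk}{\psiB\cdot f}$, a sum of expectations unaffected by $\Blk$ (Lemma~\ref{lem:apply-rule-relation}(1),(2)). I then apply the decomposition $\ert{\Blk}{g} = \ert{\Blk}{0} + \wp{\Blk}{g}$ (Theorem~\ref{thm:ert-wp}), expand $\wp{\Blk}{\,\cdot\,}$ by linearity (Theorem~\ref{thm:basic-prop}(1)) using $\wp{\Blk}{1}=1$, and pull the unaffected constant $K$ out of $\wp{\Blk}{K \cdot \npsiB}$ via the scaling Lemma~\ref{lem:newrule-main}, obtaining $K \cdot \wp{\Blk}{\npsiB}$.

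Finally I collect and simplify. Writing $p = \wp{\Blk}{\psiB}$ and $q = \wp{\Blk}{\npsiB}$, almost--sure termination of $\Blk$ with linearity gives $p + q = \wp{\Blk}{1} = 1$, hence $1 - q = p$. Substituting $\ert{\Blk}{0} = \ert{\Blk}{\psiB\cdot f} - \wp{\Blk}{\psiB\cdot f}$ (again Theorem~\ref{thm:ert-wp}) cancels the two $\wp{\Blk}{\psiB\cdot f}$ terms, and the surviving expression factors as $\bigl(1 + \ert{\Blk}{\psiB\cdot f}\bigr)\bigl(1 + \tfrac{q}{p}\bigr) = \bigl(1 + \ert{\Blk}{\psiB\cdot f}\bigr)\tfrac{p+q}{p} = \tfrac{1 + \ert{\Blk}{\psiB\cdot f}}{p}$, which is exactly the claimed right--hand side. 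I expect the main obstacle to be the bookkeeping in this last step: carefully justifying that $K$ is genuinely unaffected so that Lemma~\ref{lem:newrule-main} is applicable, and treating the degenerate case $p = 0$ (the loop almost surely never exits) consistently with the conventions $\tfrac{0}{0} = 0$ and $\tfrac{a}{0} = \infty$, where the identity $1 - q = p$ and the passage from Theorem~\ref{thm:newrule-proof-rule}'s fraction to the final one must be read with care.
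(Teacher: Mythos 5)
Your proposal is correct and follows essentially the same route as the paper's proof: unfold the \texttt{repeat-until} loop, apply Theorem~\ref{thm:newrule-proof-rule} after discharging its premises via Lemma~\ref{lem:apply-rule-relation}, then use Theorem~\ref{thm:ert-wp}, linearity, and Lemma~\ref{lem:newrule-main} (pulling out the unaffected fraction, which the paper calls $R$ and you call $K$) before the final algebraic simplification with $\wp{\Blk}{\psiB} + \wp{\Blk}{\npsiB} = 1$. The only difference is cosmetic bookkeeping, plus your (welcome) explicit attention to the degenerate case $\wp{\Blk}{\psiB} = 0$.
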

\begin{proof}
Let $f \in \E$.
Moreover, recall that $\REPEATUNTIL{\Blk}{\psi}$ is equivalent to the program $\COMPOSE{\Blk}{\WHILEDO{\neg\psi}{\Blk}} \in \bnl$.
Applying the semantics of $\ertsymbol$ (Table~\ref{table:ert}), we proceed as follows:
\begin{flalign*}
        &\ert{\REPEATUNTIL{\Blk}{\psi}}{f}
        \eeq \ert{\Blk}{\ert{\WHILEDO{\neg \psi}{\Blk}}{f}} &
\end{flalign*}
Since the loop body $\Blk$ is loop--free, it terminates certainly, i.e.\ $\wp{\Blk}{1} = 1$ (Premise 2.~of Theorem~\ref{thm:newrule-proof-rule}).
Together with Lemma~\ref{lem:apply-rule-relation}.1.~and~3., all premises of Theorem~\ref{thm:newrule-proof-rule} are satisfied.
Hence, we obtain a closed form for $\ert{\WHILEDO{\neg \psi}{\Blk}}{f}$:
\begin{flalign*}
        &{}={}~ \ertsymbol\llbracket{\Blk}\rrbracket \biggl({~~\underbrace{1 + \frac{\npsiB \cdot \left(1 + \ert{\Blk}{\psiB \cdot f} \right)}{1-\wp{\Blk}{\npsiB}} + \psiB \cdot f}_{\eqqcolon g}~~}\biggr) &
\end{flalign*}
By Theorem~\ref{thm:ert-wp}, we know $\ert{\Blk}{g} = \ert{\Blk}{0} + \wp{C}{g}$ for any $g$. Thus:
\belowdisplayskip=0pt
\begin{flalign*}
        &{}={}~ \ert{\Blk}{0} + \wpsymbol\llbracket{\Blk}\rrbracket\biggl({~~\underbrace{1 + \frac{\npsiB \cdot \left(1 + \ert{\Blk}{\psiB \cdot f} \right)}{1-\wp{\Blk}{\npsiB}} + \psiB \cdot f}_g~~}\biggr) &
\end{flalign*}
\normalsize
Since $\wpsymbol$ is linear (Theorem~\ref{thm:basic-prop} (2)), we obtain:
\begin{flalign*}
        &{}={}~ \ert{\Blk}{0} + \underbrace{\wp{\Blk}{1}}_{\eeq 1} {} + \wp{\Blk}{\psiB \cdot f} & \\
        & \qquad + \wp{\Blk}{\frac{\npsiB \cdot \left(1 + \ert{\Blk}{\psiB \cdot f} \right)}{1-\wp{\Blk}{\npsiB}}}
\end{flalign*}
By a few simple algebraic transformations, this coincides with:
\begin{flalign*}
        &{}={}~ 1 + \ert{\Blk}{0} + \wp{\Blk}{\psiB \cdot f} + \wp{\Blk}{\npsiB \cdot \frac{1 + \ert{\Blk}{\psiB \cdot f}}{1 - \wp{\Blk}{\npsiB}}} &
\end{flalign*}
Let $R$ denote the fraction above.
Then Lemma~\ref{lem:apply-rule-relation}.1.~and~2.~implies $\RelNewRule{R}{\Blk}$.
We may thus apply Lemma~\ref{lem:newrule-main} to \mbox{derive $\wp{\Blk}{\npsiB \cdot R} = \wp{\Blk}{\npsiB} \cdot R$. Hence}:
\begin{flalign*}
        &{}={}~ 1 + \ert{\Blk}{0} + \wp{\Blk}{\psiB \cdot f} + \wp{\Blk}{\npsiB} \cdot \frac{1 + \ert{\Blk}{\psiB \cdot f}}{1 - \wp{\Blk}{\npsiB}} &
\end{flalign*}
Again, by Theorem~\ref{thm:ert-wp}, we know that $\ert{\Blk}{g} = \ert{\Blk}{0} + \wp{\Blk}{g}$ for any $g$.
Thus, for $g = \psiB \cdot f$, this yields:
\begin{flalign*}
        &{}={}~ 1 + \ert{\Blk}{\psiB \cdot f} + \wp{\Blk}{\npsiB} \cdot \frac{1 + \ert{\Blk}{\psiB \cdot f}}{1 - \wp{\Blk}{\npsiB}} &
\end{flalign*}
Then a few algebraic transformations lead us to the claimed ERT:
\begin{flalign*}
        & \eeq \frac{1 + \ert{\Blk}{\psiB \cdot f}}{\wp{\Blk}{\psiB}}~. & \tag*{\qed}
\end{flalign*}
\end{proof}
\noindent
Note that Theorem~\ref{thm:bnl:repeat} holds for arbitrary postexpectations $f \in \E$.
This enables \emph{compositional reasoning} about ERTs of $\bnl$ programs.
Since all other rules of the $\ertsymbol$--calculus for loop--free programs amount to simple syntactical transformations (see Table~\ref{table:ert}), we conclude that
\begin{corollary}\label{thm:bnl:compositional}
For any $C \in \bnl$, a closed form for $\ert{C}{0}$ can be computed compositionally.
\end{corollary}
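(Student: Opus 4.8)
The plan is to prove the stronger statement by structural induction on the grammar of $\bnl$: for \emph{every} $C \in \bnl$ and every $f \in \E$ already given in closed form, a closed form for $\ert{C}{f}$ can be computed. The corollary then follows by instantiating $f = 0$. The strengthening to arbitrary $f$ is essential, since both sequential composition and the loop rule feed nontrivial postexpectations into the recursive calls --- this is precisely why Theorem~\ref{thm:bnl:repeat} was stated for general $f$ rather than only for $f = 0$.

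First I would record the base observation that all $\ertsymbol$-- and $\wpsymbol$--rules for loop--free programs (the $\SKIP$, random--assignment, conditional--choice, and sequential--composition rows of Tables~\ref{table:ert} and~\ref{table:wp}) are purely syntactic: none of them invokes a least fixed point. Because every distribution expression in $\bnl$ has the finite form $\mu = \sum_j p_j \cdot \langle a_j\rangle$, the random--assignment rule unfolds into a finite sum of substitutions, so the class of closed--form expectations (built from rationals, program variables, Iverson brackets, substitutions, finite sums, and the operations $+$, $\cdot$, and division) is closed under each such rule. Hence for any sequence of blocks $\Blk$ and any closed--form $g$, both $\ert{\Blk}{g}$ and $\wp{\Blk}{g}$ are again in closed form. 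This settles the base case $C = \Blk$.

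For the inductive step there are two cases. For $C = \COMPOSE{C_1}{C_2}$ the $\ertsymbol$--rule gives $\ert{\COMPOSE{C_1}{C_2}}{f} = \ert{C_1}{\ert{C_2}{f}}$; applying the induction hypothesis to $C_2$ yields a closed form $g = \ert{C_2}{f}$, and applying it again to $C_1$ with postexpectation $g$ gives the result. For the loop $C = \REPEATUNTIL{\Blk}{\psi}$, Lemma~\ref{lem:apply-rule-relation} discharges the premises of the ERT proof rule, so Theorem~\ref{thm:bnl:repeat} applies and yields
\[
\ert{\REPEATUNTIL{\Blk}{\psi}}{f} \eeq \frac{1 + \ert{\Blk}{\psiB \cdot f}}{\wp{\Blk}{\psiB}}~.
\]
Crucially, the loop body $\Blk$ is loop--free, so the right--hand side refers only to $\ert{\Blk}{\psiB \cdot f}$ and $\wp{\Blk}{\psiB}$ of a loop--free program; by the base observation both are in closed form, and the enclosing division keeps us within the closed--form class (with the conventions $\frac{0}{0} = 0$ and $\frac{a}{0} = \infty$ accounting for states from which the loop diverges almost surely).

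The step I expect to carry the real weight is not any of these calculations --- the substantive work is already done in Lemma~\ref{lem:apply-rule-relation} and Theorem~\ref{thm:bnl:repeat} --- but rather pinning down a notion of ``closed form'' robust enough that the class is provably closed under all the operations appearing above, in particular the division introduced by the loop rule. The grammar restriction that loop bodies are sequences of blocks (hence loop--free) is exactly what guarantees that no unresolved least fixed point ever survives: every loop is eliminated in a single step by Theorem~\ref{thm:bnl:repeat}, which replaces a fixed point by an explicit quotient of closed--form quantities. With this closure property in hand the induction goes through, and $\ert{C}{0}$ is obtained compositionally.
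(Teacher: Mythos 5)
Your proof is correct and follows essentially the same route as the paper: the paper justifies this corollary in exactly two observations --- that Theorem~\ref{thm:bnl:repeat} holds for \emph{arbitrary} postexpectations $f$ (enabling compositional reasoning through sequential composition and loops), and that all remaining $\ertsymbol$--rules for loop--free programs are simple syntactic transformations per Table~\ref{table:ert}. Your structural induction with the strengthened induction hypothesis for arbitrary closed--form $f$ is just a more explicit write--up of that argument, so no further comparison is needed.
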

\begin{example}
  Theorem~\ref{thm:bnl:repeat} allows us to comfortably compute the ERT of the \bnl program 
  $C_{\textit{dice}}$ introduced in Example~\ref{ex:dice}:
  \begin{align*}
          & \COMPOSE{\PASSIGN{x_1}{\mathtt{Unif}[1\ldots 6]}}{
                \REPEATUNTIL{\PASSIGN{x_2}{\mathtt{Unif}[1\ldots 6]}}{x_2 \geq x_1}
                }
  \end{align*}
  For the ERT, we have
  \begin{align*}
          & \ert{C_\textit{dice}}{0} \\
          %
          \eeq & \ert{\PASSIGN{x_1}{\mathtt{Unif}[1\ldots 6]}}{\ert{\REPEATUNTIL{\ldots}{\iverson{x_2 \geq x_1}}}{0}} 
                 \tag{Table~\ref{table:ert}} \\
          \eeq & \ert{\PASSIGN{x_1}{\mathtt{Unif}[1{\ldots} 6]}}{
                    \frac{
                        1 + \ert{\PASSIGN{x_2}{\mathtt{Unif}[1{\ldots} 6]}}{\iverson{x_2 \geq x_1}}
                    }{
                        \wp{\PASSIGN{x_1}{\mathtt{Unif}[1{\ldots} 6]}}{\iverson{x_2 \geq x_1}}
                    }
                  }
                 \tag{Thm.~\ref{thm:bnl:repeat}} \\
          \eeq & \sum_{1 \leq i \leq 6} \nicefrac 1 6 \cdot \frac{1 + \sum_{1 \leq j \leq 6} \nicefrac 1 6 \cdot \iverson{j \geq i}}{\sum_{1 \leq j \leq 6} \nicefrac 1 6 \cdot \iverson{j \geq i}} 
                 \tag{Table~\ref{table:ert}} \\
          \eeq & 3.45~. \tag*{$\triangle$} 
  \end{align*}
\end{example}

\subsection{Bayesian Networks}
To reason about expected sampling times of Bayesian networks, it remains
to develop a sound translation from BNs with observations into equivalent $\bnl$ programs.
Before we present a formal translation, let us briefly recall BNs.
A BN is a probabilistic graphical model that is given by a directed acyclic graph.
Every node is a random variable and a directed edge between two nodes expresses a probabilistic dependency between these nodes.

As a running example, consider the BN depicted in Figure~\ref{fig:bn:example} (inspired by~\cite{DBLP:books/daglib/0023091}) that
models the mood of students after taking an exam.
The network contains four random variables. They represent the difficulty of the exam ($D$), the level of preparation of a student ($P$), the achieved grade ($G$), and the resulting mood ($M$).
For simplicity, let us assume that each random variable assumes either $0$ or $1$.
The underlying dependencies express that the mood of a student depends on the achieved grade which, in turn, depends on the difficulty of the exam and the amount of preparation before taking it.
Every node is accompanied by a conditional probability table that provides the probabilities of a node given the values of all the nodes it depends upon.
We can then use the BN to answer queries such as "What is the probability that a student is well--prepared for an exam ($P = 1$), but ends up with a bad mood ($M=0$)?"

\begin{figure}[t]
\begin{tikzpicture}

\node (d) at (-2,0) [ellipse, draw, fill=gray!20, minimum width=2.2cm, minimum height=1.0cm
] {\scriptsize Difficulty};
\node (i) at (2,0) [ellipse, draw, fill=gray!20, minimum width=2.2cm, minimum height=1.0cm
] {\scriptsize Preparation};
\node (g) at (0,-2) [ellipse, draw, fill=gray!20, minimum width=2.2cm, minimum height=1.0cm
] {\scriptsize Grade};
\node (m) at (0,-4) [ellipse, draw, fill=gray!20, minimum width=2.2cm, minimum height=1.0cm
] {\scriptsize Mood};

\draw[line width=1.0pt] (d) edge[->] (g) (i)edge[->](g) (g) edge[->](m);

\node at (-5,0) {\begin{tabular}{|c | c|} \hline
\rowcolor{gray!20}\scriptsize $D=0$ & \scriptsize $D=1$  \\ \hline
\scriptsize $0.6$ &\scriptsize  $0.4$ \\
\hline
 \end{tabular}};

 \node at (5,0) {\begin{tabular}{|c | c|} \hline
\rowcolor{gray!20}\scriptsize $P=0$ & \scriptsize $P=1$  \\ \hline
\scriptsize $0.7$ &\scriptsize  $0.3$ \\
\hline
 \end{tabular}};

  \node at (-4.15,-2.2) {\begin{tabular}{ | c | c | c | } \hline
\rowcolor{gray!20} \phantom{\scriptsize ja}&\scriptsize $G=0$&\scriptsize $G=1$ \\ \hline
\scriptsize $D=0$, $P=0$ & \scriptsize $0.95$ & \scriptsize $0.05$ \\
\hline
\scriptsize $D=1$, $P=1$ & \scriptsize $0.05$ & \scriptsize $0.95$ \\
\hline
\scriptsize $D=0$, $P=1$ & \scriptsize $0.5$ & \scriptsize $0.5$ \\
\hline
\scriptsize $D=1$, $P=0$ & \scriptsize $0.6$ & \scriptsize $0.4$ \\
\hline
\end{tabular}};

\node at (3.5,-4) {\begin{tabular}{ | c | c | c | } \hline
\rowcolor{gray!20} \phantom{\scriptsize ja}&\scriptsize $M=0$&\scriptsize $M=1$ \\ \hline
\scriptsize $G=0$ & \scriptsize $0.9$ & \scriptsize $0.1$ \\
\hline
\scriptsize $G=1$ & \scriptsize $0.3$ & \scriptsize $0.7$ \\
\hline
\end{tabular}};
 
\end{tikzpicture}
\caption{A Bayesian network}
\label{fig:bn:example}
\end{figure}
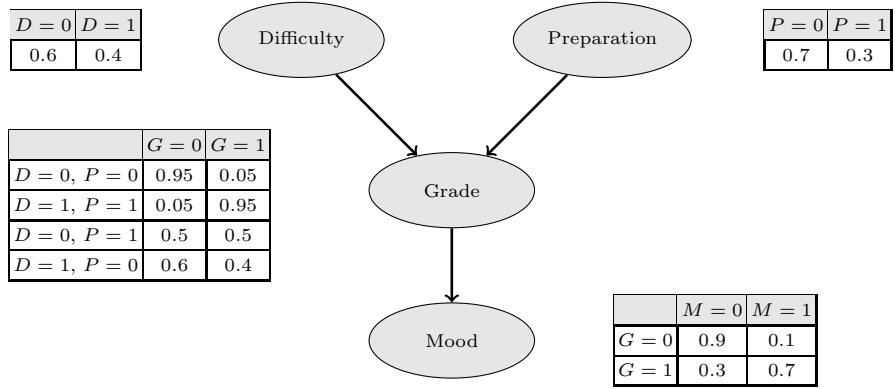

In order to translate BNs into equivalent \bnl programs, we need a formal representation first.
Technically, we consider \emph{extended} BNs in which nodes may additionally depend on inputs that are not represented by nodes in the network.
This allows us to define a compositional translation without modifying conditional probability tables.

Towards a formal definition of extended BNs, we use the following notation.
A tuple $(s_1,\ldots,s_k) \in S^{k}$ of length $k$ over some set $S$ is denoted by $\T{s}$. 
The empty tuple is $\EMPTYT$.
Moreover, for $1 \leq i \leq k$, the $i$-th element of tuple $\T{s}$ is given by $\El{\T{s}}{i}$.
To simplify the presentation, we assume that all nodes and all inputs are represented by natural numbers.

\begin{definition}\label{def:bayesian-network}
    An \emph{extended Bayesian network}, $\EBNS$ for short,  is a tuple $\EBN = (\NODES,\INPUTS,\EDGES,\VALUES,\DEP,\CPTSYM)$, where
    \begin{itemize}
        \item $\NODES \subseteq \Nats$ and $\INPUTS \subseteq \Nats$ are finite disjoint sets of \emph{nodes} and \emph{inputs}.
        \item $\EDGES \subseteq \NODES \times \NODES$ is a set of \emph{edges} such that $(\NODES,\EDGES)$ is a directed acyclic graph.
        \item $\VALUES$ is a finite set of possible \emph{values} that can be assigned to each node.
        \item $\DEP \colon \NODES \to (\NODES \cup \INPUTS)^{*}$ is a function assigning each node $v$ to an ordered sequence of \emph{dependencies}. That is,
                $\DEP(v) \eeq (u_{1}, \ldots, u_{m})$ such that $u_i < u_{i+1}$ ($1 \leq i < m$).
                Moreover, every dependency $u_j$ $(1 \leq j \leq m$) is either an input, i.e.\ $u_j \in \INPUTS$, or 
                a node with an edge to $v$, i.e.\ $u_j \in \NODES$ and $(u_j,v) \in \EDGES$.
      \item $\CPTSYM$ is a function mapping each node $v$ to its \emph{conditional probability table} $\CPT{v}$.
            That is, for $k = |\DEP(v)|$, $\CPT{v}$ is given by a function of the form
            \begin{align*}
                    \CPT{v} \,\colon\, \VALUES^{k} \To \VALUES \To [0,1]
                    \quad\text{such that}\quad
                    \sum_{\T{z} \in \VALUES^{k}, a \in \VALUES} \CPT{v}(\T{z})(a) \eeq 1. 
            \end{align*}
            Here, the $i$-th entry in a tuple $\T{z} \in \VALUES^{k}$ corresponds to the value assigned to the $i$-th entry in the sequence of dependencies $\DEP(v)$.
    \end{itemize}
    A \emph{Bayesian network} (BN) is an extended BN without inputs, i.e.\ $\INPUTS = \emptyset$.
    In particular, the dependency function is of the form $\DEP \colon \NODES \to \NODES^{*}$.
\end{definition}
\begin{example}
The formalization of our example BN (Figure~\ref{fig:bn:example}) is straightforward.
For instance, the dependencies of variable $G$ are given by $\DEP(G) = (D,P)$ (assuming $D$ is encoded by an integer less than $P$).
Furthermore, every entry in the conditional probability table of node $G$ corresponds to an evaluation of the function $\CPT{G}$.
For example, if $D = 1$, $P = 0$, and $G = 1$, we have $\CPT{G}(1,0)(1) = 0.4$.\hfill $\triangle$
\end{example}
\noindent
In general, the conditional probability table $\CPTSYM$ determines the conditional probability distribution of each node $v \in \NODES$ given the nodes and inputs it depends on.
Formally, we interpret an entry in a conditional probability table as follows:
\begin{align*}
    \PROB{v=a \,|\, \DEP(v) = \T{z}} \eeq \CPT{v}(\T{z})(a)~,
\end{align*}
where $v \in \NODES$ is a node, $a \in \VALUES$ is a value, and $\T{z}$ is a tuple of values of length $|\DEP(v)|$.
Then, by the chain rule, the joint probability of a BN is given by the product of its conditional probability tables (cf.~\cite{bishop}).

\begin{definition}\label{def:joint-distribution}
  Let $\BN = (\NODES,\INPUTS,\EDGES,\VALUES,\DEP,\CPTSYM)$ be an extended Bayesian network.
  Moreover, let $W \subseteq \NODES$ be a downward closed\footnote{$W$ is downward closed if $v \in W$ and $(u,v) \in \EDGES$ implies $u \in \EDGES$.} set of nodes.
  With each $w \in W \cup \INPUTS$, we associate a fixed value $\BVAL{w} \in \VALUES$. This notation is lifted pointwise to tuples of nodes and inputs.
  Then the \emph{joint probability} in which nodes in $W$ assume values $\BVAL{W}$ is given by
  \begin{align*}
    \PROB{ W = \BVAL{W} }
    \eeq \prod_{v \in W} \PROB{v = \BVAL{v} \,|\, \DEP(v) = \BVAL{\DEP(v)}}
    \eeq \prod_{v \in W} \CPT{v}(\BVAL{\DEP(v)})(\BVAL{v})~.
  \end{align*}
  The conditional joint probability distribution of a set of nodes $W$, given observations on a set of nodes $O$, is then given by the quotient $\nicefrac{\PROB{W = \BVAL{W}}}{\PROB{O = \BVAL{O}}}$.
\end{definition}
\noindent
For example, the probability of a student having a bad mood, i.e.\ $M = 0$, after getting a bad grade ($G=0$) for an easy exam ($D=0$) given that she was well--prepared, i.e.\ $P = 1$, is
\begin{align*}
        \PROB{D{=}0,G{=}0,M{=}0 ~|~ P {=} 1} & \eeq \frac{\PROB{D{=}0,G{=}0,M{=}0,P{=}1}}{\PROB{P {=} 1}} \\
                                             & \eeq \frac{0.9 \cdot 0.5 \cdot 0.6 \cdot 0.3}{0.3} \eeq 0.27~.
\end{align*}

\subsection{From Bayesian Networks to \textbf{\textsf{BNL}}}

We now develop a compositional translation from EBNs into \bnl programs.
Throughout this section, let $\BN = (\NODES,\INPUTS,\EDGES,\VALUES,\DEP,\CPTSYM)$ be a fixed EBN.
Moreover, with every node or input $v \in \NODES \cup \INPUTS$ we associate a program variable $\TVAR{v}$.

We proceed in three steps:
First, \emph{every node together with its dependencies} is translated into a \emph{block} of a $\bnl$ program.
These blocks are then composed into a single $\bnl$ program that captures the whole BN.
Finally, we implement conditioning by means of rejection sampling.

\paragraph{Step 1:}
We first present the atomic building blocks of our translation.
Let $v \in \NODES$ be a node.
Moreover, let $\T{z} \in \VALUES^{|\DEP(v)|}$ be an evaluation of the dependencies of $v$.
That is, $\T{z}$ is a tuple that associates a value with every node and input that $v$ depends on (in the same order as $\DEP(v)$).
For every node $v$ and evaluation of its dependencies $\T{z}$, we define a corresponding guard and a random assignment:
\begin{align*}
         \TGUARD{\BN}{v}{\T{z}} & \eeq \bigwedge_{1 \leq i \leq |\DEP(v)|} \TVAR{\El{\DEP(v)}{i}} = \El{\T{z}}{i} \\
        \TASSIGN{\BN}{v}{\T{z}} & \eeq \PASSIGN{\TVAR{v}}{ \sum_{a \in \VALUES} \CPT{v}(\T{z})(a) \cdot \langle a \rangle }
\end{align*}
Note that $\El{\DEP(v)}{i}$ is the $i$-th element from the sequence of nodes $\DEP(v)$.
\begin{example}
  Continuing our previous example (see Figure~\ref{fig:intro:bn}), assume we fixed the node $v = G$.
  Moreover, let $\T{z} = (1,0)$ be an evaluation of $\DEP(v) = (S,R)$.
  Then the guard and assignment corresponding to $v$ and $\T{z}$ are given by:
  \begin{align*}
           \TGUARD{\BN}{G}{(1,0)} \eeq & \TVAR{D} = 1 ~\wedge \TVAR{P} = 0,~\text{and} \\
          \TASSIGN{\BN}{G}{(1,0)} \eeq & \PASSIGN{\TVAR{G}}{ 0.6 \cdot \langle 0 \rangle + 0.4 \cdot \langle 1 \rangle}.
          \tag*{$\triangle$}
  \end{align*}
\end{example}
\noindent
We then translate every node $v \in \NODES$ into a program block that uses guards to determine the rows in the conditional probability table under consideration.
After that, the program samples from the resulting probability distribution using the previously constructed assignments.
In case a node does neither depend on other nodes nor input variables we omit the guards.
Formally, 
\belowdisplayskip=0pt
\begin{align*}
\TBLOCK{\BN}{v} \eeq 
  \begin{cases}
    ~~~~~~~~ \TASSIGN{\BN}{v}{\varepsilon} & ~\text{if}~ |\DEP(v)| = 0 \\[1em]
    \begin{array}{l}
    	\IF{\TGUARD{\BN}{v}{\T{z_1}}} \\
	\qquad \TASSIGN{\BN}{v}{\T{z_1}}\} \\
	\ELSE \{ \IF{\TGUARD{\BN}{v}{\T{z_2}}} \\
	\qquad \TASSIGN{\BN}{v}{\T{z_2}} \}\\
	~ \ldots \} \ELSE \{ \\
	\qquad \TASSIGN{\BN}{v}{\T{z_m}} \} \ldots \}
    \end{array} & \begin{array}{l}\text{if}~ |\DEP(v)| = k > 0 \\ ~\text{and}~ \VALUES^{k} = \{\T{z_1},\ldots,\T{z_m}\}~.\end{array}
  \end{cases}
\end{align*}
\normalsize
\begin{remark}
  The guards under consideration are conjunctions of equalities between variables and literals.
  We could thus use a more efficient translation of conditional probability tables by adding
  a \texttt{switch-case} statement to our probabilistic programming language. Such a statement is of the form
  \begin{align*}
          \texttt{switch}(\T{x})~\{~\texttt{case}~\T{a}_1: C_1~\texttt{case}~a_2: C_2~\ldots~\texttt{default}: C_m \}~,
  \end{align*}
  where $\T{x}$ is a tuple of variables, and $\T{a}_1, \ldots \T{a}_{m-1}$ are tuples of rational numbers of the same length as $\T{x}$.
  With respect to the $\wpsymbol$ semantics, a \texttt{switch-case} statement is syntactic sugar for nested \texttt{if-then-else} blocks as used in the above translation.
  However, the runtime model of a \texttt{switch-case} statement requires just a single guard evaluation ($\varphi$) instead of potentially multiple guard evaluations when evaluating nested \texttt{if-then-else} blocks.
  Since the above adaption is straightforward, we opted to use nested \texttt{if-then-else} blocks to keep our programming language simple and allow, in principle, more general guards.
  \hfill $\triangle$
\end{remark}
\paragraph{Step 2}
The next step is to translate a complete EBN into a $\bnl$ program. 
To this end, we compose the blocks obtained from each node starting at the roots of the network.
That is, all nodes that contain no incoming edges.
Formally,
\begin{align*}
  \ROOTS{\BN} = \{ v \in \NODES_{\BN} ~|~ \neg \exists u \in \NODES_{\BN} \colon (u,v) \in \EDGES_{\BN} \}~.
\end{align*}
After translating every node in the network, we remove them from the graph, i.e.\ every root becomes an input, and proceed with the translation
until all nodes have been removed.
More precisely, given a set of nodes $S \subseteq \NODES$, the extended BN $\BN \setminus S$ obtained by removing $S$ from $\BN$ is defined as
\begin{align*}
        \BN \setminus S \eeq
        \left( \NODES \setminus S,\, \INPUTS \cup S,\, \EDGES \setminus (\NODES \times S \cup S \times \NODES),\, \DEP,\, \CPTSYM \right)~.
\end{align*}
With these auxiliary definitions readily available, an extended BN $\BN$ is translated into a $\bnl$ program as follows:
\begin{align*}
  \TB{\BN}\eeq 
  \begin{cases}
    \TBLOCK{\BN}{r_1};\ldots;\TBLOCK{\BN}{r_m} & ~\text{if}~ \ROOTS{\BN} = \{r_1,\ldots,r_m\} = \NODES \\
    \TBLOCK{\BN}{r_1};\ldots;\TBLOCK{\BN}{r_m}; & ~\text{if}~ \ROOTS{\BN} = \{r_1,\ldots,r_m\} \subsetneqq \NODES  \\
    \TB{\BN \setminus \ROOTS{\BN}} 
  \end{cases}
\end{align*}
\paragraph{Step 3}
To complete the translation, it remains to account for observations.
Let $\BNCOND : \NODES \to \VALUES \cup \{ \bot \}$
be a function mapping every node either to an observed value in $\VALUES$ or to $\bot$.
The former case is interpreted as an observation that node $v$ has value $\BNCOND(v)$.
Otherwise, i.e.\ if $\BNCOND(v) = \bot$, the value of node $v$ is \emph{not observed}.
We collect all observed nodes in the set $\Obs = \{ v \in \NODES ~|~ \BNCOND(v) \neq \bot \}$. 
It is then natural to incorporate conditioning into our translation by applying rejection sampling: We repeatedly execute a \bnl program until every observed node has the desired value $\BNCOND(v)$.
In the presence of observations, we translate the extended BN $\BN$ into a $\bnl$ program as follows:
\belowdisplayskip=0pt
\begin{align*}
    \TBN{\BN}{\BNCOND} \eeq \REPEATUNTIL{\TB{\BN}}{\bigwedge_{v \in \Obs} \TVAR{v} = \BNCOND(v)}
\end{align*}
\normalsize
%
\begin{figure}[t]
    \centering
    \begin{minipage}{0.3\textwidth}
        \begin{align*}
                1\quad&\REPEAT \\
                2\quad&\qquad \PASSIGN{\TVAR{D}}{0.6 \cdot \valueIn{0} + 0.4 \cdot \valueIn{1}}; \\
                3\quad&\qquad \PASSIGN{\TVAR{P}}{0.7 \cdot \valueIn{0} + 0.3 \cdot \valueIn{1}} \\
                4\quad&\qquad \IFBRACK{\TVAR{D}=0 \wedge \TVAR{P}=0} \\
                5\quad&\qquad \qquad \PASSIGN{\TVAR{G}}{0.95 \cdot \valueIn{0} + 0.05 \cdot \valueIn{1}} \\
                6\quad&\qquad \} \ELSE \IFBRACK{\TVAR{D}=1 \wedge \TVAR{P}=1} \\
                7\quad&\qquad \qquad \PASSIGN{\TVAR{G}}{0.05 \cdot \valueIn{0} + 0.95 \cdot \valueIn{1}} \\
                8\quad&\qquad \} \ELSE \IFBRACK{\TVAR{D}=0 \wedge \TVAR{P}=1} \\
                9\quad&\qquad \qquad \PASSIGN{\TVAR{G}}{0.5 \cdot \valueIn{0} + 0.5 \cdot \valueIn{1}}
        \end{align*}
    \end{minipage}
    ~
    \begin{minipage}{0.5\textwidth}
        \begin{align*}
               10\quad&\qquad \} \ELSE \{ \\
               11\quad&\qquad \qquad \PASSIGN{\TVAR{G}}{0.6 \cdot \valueIn{0} + 0.4 \cdot \valueIn{1}} \\
               12\quad&\qquad \}; \\
               13\quad&\qquad \IFBRACK{\TVAR{G}=0} \\
               14\quad&\qquad \qquad \PASSIGN{\TVAR{M}}{0.9 \cdot \valueIn{0} + 0.1 \cdot \valueIn{1}} \\
               15\quad&\qquad \} \ELSE \{ \\
               16\quad&\qquad \qquad \PASSIGN{\TVAR{M}}{0.3 \cdot \valueIn{0} + 0.7 \cdot \valueIn{1}} \\
               17\quad&\qquad \} \\
               18\quad&\UNTIL{\TVAR{P}=1}
        \end{align*}
    \end{minipage}
    \caption{The $\bnl$ program $C_{\textit{mood}}$ obtained from the BN in Figure~\ref{fig:bn:example}.}
\label{fig:bn:translation}
\end{figure}
\begin{example}
  Consider, again, the BN $\BN$ depicted in Figure~\ref{fig:bn:example}.
  Moreover, assume we observe $P = 1$. 
  Hence, the conditioning function $\BNCOND$ is given by $\BNCOND(P) = 1$ and $\BNCOND(v) = \bot$ for $v \in \{D,G,M\}$.
  Then the translation of $\BN$ and $\BNCOND$, i.e.\ $\TBN{\BN}{\BNCOND}$, is the $\bnl$ program $C_{\textit{mood}}$ depicted in Figure~\ref{fig:bn:translation}.\hfill $\triangle$
\end{example}
\noindent
Since our translation yields a $\bnl$ program for any given BN, we can compositionally compute a closed form for the expected simulation time of a BN.
This is an immediate consequence of Corollary~\ref{thm:bnl:compositional}.

We still have to prove, however, that our translation is sound, i.e.\ the conditional joint probabilities inferred from a BN coincide 
with the (conditional) joint probabilities from the corresponding $\bnl$ program.
Formally, we obtain the following soundness result.
\begin{theorem}[Soundness of Translation]
\label{thm:soundness-bayesian-network-to-bnl}
Let $\BN = (\NODES,\INPUTS,\EDGES,\VALUES,\DEP,\CPTSYM)$ be a BN and $\BNCOND : \NODES \to \VALUES \,\cup\, \{\bot\}$ be a function determining the observed nodes.
For each node and input $v$, let $\BVAL{v} \in \VALUES$ be a fixed value associated with $v$. 
In particular, we set $\BVAL{v} = \BNCOND(v)$ for each observed node $v \in \Obs$.
Then
\begin{align*}
        \wp{\TBN{\BN}{\BNCOND}}{\iverson{\bigwedge_{v \in \NODES \setminus O} \TVAR{v} = \BVAL{v}}}
        \eeq
        \frac{
            \PROB{ \bigwedge_{v \in \NODES} v = \BVAL{v} }
        }{
            \PROB{ \bigwedge_{o \in O} o = \BVAL{o} }
        }
        ~. 
\end{align*}
\end{theorem}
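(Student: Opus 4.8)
The plan is to reduce the claim to two ingredients: a weakest--preexpectation analogue of the repeat--until rule from Theorem~\ref{thm:bnl:repeat}, and a \emph{sampling soundness} lemma stating that the loop body $\TB{\BN}$ computes exactly the joint distribution of the network. First I would establish the $\wpsymbol$--counterpart of Theorem~\ref{thm:bnl:repeat}: for a sequence of blocks $\Blk$ and a guard $\psi$,
\begin{align*}
  \wp{\REPEATUNTIL{\Blk}{\psi}}{f} \eeq \frac{\wp{\Blk}{\psiB \cdot f}}{\wp{\Blk}{\psiB}}~.
\end{align*}
This follows by exactly the manipulation used for Theorem~\ref{thm:bnl:repeat}, but carried out for $\wpsymbol$: unfold the loop into $\COMPOSE{\Blk}{\WHILEDO{\neg\psi}{\Blk}}$, apply the closed form for $f$--i.i.d.\ while loops (Theorem~\ref{thm:newrule-wpsemantics}, whose premise is discharged by Lemma~\ref{lem:apply-rule-relation}.3), use $\wp{\Blk}{1}=1$ so that $1-\wp{\Blk}{\npsiB}=\wp{\Blk}{\psiB}$, and pull the unaffected fraction out of the leading $\wp{\Blk}{\cdot}$ via Lemma~\ref{lem:newrule-main} together with linearity (Theorem~\ref{thm:basic-prop}.1).

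The heart of the proof is the sampling soundness lemma, which I would state as: for every EBN $\BN$ and every choice of values $\BVAL{v}$ for nodes, evaluated in any state whose input variables carry the values $\BVAL{w}$ ($w\in\INPUTS$),
\begin{align*}
  \wp{\TB{\BN}}{\iverson{\textstyle\bigwedge_{v \in \NODES} \TVAR{v} = \BVAL{v}}}
  \eeq \prod_{v \in \NODES} \CPT{v}(\BVAL{\DEP(v)})(\BVAL{v})
  \eeq \PROB{\textstyle\bigwedge_{v \in \NODES} v = \BVAL{v}}~.
\end{align*}
I would prove this by induction on $\TB{\BN}$ following the structure of the translation, i.e.\ on the number of node--removal steps. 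In the base case $\ROOTS{\BN}=\NODES$, every node depends only on inputs, the program is a sequence of blocks modifying pairwise distinct variables, and a routine computation shows each block $\TBLOCK{\BN}{v}$ maps $\iverson{\TVAR{v}=\BVAL{v}}$ to the correct table entry $\CPT{v}(\BVAL{\DEP(v)})(\BVAL{v})$ (the nested \texttt{if-then-else} selects the row matching the input values); sequential composition and Lemma~\ref{lem:newrule-main} then factor the $\wpsymbol$ into the product. In the inductive step $\ROOTS{\BN}=\{r_1,\dots,r_m\}\subsetneq\NODES$, I would split the postexpectation as $\iverson{\bigwedge_{v\in\ROOTS{\BN}}\TVAR{v}=\BVAL{v}}\cdot\iverson{\bigwedge_{v\in\NODES\setminus\ROOTS{\BN}}\TVAR{v}=\BVAL{v}}$, note that the root factor is unaffected by $\TB{\BN\setminus\ROOTS{\BN}}$, apply Lemma~\ref{lem:newrule-main} and the induction hypothesis on $\BN\setminus\ROOTS{\BN}$ (where roots have become inputs), and finally push the resulting expectation back through the root blocks.

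The step I expect to be the main obstacle is this last bookkeeping in the inductive step: the product delivered by the induction hypothesis depends on the \emph{root variables} (through the dependency tuples $\BVAL{\DEP(v)}$ of non--root nodes that point to roots), so it is \emph{not} literally unaffected by the root blocks. The key maneuver is that, on the support of the root indicator $\iverson{\bigwedge_i\TVAR{r_i}=\BVAL{r_i}}$, every such root component equals $\BVAL{r_i}$; hence I may replace the state--dependent product by the genuinely unaffected constant $\prod_{v\in\NODES\setminus\ROOTS{\BN}}\CPT{v}(\BVAL{\DEP(v)})(\BVAL{v})$, factor it out with Lemma~\ref{lem:newrule-main}, and compute $\wp{\TBLOCK{\BN}{r_1};\dots;\TBLOCK{\BN}{r_m}}{\iverson{\bigwedge_i\TVAR{r_i}=\BVAL{r_i}}}=\prod_i\CPT{r_i}(\BVAL{\DEP(r_i)})(\BVAL{r_i})$ as in the base case (roots carry no edges among themselves). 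Multiplying the two products reassembles $\prod_{v\in\NODES}\CPT{v}(\BVAL{\DEP(v)})(\BVAL{v})$.

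Finally I would assemble the theorem. Applying the $\wpsymbol$ repeat--until rule to $\Blk=\TB{\BN}$, $\psi=\bigwedge_{o\in\Obs}\TVAR{o}=\BNCOND(o)$, and $f=\iverson{\bigwedge_{v\in\NODES\setminus O}\TVAR{v}=\BVAL{v}}$, the numerator $\wp{\TB{\BN}}{\psiB\cdot f}$ becomes $\wp{\TB{\BN}}{\iverson{\bigwedge_{v\in\NODES}\TVAR{v}=\BVAL{v}}}$ (using $\BVAL{o}=\BNCOND(o)$), which the sampling lemma evaluates to $\PROB{\bigwedge_{v\in\NODES}v=\BVAL{v}}$. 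For the denominator $\wp{\TB{\BN}}{\psiB}$, I would marginalize: rewrite $\psiB$ as the finite sum $\sum_{\T{a}}\iverson{\bigwedge_{o\in\Obs}\TVAR{o}=\BVAL{o}}\cdot\iverson{\bigwedge_{u\in\NODES\setminus O}\TVAR{u}=\El{\T{a}}{u}}$ over all assignments $\T{a}$ to the unobserved nodes, push the sum through $\wpsymbol$ by linearity (Theorem~\ref{thm:basic-prop}.1), apply the sampling lemma to each term, and sum the resulting joint probabilities over all completions to obtain the marginal $\PROB{\bigwedge_{o\in O}o=\BVAL{o}}$. Dividing numerator by denominator yields the stated conditional joint probability and closes the proof.
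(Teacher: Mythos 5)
Your proposal is correct and follows the same two--stage architecture as the paper: a weakest--preexpectation version of the repeat--until rule (your first lemma is literally the paper's Lemma~\ref{lem:bn:loop}, proved the same way from Theorem~\ref{thm:newrule-wpsemantics}, Lemma~\ref{lem:apply-rule-relation} and Lemma~\ref{lem:newrule-main}), combined with a soundness lemma for the loop body $\TB{\BN}$ proved by induction along the root--removal structure of the translation. Where you genuinely diverge is in the formulation of that body lemma. The paper's Lemma~\ref{lem:bn:structure} is stated for an \emph{arbitrary} subset $M \subseteq \NODES$ and returns $\sum_{\T{y}} \iverson{\TVAR{\T{u}} = \T{y}} \cdot \PROB{\bigwedge_{v \in M} v = \BVAL{v} \mid \T{u} = \T{y}}$, i.e.\ it carries the dependence on the inputs explicitly as a sum of indicators; both the numerator ($M = \NODES$) and the denominator ($M = O$, where the marginalization is absorbed into the conditional--probability notation) then drop out directly. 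You instead prove the lemma only for $M = \NODES$ ``evaluated in a state whose inputs carry $\BVAL{w}$'' and recover the denominator by writing $\iverson{\psi}$ as a finite sum over completions and invoking linearity. Both work; the paper's version buys a uniform treatment of numerator and denominator at the cost of a heavier statement, yours buys a leaner lemma at the cost of an extra marginalization step. Two places in your sketch conceal real (if routine) work that the paper's indicator--sum formulation handles automatically: (i) replacing the state--dependent product delivered by the induction hypothesis with a constant ``on the support of the root indicator'' also requires that the \emph{input} variables of $\BN$ are pinned, which holds only because the root blocks do not modify them and because $\wpsymbol$ is an integral against $\semantics{\cdot}{\sigma}$, so functions agreeing on the support of the final distribution have equal preexpectations; the same support argument is needed for your marginalization identity, which is false as an equality of expectations on all of $\Sigma$. (ii) You do not treat the case $\wp{\TB{\BN}}{\iverson{\psi}}(\sigma) = 0$, where the manipulation $1 - \wp{\TB{\BN}}{\iverson{\neg\psi}} = \wp{\TB{\BN}}{\iverson{\psi}}$ produces $\nicefrac{0}{0}$; the paper resolves this with a separate case using the convention $\nicefrac{0}{0} = 0$ and the observation that the numerator also vanishes. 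Neither point breaks your argument, but both need to be said.
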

\begin{proof}
Without conditioning, i.e.\ $\Obs = \emptyset$, the proof proceeds by induction on the number of nodes of $\BN$.
With conditioning, we additionally apply Theorems~\ref{thm:newrule-wpsemantics} and~\ref{thm:bnl:repeat} to deal with loops introduced by observed nodes.
See Appendix~\ref{proof-thm:soundness-bayesian-network-to-bnl}.
\qed
\end{proof}
%

%
%
\begin{example}[Expected Sampling Time of a BN]
Consider, again, the BN $\BN$ in Figure~\ref{fig:bn:example}.
Moreover, recall the corresponding program $C_{\textit{mood}}$ derived from $\BN$ in Figure~\ref{fig:bn:translation}, where we observed $P=1$.
By Theorem~\ref{thm:soundness-bayesian-network-to-bnl} we can also determine the probability that a student who got a bad grade in an easy exam was well--prepared by means of weakest precondition reasoning.
This yields
\begin{align*}
        & \wp{C_{\textit{mood}}}{\iverson{\TVAR{D}=0 \wedge \TVAR{G} = 0 \wedge \TVAR{M} = 0}} \\
        & \eeq
        \frac{\PROB{D=0,G=0,M=0,P=1}}{\PROB{P = 1}} 
        \eeq 0.27~.
\end{align*}
Furthermore, by Corollary~\ref{thm:bnl:compositional}, it is straightforward to determine the expected time to obtain a single sample of $\BN$ that satisfies the observation $P = 1$:
\begin{align*}
    \ert{C_{\textit{mood}}}{0} 
    \eeq \frac{
            1 + \ert{C_{\textit{loop-body}}}{0}
         }{
            \wp{C_{\textit{loop-body}}}{\iverson{P=1}}
         }
    \eeq 23.4 + \nicefrac{1}{15}
    \eeq 23.4\bar{6}~.
    \tag*{$\triangle$}
\end{align*}
\end{example}


\section{Implementation}
\label{sec:implementation}

We implemented a prototype in \textsf{Java} to analyze expected sampling times of Bayesian networks.
More concretely, our tool takes as input a BN together with observations in the popular Bayesian Network Interchange Format.\footnote{\url{http://www.cs.cmu.edu/~fgcozman/Research/InterchangeFormat/}}
The BN is then translated into a \bnl program as shown in Section~\ref{sec:applications}.
Our tool applies the $\ertsymbol$--calculus together with our proof rules developed in Section~\ref{sec:ert-rules} to compute the exact expected runtime of the \bnl program.

The size of the resulting \bnl program is linear in the total number of rows of all conditional probability tables in the BN.
The program size is thus \emph{not} the bottleneck of our analysis.
As we are dealing with an NP--hard problem~\cite{DBLP:journals/ai/Cooper90,DBLP:journals/ai/DagumL93}, it is not surprising that our algorithm has a worst--case exponential time complexity.
However, also the space complexity of our algorithm is exponential in the worst case:
As an expectation is propagated backwards through an $\texttt{if}$--clause of the \bnl program, the size of the expectation is potentially multiplied.
This is also the reason that our analysis runs out of memory on some benchmarks.

We evaluated our implementation on the \emph{largest} BNs in the Bayesian Network Repository~\cite{scutari2012bayesian}
that consists --- to a large extent --- of real--world BNs including expert systems for, e.g., 
electromyography (\texttt{munin})~\cite{andreassen1989munin},
hematopathology diagnosis (\texttt{hepar2})~\cite{onisko1998probabilistic},
weather forecasting (\texttt{hailfinder})~\cite{abramson1996hailfinder}, and printer troubleshooting in Windows 95 (\texttt{win95pts})~\cite[Section 5.6.2]{ramanna2013emerging}.
For a evaluation of \emph{all} BNs in the repository, see Appendix.

All experiments were performed on an HP BL685C G7. 
Although up to 48 cores with 2.0GHz were available, only one core was used apart from \textsf{Java}'s garbage collection.
The \textsf{Java} virtual machine was limited to 8GB of RAM.

Our experimental results are shown in Table~\ref{tbl:experiments}.
\begin{table}[t]
\center
\begin{adjustbox}{max width=\textwidth}
\renewcommand{\arraystretch}{1.1}
\begin{tabular}{lr@{\quad}r@{\quad}l@{\quad}|@{\quad}r@{\quad}r@{\quad}l@{\quad}|@{\quad}r@{\quad}r@{\quad}l}
        \textbf{BN} 
        & \textbf{\#obs} & \textbf{Time} & \textbf{EST} 
        & \textbf{\#obs} & \textbf{Time} & \textbf{EST} 
        & \textbf{\#obs} & \textbf{Time} & \textbf{EST} \\
    \hline
    \hline
    \texttt{earthquake} & \multicolumn{9}{l}{\emph{\#nodes: 5, \#edges: 4, avg. Markov Blanket: 2.00}} \\
    \hline
    & $0$ & $0.09$ & $8.000 \cdot 10^0$ 
    & $2$ & $0.23$ & $9.276 \cdot 10^1$
    & $4$ & $0.24$ & $1.857 \cdot 10^2$ \\
    \hline
    \texttt{cancer} & \multicolumn{9}{l}{\emph{\#nodes: 5, \#edges: 4, avg. Markov Blanket: 2.00}} \\
    \hline
    & $0$ & $0.09$ & $8.000 \cdot 10^0$ 
    & $2$ & $0.22$ & $1.839 \cdot 10^1$
    & $5$ & $0.20$ & $5.639 \cdot 10^2$ \\
    \hline
    \texttt{survey} & \multicolumn{9}{l}{\emph{\#nodes: 6, \#edges: 6, avg. Markov Blanket: 2.67}} \\
    \hline
    & $0$ & $0.09$ & $1.000 \cdot 10^1$ 
    & $2$ & $0.21$ & $2.846 \cdot 10^2$
    & $5$ & $0.22$ & $9.113 \cdot 10^3$ \\
    \hline
    \texttt{asia} & \multicolumn{9}{l}{\emph{\#nodes: 8, \#edges: 8, avg. Markov Blanket: 2.50}} \\
    \hline
    & $0$ & $0.26$ & $1.400\cdot 10^1$ 
    & $2$ & $0.25$ & $3.368 \cdot 10^2$
    & $6$ & $0.25$ & $8.419 \cdot 10^4$ \\
    \hline
    \texttt{sachs} & \multicolumn{9}{l}{\emph{\#nodes: 11, \#edges: 17, avg. Markov Blanket: 3.09}} \\
    \hline
    & $0$ & $0.13$ & $2.000\cdot 10^1$ 
    & $3$ & $0.24$ & $7.428 \cdot 10^2$
    & $6$ & $2.72$ & $5.533 \cdot 10^7$ \\
    \hline
    \texttt{insurance} & \multicolumn{9}{l}{\emph{\#nodes: 27, \#edges: 52, avg. Markov Blanket: 5.19}} \\
    \hline
    & $0$ & $0.17$ & $5.200\cdot 10^1$ 
    & $3$ & $0.31$ & $5.096 \cdot 10^3$
    & $5$ & $0.91$ & $1.373 \cdot 10^5$ \\
    \hline
    \texttt{alarm} & \multicolumn{9}{l}{\emph{\#nodes: 37, \#edges: 46, avg. Markov Blanket: 3.51}} \\
    \hline
    & $0$ & $0.14$ & $6.200 \cdot 10^1$ 
    & $2$ & MO & ---
    & $6$ & $40.47$ & $3.799 \cdot 10^5$ \\
    \hline
    \texttt{barley} & \multicolumn{9}{l}{\emph{\#nodes: 48, \#edges: 84, avg. Markov Blanket: 5.25}} \\
    \hline
    & $0$ & $0.46$ & $8.600 \cdot 10^1$ 
    & $2$ & $0.53$ & $5.246 \cdot 10^4$ 
    & $5$ & MO & --- \\
    \hline
    \texttt{hailfinder} & \multicolumn{9}{l}{\emph{\#nodes: 56, \#edges: 66, avg. Markov Blanket: 3.54}} \\
    \hline
    & $0$ & $0.23$ & $9.500 \cdot 10^1$ 
    & $5$ & $0.63$ & $5.016 \cdot 10^5$ 
    & $9$ & $0.46$ & $9.048 \cdot 10^6$ \\
    \hline
    \texttt{hepar2} & \multicolumn{9}{l}{\emph{\#nodes: 70, \#edges: 123, avg. Markov Blanket: 4.51}} \\
    \hline
    & $0$ & $0.22$ & $1.310 \cdot 10^2$ 
    & $1$ & $1.84$ & $1.579 \cdot 10^2$ 
    & $2$ & MO & --- \\
    \hline
    \texttt{win95pts} & \multicolumn{9}{l}{\emph{\#nodes: 76, \#edges: 112, avg. Markov Blanket: 5.92}} \\
    \hline
    & $0$ & $0.20$ & $1.180 \cdot 10^2$ 
    & $1$ & $0.36$ & $2.284 \cdot 10^3$ 
    & $3$ & $0.36$ & $4.296 \cdot 10^5$ \\
    & $7$ & $0.91$ & $1.876 \cdot 10^6$ 
    & $12$ & $0.42$ & $3.973 \cdot 10^7$ 
    & $17$ & $61.73$ & $1.110 \cdot 10^{15}$ \\
    \hline
    \texttt{pathfinder} & \multicolumn{9}{l}{\emph{\#nodes: 135, \#edges: 200, avg. Markov Blanket: 3.04}} \\
    \hline
    & $0$ & $0.37$ & $217$ 
    & $1$ & $0.53$ & $1.050 \cdot 10^4$ 
    & $3$ & $31.31$ & $2.872 \cdot 10^4$ \\
    & $5$ & MO & ---
    & $7$ & $5.44$ & $\infty$ 
    & $7$ & $480.83$ & $\infty$ \\
    \hline
    \texttt{andes} & \multicolumn{9}{l}{\emph{\#nodes: 223, \#edges: 338, avg. Markov Blanket: 5.61}} \\
    \hline
    & $0$ & $0.46$ & $3.570 \cdot 10^2$ 
    & $1$ & MO & --- 
    & $3$ & $1.66$ & $5.251 \cdot 10^3$ \\
    & $5$ & $1.41$ & $9.862 \cdot 10^3$ 
    & $7$ & $0.99$ & $8.904 \cdot 10^4$ 
    & $9$ & $0.90$ & $6.637 \cdot 10^5$ \\
    \hline
    \texttt{pigs} & \multicolumn{9}{l}{\emph{\#nodes: 441, \#edges: 592, avg. Markov Blanket: 3.66}} \\
    \hline
    & $0$ & $0.57$ & $7.370 \cdot 10^2$ 
    & $1$ & $0.74$ & $2.952 \cdot 10^3$ 
    & $3$ & $0.88$ & $2.362 \cdot 10^3$ \\
    & $5$ & $0.85$ & $1.260 \cdot 10^5$ 
    & $7$ & $1.02$ & $1.511 \cdot 10^6$ 
    & $8$ & MO & --- \\
    \hline
    \texttt{munin} & \multicolumn{9}{l}{\emph{\#nodes: 1041, \#edges: 1397, avg. Markov Blanket: 3.54}} \\
    \hline
    & $0$ & $1.29$ & $1.823 \cdot 10^3$ 
    & $1$ & $1.47$ & $3.648 \cdot 10^4$ 
    & $3$ & $1.37$ & $1.824 \cdot 10^7$ \\
    & $5$ & $1.43$ & $\infty$ 
    & $9$ & $1.79$ & $1.824 \cdot 10^{16}$ 
    & $10$ & $65.64$ & $1.153 \cdot 10^{18}$ \\
    \hline
\end{tabular}
\renewcommand{\arraystretch}{1}
\end{adjustbox}
\vspace{.25\baselineskip}
\caption{Experimental results. Time is in seconds. MO\ denotes out of memory.}
\label{tbl:experiments}
\end{table}
The number of nodes of the considered BNs ranges from 56 to 1041.
For each Bayesian network, we computed the expected sampling time (EST) for different collections of observed nodes (\#obs).
Furthermore, Table~\ref{tbl:experiments} provides the \emph{average Markov Blanket size}, i.e. the average number of parents, children and children's parents of nodes in the BN~\cite{pearl1985bayesian},
as an indicator measuring how independent nodes in the BN are.

Observations were picked at random.
Note that the time required by our prototype varies depending on both the number of observed nodes and the actual observations.
Thus, there are cases in which we run out of memory although the total number of observations is small.

In order to obtain an understanding of what the EST corresponds to in actual execution times on a real machine, 
we also performed simulations for the \texttt{win95pts} network.
More precisely, we generated \textsf{Java} programs from this network analogously to the translation in Section~\ref{sec:applications}.
This allowed us to approximate that our \textsf{Java} setup can execute $9.714\cdot 10^6$ steps (in terms of EST) per second.

For the \texttt{win95pts} with 17 observations, an EST of $1.11 \cdot 10^{15}$ then corresponds to an expected time of approximately \emph{$3.6$ years} in order to obtain a \emph{single} valid sample.
We were additionally able to find a case with 13 observed nodes where our tool discovered within 0.32 seconds an EST that corresponds to approximately $4.3$ \emph{million years}.
In contrast, exact inference using variable elimination was almost instantaneous.
This demonstrates that knowing expected sampling times upfront can indeed be beneficial when selecting an inference method.

\section{Conclusion}\label{sec:conclusion}

We presented a syntactic notion of independent and identically distributed probabilistic loops and derived dedicated proof rules to determine exact expected outcomes and runtimes of such loops.
These rules do not require any user--supplied information, such as invariants, (super)martingales, etc.

Moreover, we isolated a syntactic fragment of probabilistic programs that allows to compute expected runtimes in a highly automatable fashion.
This fragment is non--trivial:
We show that all Bayesian networks can be translated into programs within this fragment.
Hence, we obtain an automated formal method for computing expected simulation times of Bayesian networks.
We implemented this method and successfully applied it to various real--world BNs that stem from, amongst others, medical applications.
Remarkably, our tool was capable of proving extremely large expected sampling times within seconds.

There are several directions for future work: 
For example, there exist subclasses of BNs for which exact inference is in $\textsf{P}$, e.g. polytrees. 
Are there analogies for probabilistic programs?
Moreover, it would be interesting to consider more complex graphical models, such as recursive BNs~\cite{DBLP:journals/jacm/EtessamiY09}.


\bibliographystyle{splncs03}
\bibliography{bibliography}

\clearpage
\appendix
\section{Appendix}

\subsection{Calculation for Expected Sampling Time of Example Network}\label{app:example:calculations}
\newcommand{\vNodeR}{x_{\nodeR}}
\newcommand{\vNodeS}{x_{\nodeS}}
\newcommand{\vNodeG}{x_{\nodeG}}
\newcommand{\bnlExampleLoopBody}{C}
Let $\BN$ denote the Bayesian Network shown in Figure~\ref{fig:intro:bn}. We denote the nodes by $\nodeR$, $\nodeS$, and $\nodeG$, respectively. 
Since we want to condition $\nodeG$ to value $0$, let $\BNCOND (\nodeG)=0$, $\BNCOND (\nodeR)=\bot$, and $\BNCOND (\nodeS)=\bot$.
According to the translation from Bayesian Networks to $\bnl$ programs presented in Section~\ref{sec:bnl}, we obtain the following
program $\TBN{\BN}{\BNCOND}$:
\begin{align*}
       &\mathtt{repeat} \{ \\
       &\qquad //~ \text{translation of node $\nodeR$, denoted $\TNODE{\BN}{\nodeR}{\BNCOND}$} \\
       &\qquad \PASSIGN{\vNodeR}{a \cdot \langle 0 \rangle + (1-a) \cdot \langle 1 \rangle};\\
       &\qquad \phantom{aa} \\
       &\qquad //~ \text{translation of node $\nodeS$, denoted $\TNODE{\BN}{\nodeS}{\BNCOND}$} \\
       &\qquad \IFBRACK{\vNodeR = 0} \\
       &\qquad \qquad \PASSIGN{\vNodeS}{a \cdot \langle 0 \rangle + (1-a) \cdot \langle 1 \rangle} \\ 
       &\qquad \} ~\ELSE~ \{ \\
       &\qquad \qquad \PASSIGN{\vNodeS}{0.2 \cdot \langle 0 \rangle + 0.8 \cdot \langle 1 \rangle} \}; \\
       &\qquad \phantom{aa} \\
       &\qquad //~ \text{translation of node $\nodeG$, denoted $\TNODE{\BN}{\nodeG}{\BNCOND}$} \\
       &\qquad \IFBRACK{\vNodeS =0 \wedge \vNodeR =0} \\ 
       &\qquad \qquad  \PASSIGN{\vNodeG}{0.01 \cdot \langle 0 \rangle + (0.99 \cdot \langle 1 \rangle} \\  
       &\qquad \}~ \ELSE \IFBRACK{\vNodeS =0 \wedge \vNodeR =1} \\
       &\qquad \qquad  \PASSIGN{\vNodeG}{0.25 \cdot \langle 0 \rangle + (0.75 \cdot \langle 1 \rangle}\\
       &\qquad  \}~ \ELSE \IFBRACK{\vNodeS =1 \wedge \vNodeR =0} \\
       &\qquad \qquad  \PASSIGN{\vNodeG}{0.9 \cdot \langle 0 \rangle + (0.1 \cdot \langle 1 \rangle} \\
       &\qquad \}~ \ELSE~ \{ \\
       &\qquad \qquad  \PASSIGN{\vNodeG}{0.2 \cdot \langle 0 \rangle + (0.8 \cdot \langle 1 \rangle} \}\\
       &\}\mathtt{until}\left(\vNodeG = 0 \right) // \text{condition node $\nodeG$ to value $0$}
\end{align*}
Thus we have
\begin{align*}
       &\ert{\TBN{\BN}{\BNCOND}}{0} \\
       &\eeq \ert{\REPEATUNTIL{\COMPOSE{\COMPOSE{\TNODE{\BN}{\nodeR}{\BNCOND}}{\TNODE{\BN}{\nodeS}{\BNCOND}}}{\TNODE{\BN}{\nodeG}{\BNCOND}}}{\vNodeG = 0}}{0}~. 
\end{align*} 
Let 
$\bnlExampleLoopBody$ denote the body of the above loop. Theorem~\ref{thm:bnl:repeat} then yields
\begin{align*}
       \ert{\REPEATUNTIL{\bnlExampleLoopBody}{\vNodeG = 0}}{0} ~{}={}~ \frac{1+ \ert{\bnlExampleLoopBody}{0}}{\wp{\bnlExampleLoopBody}{\iverson{\vNodeG = 0}}}~.
\end{align*}
We first compute $\ert{C}{0} = \ert{\COMPOSE{\COMPOSE{\TNODE{\BN}{\nodeR}{\BNCOND}}{\TNODE{\BN}{\nodeS}{\BNCOND}}}{\TNODE{\BN}{\nodeG}{\BNCOND}}}{0}$. We have
\begin{align*}
      \ert{\TNODE{\BN}{\nodeG}{\BNCOND}}{0} ~{}={}~ 2 + \iverson{\vNodeS \neq 0 \vee \vNodeR \neq 0} \cdot \left(1 + \iverson{\vNodeS \neq 0 \vee \vNodeR \neq0} \right)~,
\end{align*}
thus
\begin{align*}
       \ert{\TNODE{\BN}{\nodeS}{\BNCOND}}{\ert{\TNODE{\BN}{\nodeG}{\BNCOND}}{0}} ~{}={}~ \frac{18}{5} + \iverson{\vNodeR = 0} \cdot \left( \frac{2}{5} - 2a \right)~,
\end{align*}
and finally
\begin{align*}
       &\ert{C}{0} \\
       &\eeq \ert{\TNODE{\BN}{\nodeR}{\BNCOND}}{\ert{\TNODE{\BN}{\nodeS}{\BNCOND}}{\ert{\TNODE{\BN}{\nodeG}{\BNCOND}}{0}}} \\
       &\eeq -2a^2 + \frac{2}{5} a + \frac{18}{5}~.
\end{align*}
Next, we determine $\wp{\bnlExampleLoopBody}{\iverson{\vNodeG = 0}}$ and therefore calculate
\begin{align*}
      \wp{\TNODE{\BN}{\nodeG}{\BNCOND}}{\iverson{\vNodeG = 0}} 
      &\eeq 0.01 \cdot \iverson{\vNodeS =0 \wedge \vNodeR =0}  + 0.25 \cdot \iverson{\vNodeS = 0 \wedge \vNodeR = 1} \\
      &\qquad + 0.9 \cdot \iverson{\vNodeS = 1 \wedge \vNodeR = 0}
              + 0.2 \cdot \iverson{\vNodeS = 1 \wedge \vNodeR = 1}~,
\end{align*}
and
\begin{align*}
       &\wp{\TNODE{\BN}{\nodeS}{\BNCOND}}{\wp{\TNODE{\BN}{\nodeG}{\BNCOND}}{0}} \\
       &\eeq \frac{21}{100} + \iverson{\vNodeR = 0} \cdot \left(- \frac{89}{100} a + \frac{69}{100} \right)~,
\end{align*}
thus we finally obtain
\begin{align*}
       &\wp{C}{\iverson{\vNodeG = 0}} \\
       &\eeq \wp{\TNODE{\BN}{\nodeR}{\BNCOND}}{\wp{\TNODE{\BN}{\nodeS}{\BNCOND}}{\wp{\TNODE{\BN}{\nodeG}{\BNCOND}}{\iverson{\vNodeG = 0}}}} \\
       &\eeq -\frac{89}{100} a^2 + \frac{69}{100} a + \frac{21}{100}~.
\end{align*}
Overall, by applying some simple algebra, we get
\begin{align*}
       \ert{\REPEATUNTIL{\bnlExampleLoopBody}{\vNodeG = 0}}{0} ~{}={}~ \frac{200a^2 - 40a - 460}{89a^2 -69a -21}~.
\end{align*}
\subsection{Proof of Lemma~\ref{lem:newrule-main}}
\label{proof-lem:newrule-main}
\begin{proof}
       By induction on the structure of C. As the induction base we have the atomic programs: \\ \\
       \emph{$\SKIP$:} We have
       \begin{equation*}
              \wp{\SKIP}{g \cdot f} \eeq  g \cdot f \eeq  g \cdot \wp{\SKIP}{f}~. 
              \tag{by Table~\ref{table:wp}}
       \end{equation*} \\
       \emph{$\EMPTY$:} Analogous to $\SKIP$. \\ \\
       \emph{$\DIVERGE$:} We have
       \begin{equation*}
              \wp{\DIVERGE}{g \cdot f} \eeq  0 \eeq  g \cdot \wp{\DIVERGE}{f}~.
              \tag{by Table~\ref{table:wp}}
       \end{equation*}
       \emph{$\PASSIGN{x}{\mu}$:} We have
       \begin{align*}
                            \wp{\PASSIGN{x}{\mu}}{g \cdot f} \eeq & \lambda \sigma\mydot \Exp{\Rats}{\big(\lambda v\mydot (f \cdot g)\subst{x}{v}\big)}{\mu_\sigma} 
                            \tag{by Table~\ref{table:wp}}\\
             \eeq & \lambda \sigma\mydot \Exp{\Rats}{\big(\lambda v\mydot f\subst{x}{v}\big) \cdot g}{\mu_\sigma} 
                            \tag{$\RelNewRule{g}{(\PASSIGN{x}{\mu})}$ by assumption} \\
             \eeq & g \cdot \left( \lambda \sigma\mydot \Exp{\Rats}{\big(\lambda v\mydot f\subst{x}{v}\big)}{\mu_\sigma} \right) \\
             \eeq & g \cdot \wp{\PASSIGN{x}{\mu}}{f}~.
       \end{align*}
       As the induction hypothesis we now assume that for arbitrary but fixed $C_1, C_2 \in \pgcl$ and all $f,g \in E$ 
       with $\RelNewRule{g}{C_1}$ and $\RelNewRule{g}{C_2}$ it holds that both
       \begin{equation*}
              \wp{C_1}{g \cdot f} \eeq  g \cdot \wp{C_1}{f}
       \end{equation*}
       and
       \begin{equation*}
              \wp{C_2}{g \cdot f} \eeq  g \cdot \wp{C_2}{f}~.
       \end{equation*}  \\     
       \emph{$\ITE{\kGuard}{C_1}{C_2}$:} We have
       \begin{align*}
                        & \wp{\ITE{\kGuard}{C_1}{C_2}}{g \cdot f} \\
              {}={}~& \kGuardB \cdot \wp{C_1}{g \cdot f} + \nkGuardB \cdot \wp{C_2}{g \cdot f}
                           \tag{by Table~\ref{table:wp}} \\
              {}={}~& \kGuardB \cdot g \cdot \wp{C_1}{f} + \nkGuardB \cdot \wp{C_2}{g \cdot f} 
                           \tag{$\RelNewRule{g}{C_1}$ and I.H.\ on $C_1$} \\
              {}={}~&  \kGuardB \cdot g \cdot \wp{C_1}{f} + \nkGuardB \cdot g \cdot  \wp{C_2}{f} 
                           \tag{$\RelNewRule{g}{C_2}$ and I.H.\ on $C_2$} \\
              {}={}~& g \cdot \left(  \kGuardB \cdot \wp{C_1}{f} + \nkGuardB \cdot  \wp{C_2}{f} \right) \\
              {}={}~& g \cdot \wp{\ITE{\kGuard}{C_1}{C_2}}{f}~.
                           \tag{by Table~\ref{table:wp}}
       \end{align*}
       \\       
        \emph{$\COMPOSE{C_1}{C_2}$: } We have
        \begin{align*}
                               \wp{\COMPOSE{C_1}{C_2}}{g \cdot f}
                \eeq & \wp{C_1}{\wp{C_2}{g \cdot f}}
                               \tag{by Table~\ref{table:wp}} \\
                \eeq & \wp{C_1}{g \cdot \wp{C_2}{f}}
                               \tag{$\RelNewRule{g}{C_2}$ and I.H.\ on $C_2$} \\
                \eeq & g \cdot \wp{C_1}{\wp{C_2}{f}}
                               \tag{$\RelNewRule{g}{C_1}$ and I.H.\ on $C_1$} \\
                \eeq & g \cdot \wp{\COMPOSE{C_1}{C_2}}{f} ~.
                               \tag{by Table~\ref{table:wp}}
        \end{align*} \\
        \emph{$\WHILEDO{\kGuard}{C_1}$:} This case is more involved. Recall that
        \begin{equation}
               \label{eqn:newrule:lemmavars_sup}
               \wp{\WHILEDO{\kGuard}{C_1}}{h} \eeq  \sup_{n\in \Nats}~ F_h^n(0)
        \end{equation}
        holds for all $h \in \E$. Thus we show 
        \begin{equation}
               \label{eqn:newrule-lemmavars-while}
               F_{g \cdot f}^n(0) \eeq  g \cdot F_f^n(0)
        \end{equation}
        for all $n \in \Nats$, which implies the desired result. We proceed by induction on $n$. \\ \\
        \emph{Induction base $n=0$.} We have 
        \begin{align*}
                              &F_{f \cdot g}^0(0) \eeq 0 \eeq g \cdot 0 \eeq g \cdot F_{f}^0(0)~.
        \end{align*} 
        \emph{Induction Hypothesis:} Equation~\ref{eqn:newrule-lemmavars-while} holds for some arbitrary but fixed $n \in \Nats$. \\ \\
        \emph{Induction Step $n \mapsto n+1$.} We have
	\begin{align*}
        		& F_{f \cdot g}^{n+1}(0) \stackrel{!}{\eeq} g \cdot F_f^{n+1}(0) \\
        		& \text{iff}\quad  F_{f \cdot g}(F_{f \cdot g}^{n}(0)) \eeq g \cdot F_f(F_f^{n}(0)) \\
        		& \text{iff}\quad  \iverson{\neg\kGuard} \cdot f \cdot g + \wp{C_1}{F_{f \cdot g}^{n}(0)} \eeq g \cdot \left(  \iverson{\neg\kGuard} \cdot f + \wp{C_1}{F_f^{n}(0)} \right) \tag{by Definition~\ref{def:wp}}\\
        		& \text{iff}\quad \wp{C_1}{F_{f \cdot g}^{n}(0)} \eeq g \cdot \wp{C_1}{F_f^{n}(0)} \\
        		& \text{iff}\quad \wp{C_1}{g \cdot F_{f}^{n}(0)} \eeq g \cdot \wp{C_1}{F_f^{n}(0)} \tag{by I.H.}\\
        		& \text{iff}\quad g \cdot \wp{C_1}{F_{f}^{n}(0)} \eeq g \cdot \wp{C_1}{F_f^{n}(0)} \tag{by $\RelNewRule{g}{C_1}$ and I.H. on $C_1$}\\
        		& \text{iff}\quad \true~.
	\end{align*}
        \emph{$\REPEATUNTIL{\kGuard}{C_1}$:} We have
        \begin{align*}
                             & \wp{\REPEATUNTIL{C_1}{\kGuard}}{g \cdot f} \\
               &\eeq  \wp{\COMPOSE{C_1}{\WHILEDO{\kGuard}{C_1}}}{g\cdot f} \\
               &\eeq  \wp{C_1}{\wp{\WHILEDO{\kGuard}{C_1}}{g\cdot f}} 
                              \tag{by Table~\ref{table:wp}} \\
               &\eeq  \wp{C_1}{g \cdot \wp{\WHILEDO{\kGuard}{C_1}}{f}}           
                              \tag{see case $\WHILEDO{\kGuard}{C_1}$} \\
               &\eeq  g \cdot \wp{C_1}{\wp{\WHILEDO{\kGuard}{C_1}}{f}} 
                              \tag{$\RelNewRule{g}{C_1}$ and I.H.\ on $C_1$} \\
               &\eeq  g \cdot \wp{\REPEATUNTIL{C_1}{\kGuard}}{f}
                              \tag{by Table~\ref{table:wp}}~.                       
        \end{align*}
        \qed
\end{proof}
\noindent
%
%
%
%
\subsection{Proof of Lemma~\ref{thm:newrule_wp_finite_approximations}}
\label{lem:newrule_wp_finite_approximations}

\begin{proof}
       By induction on $n$. We consider the cases $n=1$ and $n=2$ as the induction base.
       For $n=1$, we have
       \begin{align*}
              F_f^1(0) 
              &\eeq  \iverson{\kGuard} \cdot \wp{C}{0} + \iverson{\neg \kGuard} \cdot f \tag{by Definition~\ref{def:wp}} \\
              &\eeq  \iverson{\kGuard} \cdot 0 + \iverson{\neg \kGuard} \cdot f \tag{by strictness, see Theorem~\ref{thm:basic-prop} (\ref{thm:basic-prop-strictness})} \\
              &\eeq  \iverson{\neg \kGuard} \cdot f \tag{$\dagger$} \\
              &\eeq  \iverson{\kGuard} \cdot 0 + \iverson{\neg \kGuard} \cdot f \\
              &\eeq  \iverson{\kGuard} \cdot \wp{C}{\iverson{\neg \kGuard} \cdot f}
               \cdot \underbrace{\left.\sum\limits_{i=0}^{-1}  \middle(\wp{C}{\iverson{\kGuard}}^i \right)}_{{}=0} + \iverson{\neg \kGuard} \cdot f~.
       \end{align*}
       For $n=2$, we have
       \begin{align*}
       F_f^2(0) &\eeq 
              \iverson{\kGuard} \cdot \wp{C}{F_f^1(0)} + \iverson{\neg \kGuard} \cdot f
              \tag{by Definition~\ref{def:wp}} \\
              &\eeq  \iverson{\kGuard} \cdot \wp{C}{\iverson{\neg \kGuard} \cdot f} + \iverson{\neg \kGuard} \cdot f
              \tag{by $\dagger$} \\
              &\eeq  \iverson{\kGuard} \cdot\wp{C}{\iverson{\neg \kGuard} \cdot f}
               \cdot \underbrace{\left.\sum\limits_{i=0}^{0}  \middle(\wp{C}{\iverson{\kGuard}}^i \right)}_{{}=1} + \iverson{\neg \kGuard} \cdot f~.
       \end{align*}
       Before we perform the inductive step, observe the following equality implied by 
       Lemma~\ref{lem:newrule-main},
       which holds for all $n \in \Nats$:
       \begin{align*}
                        &\wp{C}{\iverson{\kGuard} \cdot \wp{C}{\iverson{\neg \kGuard} \cdot f}
                           \cdot \sum\limits_{i=0}^{n-2} \wp{C}{\iverson{\kGuard}}^i} \\
              &\eeq \sum\limits_{i=0}^{n-2} \wp{C}{\iverson{\kGuard} \cdot \wp{C}{\iverson{\neg \kGuard} \cdot f}
                           \cdot \wp{C}{\iverson{\kGuard}}^i} 
                           \tag{by linearity, see Theorem~\ref{thm:basic-prop} (\ref{thm:basic-prop-linearity})} \\
              &\eeq \sum\limits_{i=0}^{n-2} \wp{C}{\iverson{\kGuard} \cdot \wp{C}{\iverson{\neg \kGuard} \cdot f}}
                           \cdot \wp{C}{\iverson{\kGuard}}^i
                           \tag{by Lemma~\ref{lem:newrule-main}, $\RelNewRule{\wp{C}{\iverson{\kGuard}}}{C}$}  \\
              &\eeq \sum\limits_{i=0}^{n-2} \wp{C}{\iverson{\kGuard}} \cdot \wp{C}{\iverson{\neg \kGuard} \cdot f}
                           \cdot \wp{C}{\iverson{\kGuard}}^i
                           \tag{by Lemma~\ref{lem:newrule-main}, $\RelNewRule{\wp{C}{\iverson{\neg \kGuard} \cdot f}}{C}$}  \\
              &\eeq \wp{C}{\iverson{\neg \kGuard} \cdot f}
       \cdot \sum\limits_{i=1}^{n-1}  \wp{C}{\iverson{\kGuard}}^i \tag{$\ddagger$}
       \end{align*}
       As induction hypothesis, we now assume that the lemma holds for some arbitrary but fixed $n \in \Nats \setminus \{ 0 \}$. 
       Then
       \begin{align*}
       &F_f^{n+1}(0) \\
       & \eeq \iverson{\kGuard} \cdot \wp{C}{F_f^n(0)} + \iverson{\neg \kGuard} \cdot f \tag{by Definition~\ref{def:wp}} \\
       &\eeq \iverson{\kGuard} \cdot \wp{C}{ \iverson{\kGuard} \cdot \wp{C}{\iverson{\neg \kGuard} \cdot f}
                    \cdot \left.\sum\limits_{i=0}^{n-2} \middle( \wp{C}{\iverson{\kGuard}}^i \right) + \iverson{\neg \kGuard} \cdot f} \\
                  & \qquad\quad + \iverson{\neg \kGuard} \cdot f
       \tag{by I.H.} \\
       &\eeq \iverson{\kGuard} {\cdot} \left( \wp{C}{\iverson{\kGuard} {\cdot} \wp{C}{\iverson{\neg \kGuard} {\cdot} f}
                    {\cdot}\sum\limits_{i=0}^{n-2}\wp{C}{\iverson{\kGuard}}^i }
                    + \wp{C}{\iverson{\neg \kGuard} {\cdot} f} \right) \\
	& \qquad\quad + \iverson{\neg \kGuard} {\cdot} f
                    \tag{by linearity, see Theorem~\ref{thm:basic-prop} (\ref{thm:basic-prop-linearity})} \\
      &\eeq  \iverson{\kGuard} {\cdot} \left(\wp{C}{\iverson{\neg \kGuard} {\cdot} f}
                     {\cdot} \left. \sum\limits_{i=1}^{n-1} \middle(  \wp{C}{\iverson{\kGuard}}^i \right)
                  + \wp{C}{\iverson{\neg \kGuard} {\cdot} f} \right) + \iverson{\neg \kGuard} {\cdot} f  
                  \tag{by Lemma~\ref{lem:newrule-main}, $\RelNewRule{\wp{C}{\iverson{\neg \kGuard} \cdot f}}{C}$, $\RelNewRule{\wp{C}{\iverson{\kGuard}}}{C}$, and $\iverson{\kGuard} \cdot \iverson{\kGuard} = \iverson{\kGuard}$} \\
       &\eeq  \iverson{\kGuard} \cdot \left(\wp{C}{\iverson{\neg \kGuard} \cdot f}
                     \cdot \sum\limits_{i=0}^{n-1}  \wp{C}{\iverson{\kGuard}}^i  \right) 
                     + \iverson{\neg \kGuard} \cdot f ~. \tag*{\qed}
       \end{align*}
\end{proof}
\noindent

\subsection{Proof of Theorem~\ref{thm:newrule-wpsemantics}}

\label{proof:thm:newrule-wpsemantics}
\begin{proof}
	We have
        \begin{align*}
                 & \wp{\WHILEDO{\kGuard}{C}}{f} \\
        & \eeq  \sup_{n \in \Nats}~ F_f^n(0) \tag{by Definition~\ref{def:wp}}
                      \\
        & \eeq  \sup_{n \in \Nats}~
                      \iverson{\kGuard} \cdot  \wp{C}{\iverson{\neg \kGuard} \cdot f}
                     \cdot \left. \sum\limits_{i=0}^{n-2} \middle( \wp{C}{\iverson{\kGuard}}^i \right) 
                    + \iverson{\neg \kGuard} \cdot f 
                    \tag{by Lemma \ref{thm:newrule_wp_finite_approximations}} \\
        & \eeq \iverson{\kGuard} \cdot  \wp{C}{\iverson{\neg \kGuard} \cdot f}
                     \cdot \left. \sum_{i=0}^{\omega} \middle( \wp{C}{\iverson{\kGuard}}^i \right) 
                    + \iverson{\neg \kGuard} \cdot f~. \tag{$\dagger$}
        \end{align*}
$\dagger$ is to be evaluated in some state $\sigma$ for which we have two cases:
The first case is when $\wp{C}{\iverson{\kGuard}}(\sigma) < 1$.
Using the closed form of the geometric series,
        i.e.\ $\sum_{i=0}^{\omega} q  = \frac{1}{1-q}$ if $|q| < 1$, we get
        \begin{align*}
                 & \iverson{\kGuard}(\sigma) \cdot  \wp{C}{\iverson{\neg \kGuard} \cdot f}(\sigma)
                     \cdot \left. \sum_{i=0}^{\omega} \middle( \wp{C}{\iverson{\kGuard}}(\sigma)^i \right) 
                    + \iverson{\neg \kGuard}(\sigma) \cdot f(\sigma) \tag{$\dagger$ instantiated in $\sigma$}\\
        & \eeq  \kGuardB (\sigma) \cdot 
                     \frac{\wp{C}{\iverson{\neg \kGuard} \cdot f}(\sigma)}{1-\wp{C}{\iverson{\kGuard}}(\sigma)}
                     + \iverson{\neg \kGuard}(\sigma) \cdot f(\sigma)~. \tag{closed form of geometric series}
        \end{align*}
The second case is when $\wp{C}{\iverson{\kGuard}}(\sigma) = 1$.
This can only be true if every state $\tau$ that is reachable with non--zero probability by executing $C$ on $\sigma$ has to satisfy $\varphi$. Otherwise, the integral would not add up to exactly 1.
Formally,
\begin{align*}
        \wp{C}{\iverson{\varphi}}(\sigma) \eeq 1 \qiff & \Exp{\Sigma}{\iverson{\varphi}}{\semantics{C}{\sigma}} = 1 \\
        \qiff & \forall \semantics{C}{\sigma}(\tau) > 0 \implies \iverson{\varphi}(\tau) = 1 \tag{$\ddagger$}
	\label{eq:bennistheorem}
\end{align*}
From that, it follows that 
\begin{align*}
	\wp{C}{\iverson{\neg \kGuard} \cdot f} (\sigma) \eeq \Exp{\Sigma}{\iverson{\neg \kGuard} \cdot f}{\semantics{C}{\sigma}} & \eeq \Exp{\Sigma}{f - \iverson{\kGuard} \cdot f}{\semantics{C}{\sigma}} \\
	& \eeq \Exp{\Sigma}{f - 1 \cdot f}{\semantics{C}{\sigma}} \tag{by $\ddagger$} \\
	& \eeq \Exp{\Sigma}{0}{\semantics{C}{\sigma}} \eeq 0 \tag{$\star$}
\end{align*}
We then get
        \begin{align*}
                 & \iverson{\kGuard}(\sigma) \cdot  \wp{C}{\iverson{\neg \kGuard} \cdot f}(\sigma)
                     \cdot \left. \sum_{i=0}^{\omega} \middle( \wp{C}{\iverson{\kGuard}}(\sigma)^i \right) 
                    + \iverson{\neg \kGuard}(\sigma) \cdot f(\sigma) \tag{instantiate $\dagger$ in $\sigma$}\\
	& \eeq \iverson{\kGuard}(\sigma) \cdot  0
                     \cdot \infty
                    + \iverson{\neg \kGuard}(\sigma) \cdot f(\sigma) \tag{by $\star$}\\
	& \eeq \iverson{\kGuard}(\sigma) \cdot  0
                    + \iverson{\neg \kGuard}(\sigma) \cdot f(\sigma) \tag{by $0 \cdot \infty = 0$}\\
	& \eeq \iverson{\kGuard}(\sigma) \cdot  \frac{0}{0}
                    + \iverson{\neg \kGuard}(\sigma) \cdot f(\sigma) \tag{by $\frac{0}{0} = 0$}\\
        & \eeq  \kGuardB (\sigma) \cdot 
                     \frac{\wp{C}{\iverson{\neg \kGuard} \cdot f}(\sigma)}{1-\wp{C}{\iverson{\kGuard}}(\sigma)}
                     + \iverson{\neg \kGuard}(\sigma) \cdot f(\sigma)~. \tag{Case 2}
        \end{align*}
For both cases we get the desired form which completes the proof.
\qed
\end{proof}
\subsection{$\boldertsymbol$ Orbits of $f$--i.i.d. Loops w.r.t.\ Postruntime $0$}\label{lem:newrule-omega-invariant}

\begin{lemma}[$\boldertsymbol$ Orbits of $\boldsymbol{0}$--i.i.d.\ Loops]
       Let $C \in \pgcl$, $\kGuard$ be a guard, $f \in \E$ such that the loop $\WHILEDO{\kGuard}{C}$ is $0$--i.i.d., and let $F_0$ be the corresponding $\ertsymbol$--characteristic function.
       Furthermore, assume 
       \begin{equation}
       \label{eqn:newrule-omega-condition-two}
              \wp{C}{1} = 1
       \end{equation} 
       and
       \begin{equation}
       \label{eqn:newrule-omega-condition-one}
              \RelNewRule{\ert{C}{0}}{C}~.
       \end{equation}
       Then the orbit of $F_0$ is given by
       \begin{align*}
              F_0^n(0) \eeq 1 + \kGuardB \cdot \left( \ert{C}{0} \cdot
              \sum\limits_{i=0}^{n-1} \wp{C}{\kGuardB}^i ~+~ 
              \sum\limits_{i=0}^{n-2} \wp{C}{\kGuardB}^i \right)
       \end{align*} 
       for $n \geq 1$.
\end{lemma}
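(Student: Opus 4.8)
The plan is to establish the claimed closed form for the orbit $\{F_0^n(0)\}_{n \geq 1}$ by induction on $n$, letting the two healthiness conditions and the scaling lemma carry the algebra. Throughout, recall that for postruntime $0$ the $\ertsymbol$--characteristic function of the loop is $F_0(X) = 1 + \iverson{\kGuard} \cdot \ert{C}{X}$ (Table~\ref{table:ert}), since the summand $\iverson{\neg\kGuard} \cdot 0$ vanishes; and note that $0$--i.i.d.--ness of the loop amounts precisely to $\RelNewRule{\wp{C}{\iverson{\kGuard}}}{C}$, because the second condition of Definition~\ref{def:f-iid} instantiated at $f = 0$ is $\RelNewRule{\wp{C}{\iverson{\neg\kGuard} \cdot 0}}{C}$, which holds trivially.

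For the base case $n = 1$ I would compute $F_0^1(0) = F_0(0) = 1 + \iverson{\kGuard} \cdot \ert{C}{0}$ directly, and observe that the right--hand side of the claim collapses, for $n = 1$, to $1 + \iverson{\kGuard} \cdot (\ert{C}{0} \cdot \sum_{i=0}^{0} \wp{C}{\iverson{\kGuard}}^i + \sum_{i=0}^{-1} \wp{C}{\iverson{\kGuard}}^i) = 1 + \iverson{\kGuard} \cdot \ert{C}{0}$, the last summation being empty.

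The inductive step is where the three hypotheses interact. Assuming the formula for $n$, I would unfold $F_0^{n+1}(0) = F_0(F_0^n(0)) = 1 + \iverson{\kGuard} \cdot \ert{C}{F_0^n(0)}$ and rewrite the inner expected runtime via the decomposition theorem (Theorem~\ref{thm:ert-wp}) as $\ert{C}{F_0^n(0)} = \ert{C}{0} + \wp{C}{F_0^n(0)}$. Substituting the induction hypothesis and distributing $\wp{C}{\cdot}$ by linearity (Theorem~\ref{thm:basic-prop}) then splits $\wp{C}{F_0^n(0)}$ into the constant term $\wp{C}{1}$, which equals $1$ by the hypothesis $\wp{C}{1} = 1$, and a term $\wp{C}{\iverson{\kGuard} \cdot S}$, where $S$ is the bracketed sum $\ert{C}{0} \cdot \sum_{i=0}^{n-1} \wp{C}{\iverson{\kGuard}}^i + \sum_{i=0}^{n-2} \wp{C}{\iverson{\kGuard}}^i$. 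The decisive observation is that $S$ is built from $\ert{C}{0}$ and powers of $\wp{C}{\iverson{\kGuard}}$, both of which are unaffected by $C$ --- the former by the hypothesis $\RelNewRule{\ert{C}{0}}{C}$, the latter by $0$--i.i.d.--ness --- so $S$ itself is unaffected, and Lemma~\ref{lem:newrule-main} applies to give $\wp{C}{\iverson{\kGuard} \cdot S} = S \cdot \wp{C}{\iverson{\kGuard}}$.

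Multiplying $S$ by the extra factor $\wp{C}{\iverson{\kGuard}}$ raises every exponent by one, so $\sum_{i=0}^{n-1}$ becomes $\sum_{i=1}^{n}$ and $\sum_{i=0}^{n-2}$ becomes $\sum_{i=1}^{n-1}$; folding in the freshly exposed $\ert{C}{0}$ (from the decomposition) and the constant $1$ (from $\wp{C}{1}$) as the missing $i = 0$ summands then restores the sums to $\sum_{i=0}^{n}$ and $\sum_{i=0}^{n-1}$, which is exactly the asserted formula at $n+1$. I expect the only genuine difficulty to be bookkeeping: confirming carefully that $S$ is unaffected (so that the scaling lemma is licensed, using the fact that products, sums, and powers of unaffected expectations are again unaffected), and lining up the index shifts so that the two separate constants reassemble into the correct lower summation bounds. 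Everything else reduces to repeated, routine use of the already--established decomposition, linearity, and scaling results.
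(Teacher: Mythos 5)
Your proposal is correct and follows essentially the same route as the paper's proof: induction on $n$, unfolding $F_0^{n+1}(0)$ via the decomposition $\ert{C}{\cdot} = \ert{C}{0} + \wp{C}{\cdot}$, then using $\wp{C}{1}=1$, linearity, and the scaling lemma (licensed by $\RelNewRule{\ert{C}{0}}{C}$ and $\RelNewRule{\wp{C}{\kGuardB}}{C}$) to shift the summation indices. The only cosmetic differences are that the paper distributes the sums by linearity before applying the scaling lemma termwise (avoiding the explicit closure argument that sums and products of unaffected expectations are unaffected, which you correctly note must be checked) and redundantly includes $n=2$ as a second base case.
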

\begin{proof}
       By induction on $n$. 
       For the induction base, we have $n=1$:
       \begin{align*}
                              F_0 (0) &\eeq  1 + \kGuardB \cdot \ert{C}{0} \\
              &\eeq  1 + \kGuardB \cdot \left( \ert{C}{0} \cdot \sum\limits_{i=0}^{0} \wp{C}{\kGuardB}^i
                              ~+~ \sum\limits_{i=0}^{-1} \wp{C}{\kGuardB}^i \right) 
       \end{align*}
       The second induction base is $n=2$:
       \begin{align*}
                             F_0^2 (0) &\eeq  1 + \kGuardB \cdot \ert{C}{1 + \kGuardB \cdot \ert{C}{0}} \\
              &\eeq  1+\kGuardB \cdot \left(\ert{C}{0} + \wp{C}{1 + \kGuardB \cdot \ert{C}{0}} \right)
              \tag{by Theorem~\ref{thm:ert-wp}} \\
              &\eeq  1+ \kGuardB \cdot \left(\ert{C}{0} + \wp{C}{1} + \wp{C}{\kGuardB \cdot \ert{C}{0}} \right)
              \tag{by linearity, see Theorem~\ref{thm:basic-prop} (\ref{thm:basic-prop-linearity})} \\
              &\eeq  1+ \kGuardB \cdot \left(\ert{C}{0} + 1 + \wp{C}{\kGuardB \cdot \ert{C}{0}} \right)
              \tag{by Equation \ref{eqn:newrule-omega-condition-two}} \\
              &\eeq  1+ \kGuardB \cdot \left(\ert{C}{0} + 1 + \wp{C}{\kGuardB} \cdot \ert{C}{0} \right)
              \tag{by Lemma~\ref{lem:newrule-main}, $\RelNewRule{\ert{C}{0}}{C}$} \\
              &\eeq  1 + \kGuardB \cdot \left( \ert{C}{0} \cdot \sum\limits_{i=0}^{1} \wp{C}{\kGuardB}^i
                              ~+~ \sum\limits_{i=0}^{0} \wp{C}{\kGuardB}^i \right) ~.              
       \end{align*}
       Before we prove the induction step, observe the following:
       Lemma~\ref{lem:newrule-main} yields
       \begin{align}
                         & \wp{C}{\kGuardB \cdot \ert{C}{0} \cdot \sum\limits_{i=0}^{n} \wp{C}{\kGuardB}^i}
                             \notag \\
             &\eeq  \sum\limits_{i=0}^{n} \wp{C}{\kGuardB \cdot \ert{C}{0} \cdot \wp{C}{\kGuardB}^i}
             \tag{by linearity, see Theorem~\ref{thm:basic-prop} (\ref{thm:basic-prop-linearity})} \\
             &\eeq \sum\limits_{i=0}^{n} \wp{C}{\kGuardB \cdot \ert{C}{0}} \cdot \wp{C}{\kGuardB}^i
             \tag{by Lemma~\ref{lem:newrule-main}, $\RelNewRule{\wp{C}{\kGuardB}}{C}$} \\
             &\eeq \sum\limits_{i=0}^{n} \wp{C}{\kGuardB} \cdot \ert{C}{0} \cdot \wp{C}{\kGuardB}^i
             \tag{by Lemma~\ref{lem:newrule-main}, $\RelNewRule{\ert{C}{0}}{C}$} \\
             &\eeq \ert{C}{0} \cdot \sum\limits_{i=1}^{n+1} \wp{C}{\kGuardB}^i~.
             \label{eqn:newrule-omega-equation-one}
       \end{align}
       Lemma \ref{lem:newrule-main} implies further
       \begin{align}
                         & \wp{C}{\kGuardB \cdot \sum\limits_{i=0}^{n} \wp{C}{\kGuardB}^i} 
                             \notag \\
             &\eeq \sum\limits_{i=0}^{n} \wp{C}{\kGuardB \cdot \wp{C}{\kGuardB}^i} 
                            \tag{by linearity, see Theorem~\ref{thm:basic-prop} (\ref{thm:basic-prop-linearity})} \\  
             &\eeq \sum\limits_{i=0}^{n} \wp{C}{\kGuardB} \cdot \wp{C}{\kGuardB}^i 
                            \tag{by Lemma \ref{lem:newrule-main}, $\RelNewRule{\wp{C}{\kGuardB}}{C}$} \\
             &\eeq \sum\limits_{i=1}^{n+1} \wp{C}{\kGuardB}^i ~.
                            \label{eqn:newrule-omega-equation-two}                               
       \end{align}
       We are now in a position two show that $F_0 (I_n) = I_{n+1}$ holds. We have
\begin{align*}
	&F_0^{n+1}(0) \\
	&\eeq  1 + \kGuardB \cdot \ert{C}{F_0^n(0)} \\
	&\eeq  1 + \kGuardB \cdot \left(\ert{C}{0} + \wp{C}{F_0^n(0)} \right) 
		\tag{by Theorem~\ref{thm:ert-wp}} \\
	&\eeq  1 + \kGuardB \cdot \Biggl(\ert{C}{0} \\
	&\qquad + \wp{C}{1 + \kGuardB \cdot \left( \ert{C}{0} 
		\cdot \sum_{i=0}^{n-1} \wp{C}{\kGuardB}^i 
		~+~ \sum_{i=0}^{n-2} \wp{C}{\kGuardB}^i \right)} \Biggr)
		\tag{by I.H. on $n$} \\
	&\eeq  1 + \kGuardB \cdot \Biggl(\ert{C}{0} \\ 
	&\qquad + \wp{C}{1 + \kGuardB \cdot  \ert{C}{0} 
		\cdot \sum_{i=0}^{n-1} \wp{C}{\kGuardB}^i 
		~+~ \kGuardB \cdot \sum_{i=0}^{n-2} \wp{C}{\kGuardB}^i} \Biggr) \\
	&\eeq  1 + \kGuardB \cdot \left( \ert{C}{0} 
	+ \wp{C}{1} 
	+ \wp{C}{\kGuardB \cdot \ert{C}{0} \cdot \sum_{i=0}^{n-1} \wp{C}{\kGuardB}^i} \right. \\
	&\left. \quad \quad \quad \quad \quad \quad 
		{}+ \wp{C}{\kGuardB \cdot \sum\limits_{i=0}^{n-2} \wp{C}{\kGuardB}^i} \right)                           
	\tag{by linearity, see Theorem~\ref{thm:basic-prop} (\ref{thm:basic-prop-linearity})} \\
	&\eeq  
	1 
	+ \kGuardB \cdot \left( \ert{C}{0} 
	+ 1 
	+ \wp{C}{\kGuardB \cdot \ert{C}{0}
		\cdot \sum_{i=0}^{n-1} \wp{C}{\kGuardB}^i} \right. \\
	&\left. \quad \quad \quad \quad \quad \quad 
	{}+ \wp{C}{\kGuardB \cdot \sum\limits_{i=0}^{n-2} \wp{C}{\kGuardB}^i} \right)                           
	\tag{by Equation \ref{eqn:newrule-omega-condition-two}} \\   
	&\eeq  
	1 
	+ \kGuardB \cdot \left( \ert{C}{0} 
	+ 1 
	+ \ert{C}{0} \cdot \sum_{i=1}^{n} \wp{C}{\kGuardB}^i \right. \\
	&\left. \quad \quad \quad \quad \quad \quad 
	{}+ \wp{C}{\kGuardB \cdot \sum\limits_{i=0}^{n-2} \wp{C}{\kGuardB}^i} \right)                           
	\tag{by Equation \ref{eqn:newrule-omega-equation-one}} \\         
	&\eeq  
	1 
	+ \kGuardB \cdot \left( \ert{C}{0} 
	+ 1 
	+ \ert{C}{0} \cdot \sum_{i=1}^{n} \wp{C}{\kGuardB}^i \right. \\
	&\left. \quad \quad \quad \quad \quad \quad 
	{}+ \sum\limits_{i=1}^{n-1} \wp{C}{\kGuardB}^i \right)                           
	\tag{by Equation \ref{eqn:newrule-omega-equation-two}} \\
               &\eeq  1 + \kGuardB \cdot \left( \ert{C}{0} \cdot
              \sum\limits_{i=0}^{n} \wp{C}{\kGuardB}^i ~+~ 
              \sum\limits_{i=0}^{n-1} \wp{C}{\kGuardB}^i \right)~.      
              \tag*{\qed}
       \end{align*}
\end{proof}

\subsection{Proof of Lemma~\ref{lem:apply-rule-relation}}
\label{app:apply-rule-relation}
\subsubsection{Proof of Lemma~\ref{lem:apply-rule-relation}.1}
       %
       %
       %
       %
       %
       %
       %
       %
       %
       %
       %
       %
\begin{proof}
\newcommand{\bxi}[1]{{B_{x_i}}^{#1}}
       We prove a stronger statement: For all sequences $\Blk \in \bnl$ and all $g \in \E$ it holds that 
       \begin{equation}
       \label{eqn:in-proof-lemma-apply-rule-one-to-show}
              \VarsInExp{\wp{\Blk}{g}} ~{}={}~ \VarsInExp{g} \setminus \VarsAssign{\Blk}~.
       \end{equation}       
       It then follows that $\VarsInExp{\wp{\Blk}{g}} \cap \VarsAssign{\Blk} = \emptyset$ and hence $\RelNewRule{\wp{\Blk}{g}}{\Blk}$ by definition.
       Note that every sequence $\Blk \in \bnl$ is of the form $\COMPOSE{\bxi{1}}{\ldots\COMPOSE{}{\bxi{n}}}$  for some $n \geq 1$, where each $\bxi{j}$ is a block.
       Given a program $C \in \pgcl$, let $\VarsGuard{C}$ denote the the set of variables occurring in a guard in $C$. A straightforward induction on the structure of a block
       $B_{x_i} \in \bnl$ yields
       \begin{equation}
              \label{eqn:in-proof-property-block}
              \VarsInExp{\wp{B_{x_i}}{g}} ~{}={}~ \left( \VarsGuard{B_{x_i}} \cup \VarsInExp{g} \right) \setminus \VarsAssign{B_{x_i}}~.
       \end{equation}
       \noindent
       We now proceed by induction on the length of a sequence $n$. \\ \\
       \emph{Induction base}. $\Blk$ consists of a single block $\bxi{1}$. Moreover no guard in $\bxi{1}$ contains a variable since 
       no variable is initialized, i.e.\ $\VarsGuard{C} = \emptyset$. We thus have
       \begin{align*}
              \VarsInExp{\wp{\Blk}{g}} &~{}={}~ \VarsInExp{\wp{{{B_{x_i}}_1}}{g}} \\
              &~{}={}~ \left( \underbracket{\VarsGuard{B_{x_i}}}_{=\emptyset} {}\cup{} \VarsInExp{g} \right) \setminus \VarsAssign{B_{x_i}} 
               \tag{by Equation~\ref{eqn:in-proof-property-block}}\\
              &~{}={}~  \VarsInExp{g} \setminus \VarsAssign{B_{x_i}}~.
       \end{align*} \\
       \emph{Induction hypothesis.} Suppose Equation~\ref{eqn:in-proof-lemma-apply-rule-one-to-show} holds for some arbitrary but fixed sequence $\Blk$ of length $n$ and all $g \in \E$. \\ \\
       \noindent
       \emph{Induction step.} We have 
       \begin{align*}
              &\VarsInExp{\wp{\COMPOSE{\bxi{1}}{\ldots\COMPOSE{\bxi{n}}{\bxi{n+1}}}}{g}} \\
              &\eeq\VarsInExp{\wp{\COMPOSE{\bxi{1}}{\ldots\COMPOSE{}{\bxi{n}}}}{\wp{\bxi{n+1}}{g}}}
              \tag{by Table~\ref{table:wp}} \\
              &\eeq \VarsInExp{\wp{\bxi{n+1}}{g}} \setminus \VarsAssign{\COMPOSE{\bxi{1}}{\ldots\COMPOSE{}{\bxi{n}}}} ~.
              \tag{by I.H.} 
       \end{align*}
       Further, Equation~\ref{eqn:in-proof-property-block} yields 
       \begin{align*}
       \VarsInExp{\wp{\bxi{n+1}}{g}} ~{}={}~ \left( \VarsGuard{\bxi{n+1}} \cup \VarsInExp{g} \right) \setminus \VarsAssign{\bxi{n+1}}~.
       \end{align*}
       Recall that every variable occurring in a guard in $\bxi{n+1}$ must be initialized, that is
       \begin{equation*}
             \VarsGuard{\bxi{n+1}} \subseteq \VarsAssign{\COMPOSE{\bxi{1}}{\ldots\COMPOSE{}{\bxi{n}}}}~.
       \end{equation*}
       Overall, the aforementioned facts and some simple set algebra yields
       \begin{align*}
              &\VarsInExp{\wp{\COMPOSE{\bxi{1}}{\ldots\COMPOSE{\bxi{n}}{\bxi{n+1}}}}{g}} \\
              &\eeq \VarsInExp{g} \setminus \VarsAssign{\COMPOSE{\bxi{1}}{\ldots\COMPOSE{\bxi{n}}{\bxi{n+1}}}}~.
       \end{align*}
\end{proof}
\subsubsection{Proof of Lemma~\ref{lem:apply-rule-relation}.2}
\begin{proof}
       %
       %
       %
       %
       %
       %
       %
       %
       %
       %
       %
       %
       %
       %
       %
       We show that $\VarsInExp{\ert{\Blk}{0}} = \emptyset$ holds for every sequence $\Blk \in \bnl$. This implies $\RelNewRule{\ert{\Blk}{0}}{\Blk}$.
       First observe that since every $\Blk \in \bnl$ is loop--free, we have
       \begin{equation}       
              \label{in-proof-lemma-apply-rule-two-wp-ert}
              \VarsInExp{\wp{\Blk}{g}} = \VarsInExp{\ert{\Blk}{g}}~.
       \end{equation}
       Thus Equation~\ref{eqn:in-proof-lemma-apply-rule-one-to-show} verified in the previous proof yields
       \begin{align*}
              \VarsInExp{\ert{\Blk}{0}} &~{}={}~ \VarsInExp{\wp{\Blk}{0}} 
              \tag{by Equation~\ref{in-proof-lemma-apply-rule-two-wp-ert}}\\
              &~{}={}~ \underbracket{\VarsInExp{0}}_{=\emptyset} \setminus \VarsAssign{\Blk} \\
              \tag{by Equation~\ref{eqn:in-proof-lemma-apply-rule-one-to-show}}
              &~{}={}~ \emptyset~.
       \end{align*}
\end{proof}
\subsection{Proof of Theorem~\ref{thm:soundness-bayesian-network-to-bnl}}
\label{proof-thm:soundness-bayesian-network-to-bnl}
The proof makes use of two technical lemmas that are proven afterwards.

\begin{proof}[Proof of Theorem~\ref{thm:soundness-bayesian-network-to-bnl}]
Assume $\BN$ is a Bayesian network, i.e. $\INPUTS = \emptyset$.
Moreover, let $\sigma \in \Sigma$.
We have to show that
\begin{align*}
        \wp{\TBN{\BN}{\BNCOND}}{\iverson{\bigwedge_{v \in U} \TVAR{v} = \BVAL{v}}}(\sigma)
        \eeq
        \frac{\PROB{\bigwedge_{v \in \NODES} v = \BVAL{v}}}{\PROB{\bigwedge_{v \in O} v = \BVAL{v}}}.
\end{align*}

Then
\begin{align*}
        & \wp{\TBN{\BN}{\BNCOND}}{\iverson{\bigwedge_{v \in U} \TVAR{v} = \BVAL{v}}}(\sigma) \\
        \eeq &
        \frac{
        \wp{\TB{\BN}}{\iverson{\psi} \cdot \iverson{\bigwedge_{v \in U} \TVAR{v} = \BVAL{v}}}(\sigma)
        }{
        \wp{\TB{\BN}}{\iverson{\psi}}(\sigma)
        }
        \tag{Lemma~\ref{lem:bn:loop}, see below} \\
        \eeq &
        \frac{
        \wp{\TB{\BN}}{\iverson{\bigwedge_{v \in \NODES} \TVAR{v} = \BVAL{v}}}(\sigma)
        }{
        \wp{\TB{\BN}}{\iverson{\psi}}(\sigma)
        }
        \tag{Definition of $\psi$} \\
        \eeq &
        \frac{\PROB{\bigwedge_{v \in \NODES} v = \BVAL{v}}}{\PROB{\bigwedge_{v \in O} v = \BVAL{v}}}.
        \tag{Lemma~\ref{lem:bn:structure}, see below}
\end{align*}

\end{proof}

\begin{lemma}\label{lem:bn:loop}
Let $\BN$ be an extended Bayesian network, $\BNCOND$ be a conditioning function, and $f \in \E$.
We define a shortcut for the guard corresponding to all observations: \[ \psi = \bigwedge_{v \in O} \TVAR{v} = \BNCOND(v) \] 
Then, for each $\sigma \in \Sigma$, we have 
\begin{align*}
        \wp{\TBN{\BN}{\BNCOND}}{f}(\sigma) \eeq 
        \frac{
        \wp{\TB{\BN}}{\iverson{\psi} \cdot f}(\sigma)
        }{
        \wp{\TB{\BN}}{\iverson{\psi}}(\sigma)
        }~,
\end{align*}
where, again, we define $\nicefrac 0 0 = 0$.
\end{lemma}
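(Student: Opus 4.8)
The plan is to exploit the fact that $\TBN{\BN}{\BNCOND}$ is, by construction, exactly the loop $\REPEATUNTIL{\TB{\BN}}{\psi}$, which is syntactic sugar for $\COMPOSE{\TB{\BN}}{\WHILEDO{\neg\psi}{\TB{\BN}}}$. Since $\TB{\BN}$ is a sequential composition of $\bnl$-blocks, Lemma~\ref{lem:apply-rule-relation}.3 guarantees that the inner loop $\WHILEDO{\neg\psi}{\TB{\BN}}$ is $f$--i.i.d.\ for every $f \in \E$, which is precisely the hypothesis required to invoke the closed form of Theorem~\ref{thm:newrule-wpsemantics}. Thus the overall strategy is: first compute $\wp{\WHILEDO{\neg\psi}{\TB{\BN}}}{f}$ in closed form, then push it back through the prefix $\TB{\BN}$ using the $\wpsymbol$-rule for sequential composition, and finally simplify algebraically to the claimed quotient.

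Concretely, I would apply Theorem~\ref{thm:newrule-wpsemantics} with guard $\neg\psi$ and body $\TB{\BN}$, obtaining
\[
\wp{\WHILEDO{\neg\psi}{\TB{\BN}}}{f} \eeq \npsiB \cdot \frac{\wp{\TB{\BN}}{\psiB \cdot f}}{1 - \wp{\TB{\BN}}{\npsiB}} + \psiB \cdot f~.
\]
Because $\TB{\BN}$ is loop-free it terminates almost surely, so $\wp{\TB{\BN}}{1} = 1$; combined with linearity (Theorem~\ref{thm:basic-prop}) and the identity $\psiB + \npsiB = 1$ this yields $1 - \wp{\TB{\BN}}{\npsiB} = \wp{\TB{\BN}}{\psiB}$, simplifying the denominator. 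Writing $R$ for the fraction $\nicefrac{\wp{\TB{\BN}}{\psiB \cdot f}}{\wp{\TB{\BN}}{\psiB}}$, Lemma~\ref{lem:apply-rule-relation}.1 shows that both numerator and denominator---hence $R$ itself---are unaffected by $\TB{\BN}$, i.e.\ $\RelNewRule{R}{\TB{\BN}}$.

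Next I would compose: by the rule for sequential composition, $\wp{\REPEATUNTIL{\TB{\BN}}{\psi}}{f} = \wp{\TB{\BN}}{\npsiB \cdot R + \psiB \cdot f}$. Linearity (Theorem~\ref{thm:basic-prop}) splits this into $\wp{\TB{\BN}}{\npsiB \cdot R} + \wp{\TB{\BN}}{\psiB \cdot f}$, and since $\RelNewRule{R}{\TB{\BN}}$, Lemma~\ref{lem:newrule-main} lets me pull $R$ out as a constant, giving $\wp{\TB{\BN}}{\npsiB \cdot R} = R \cdot \wp{\TB{\BN}}{\npsiB} = R \cdot (1 - \wp{\TB{\BN}}{\psiB})$. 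Abbreviating $p = \wp{\TB{\BN}}{\psiB}$ and $q = \wp{\TB{\BN}}{\psiB \cdot f}$, so that $R = \nicefrac q p$, the whole expression collapses via $\nicefrac q p \cdot (1-p) + q = \nicefrac q p$ to exactly the claimed $\nicefrac{\wp{\TB{\BN}}{\psiB \cdot f}}{\wp{\TB{\BN}}{\psiB}}$.

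The delicate point---and the main obstacle---is the pointwise evaluation in $\sigma$ together with the $\nicefrac 0 0 = 0$ convention. The algebra above is clean whenever $p(\sigma) > 0$. When $p(\sigma) = 0$, equivalently $\wp{\TB{\BN}}{\npsiB}(\sigma) = 1$, division is not directly available; here I would reuse the second case already established inside the proof of Theorem~\ref{thm:newrule-wpsemantics}, where it is shown that $\wp{\TB{\BN}}{\psiB \cdot f}(\sigma) = q(\sigma) = 0$ in this situation, so that both sides reduce to $\nicefrac 0 0 = 0$ under the convention. I would also track the convention $0 \cdot \infty = 0$ so that the linearity and scaling steps remain valid even when $f$ takes the value $\infty$; this is harmless because the finite-support distributions produced by $\TB{\BN}$ assign exact probability $0$ to states not in their support, which is exactly where the only potential $0 \cdot \infty$ products arise.
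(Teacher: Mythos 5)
Your proposal is correct and follows essentially the same route as the paper: unfold $\TBN{\BN}{\BNCOND}$ into $\COMPOSE{\TB{\BN}}{\WHILEDO{\neg\psi}{\TB{\BN}}}$, invoke Theorem~\ref{thm:newrule-wpsemantics} on the inner loop (justified by Lemma~\ref{lem:apply-rule-relation}), push the closed form back through $\TB{\BN}$ using linearity and Lemma~\ref{lem:newrule-main} on the unaffected fraction, and handle the $\wp{\TB{\BN}}{\iverson{\psi}}(\sigma)=0$ case via the second case of Theorem~\ref{thm:newrule-wpsemantics}, exactly as the paper does. No gaps to report.
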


\begin{proof}
\begin{align*}
        &
        \wp{\TBN{\BN}{\BNCOND}}{f} \\
        \eeq &
        \wp{\REPEATUNTIL{\TB{\BN}}{\psi}}{f}
        \tag{Definition of $\TBN{\BN}{\BNCOND}$} \\
        \eeq & 
        \wp{\COMPOSE{\TB{\BN}}{\WHILEDO{\neg \psi}{\TB{\BN}}}}{f}
        \tag{Table~\ref{table:wp}} \\
        \eeq &
        \wp{\TB{\BN}}{\wp{\WHILEDO{\neg \psi}{\TB{\BN}}}{f}}
        \tag{Table~\ref{table:wp}} \\
\end{align*}
We now distinguish two cases: $\wp{\TB{\BN}}{\iverson{\psi}}(\sigma) > 0$ and  $\wp{\TB{\BN}}{\iverson{\psi}}(\sigma) = 0$.

For $\wp{\TB{\BN}}{\iverson{\psi}}(\sigma) > 0$, we proceed as follows:
\begin{align*}
        \eeq &
        \wp{\TB{\BN}}{
            \frac{
                \iverson{\neg \psi} \cdot \wp{\TB{\BN}}{\iverson{\psi} \cdot f}
            }{
                1 - \wp{\TB{\BN}}{\iverson{\neg \psi}}
            }
            + \iverson{\psi} \cdot f 
        }
        \tag{Assumption, Theorem~\ref{thm:newrule-wpsemantics} (1)} \\
        \eeq &
        \wp{\TB{\BN}}{
            \frac{
                \iverson{\neg \psi} \cdot \wp{\TB{\BN}}{\iverson{\psi} \cdot f}
            }{
                1 - \wp{\TB{\BN}}{\iverson{\neg \psi}}
            }
        }
        + \wp{\TB{\BN}}{\iverson{\psi} \cdot f}
        \tag{Theorem~\ref{thm:basic-prop} (2)} \\
        \eeq &
        \wp{\TB{\BN}}{\iverson{\neg \psi}} \cdot 
        \frac{
            \wp{\TB{\BN}}{\iverson{\psi} \cdot f}
        }{
            1 - \wp{\TB{\BN}}{\iverson{\neg \psi}}
        }
        + \wp{\TB{\BN}}{\iverson{\psi} \cdot f}
        \tag{Lemma~\ref{lem:apply-rule-relation} (2) allows to apply Lemma~\ref{lem:newrule-main}} \\
        \eeq &
        \wp{\TB{\BN}}{\iverson{\psi} \cdot f} \cdot \left( 1 + 
        \frac{
            \wp{\TB{\BN}}{\iverson{\neg \psi}}
        }{
            1 - \wp{\TB{\BN}}{\iverson{\neg \psi}}
        }
        \right)
        \tag{Algebra} \\
        \eeq &
        \frac{
            \wp{\TB{\BN}}{\iverson{\psi} \cdot f}
        }{
            1 - \wp{\TB{\BN}}{\iverson{\neg \psi}}
        }
        \tag{Algebra} \\
        \eeq &
        \frac{
            \wp{\TB{\BN}}{\iverson{\psi} \cdot f}
        }{
            \wp{\TB{\BN}}{\iverson{\psi}}
        }
        \tag{Theorem~\ref{thm:basic-prop}}
\end{align*}

For $\wp{\TB{\BN}}{\iverson{\psi}}(\sigma) = 0$, we proceed as follows:
\begin{align*}
        \eeq &
        \wp{\TB{\BN}}{ 
            \iverson{\psi} \cdot f 
        }
        \tag{Assumption, Theorem~\ref{thm:newrule-wpsemantics} (2)} \\
        \eeq &
        0 \\
        \eeq &
        \frac{
            \wp{\TB{\BN}}{\iverson{\psi} \cdot f}
        }{
            \wp{\TB{\BN}}{\iverson{\psi}}
        }
\end{align*}
where the last step is a consequence of our definition $\nicefrac 0 0 = 0$ and the following observations:
\begin{enumerate}
        \item $\wp{\TB{\BN}}{\iverson{\psi}}(\sigma) = 0$ holds by assumption.
        \item $\wp{\TB{\BN}}{\iverson{\psi} \cdot f}(\sigma) = 0$ holds. This shown analogously to the the proof of Theorem~\ref{thm:newrule-wpsemantics}, below equation~(\ref{eq:bennistheorem}).
\end{enumerate}
\end{proof}

\begin{lemma}\label{lem:bn:structure}
Let $\BN$ be an extended Bayesian network, $\BNCOND$ be a conditioning function, and $f \in \E$.
Let $\INPUTS = \{u_1,\ldots,u_k\}$ be the set of inputs of $\BN$.
Moreover, let $\T{u} = (u_1,\ldots,u_k)$ be the canonically ordered tuple of inputs, i.e. $u_1 < u_2 < \ldots < u_k$.
We denote the corresponding tuple of variables by $\TVAR{\T{u}} = (\TVAR{u_1},\ldots,\TVAR{u_k})$. 
Then, for every set $M \subseteq \NODES$, we have
\begin{align*} 
    \wp{\TB{\BN}}{\iverson{\bigwedge_{v \in M} \TVAR{v} = \BVAL{v}}}
    \eeq
    \sum_{\T{y} \in \VALUES^{k}} \iverson{\TVAR{\T{u}} = \T{y}} \cdot \PROB{ \bigwedge_{v \in M} v = \BVAL{v} ~|~ \T{u} = \T{y} }
\end{align*}
\end{lemma}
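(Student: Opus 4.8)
The plan is to prove the identity by induction on the number of nodes $|\NODES|$, exploiting the recursive structure of the translation $\TB{\cdot}$, which always peels off the set of roots $R = \ROOTS{\BN}$ first and then recurses on $\BN \setminus R$. In the base case $\NODES = \emptyset$ the program $\TB{\BN}$ is the empty program, the only admissible $M$ is $\emptyset$, and both sides evaluate to $1$: on the left because $\wp{\cdot}{\iverson{\true}} = 1$, and on the right because exactly one input tuple $\T{y}$ matches any given state and the empty conjunction has probability $1$.

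For the inductive step, let $R = \{r_1,\ldots,r_m\} = \ROOTS{\BN}$ (nonempty, since a finite DAG has a root) and $\BN' = \BN \setminus R$, so that $\INPUTS' = \INPUTS \cup R$, $\NODES' = \NODES \setminus R$, and $\TB{\BN} = \COMPOSE{C_R}{\TB{\BN'}}$ with $C_R = \TBLOCK{\BN}{r_1};\ldots;\TBLOCK{\BN}{r_m}$ (reading $\TB{\BN'}$ as $\SKIP$ when $R = \NODES$, which is covered by the base case). Writing $g = \iverson{\bigwedge_{v\in M}\TVAR{v} = \BVAL{v}}$ and splitting $M$ into $M \cap R$ and $M \setminus R$, I would first push $\wp$ through the sequential composition and apply the induction hypothesis to $\BN'$ with $M' = M\setminus R \subseteq \NODES'$. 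The indicator factor on $M \cap R$ mentions only variables that are inputs of $\BN'$ and are therefore never assigned in $\TB{\BN'}$; hence it is unaffected by $\TB{\BN'}$ and the Scaling Lemma~\ref{lem:newrule-main} lets me pull it out. Splitting each input tuple $\T{y}' \in \VALUES^{|\INPUTS'|}$ of $\BN'$ into its $\INPUTS$-part $\T{y}$ and its $R$-part $\T{x}$, and using that $\BN'$ inherits $\DEP$ and $\CPTSYM$ of $\BN$ on $\NODES\setminus R$ (so the probabilities produced by the hypothesis are the intended conditionals of $\BN$), I obtain $\wp{\TB{\BN'}}{g}$ as a sum over $\T{y},\T{x}$ of $\iverson{\TVAR{\T{u}} = \T{y}}\cdot\iverson{\TVAR{R} = \T{x}}$ times $\PROB{\bigwedge_{v\in M\setminus R} v = \BVAL{v} \mid \INPUTS = \T{y},\, R = \T{x}}$, where the $M\cap R$ constraint restricts the $\T{x}$-sum to tuples consistent with $\BVAL{\cdot}$.

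It then remains to apply $\wp{C_R}{\cdot}$ to this expression. By linearity (Theorem~\ref{thm:basic-prop}) I move $\wp{C_R}{\cdot}$ inside the finite sum; the probabilities are constants, the input-indicators $\iverson{\TVAR{\T{u}} = \T{y}}$ are unaffected by $C_R$ and come out via Lemma~\ref{lem:newrule-main}, so the step reduces to computing $\wp{C_R}{\iverson{\TVAR{R} = \T{x}}}$. Since every root depends only on inputs, distinct root blocks never read the variable written by another, so each $\TBLOCK{\BN}{r_i}$ reads only input variables and writes only $\TVAR{r_i}$; a short sub-induction on $m$ using Lemma~\ref{lem:newrule-main} then yields $\wp{C_R}{\iverson{\TVAR{R} = \T{x}}}(\sigma) = \prod_{i=1}^{m}\CPT{r_i}(\T{z}_i)(\El{\T{x}}{i})$, which on the event $\TVAR{\T{u}} = \T{y}$ equals $\PROB{R = \T{x}\mid\INPUTS = \T{y}}$. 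Substituting back, the claim collapses to the identity $\PROB{\bigwedge_{v\in M} v = \BVAL{v}\mid\INPUTS=\T{y}} = \sum_{\T{x}}\PROB{\bigwedge_{v\in M\setminus R} v = \BVAL{v}\mid\INPUTS=\T{y},\, R=\T{x}}\cdot\PROB{R=\T{x}\mid\INPUTS=\T{y}}$, which is exactly the law of total probability combined with the chain rule, the $M\cap R$ constraints being absorbed into $R = \T{x}$.

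The main obstacle I expect is not any single deep argument but the careful bookkeeping around the transition of roots from nodes to inputs: one must verify that $\BN'$ genuinely inherits the conditional probability tables and dependency structure of $\BN$ on the remaining nodes, so that the induction hypothesis really supplies conditional probabilities of $\BN$, and that the splitting of the canonically ordered input tuple of $\BN'$ into $\T{y}$ and $\T{x}$ respects the ordering. The other delicate point is justifying the product formula for $\wp{C_R}{\iverson{\TVAR{R} = \T{x}}}$, which encodes the conditional independence of the roots given the inputs and must be derived from the fact that distinct roots never depend on one another; everything else is linearity and the Scaling Lemma.
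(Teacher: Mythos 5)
Your proposal is correct, and its overall skeleton coincides with the paper's: induction on the number of nodes, reduction to $\BN$ with its roots turned into inputs, the induction hypothesis supplying the conditional probabilities of the reduced network, and then Lemma~\ref{lem:newrule-main} (scaling by unaffected expectations), linearity (Theorem~\ref{thm:basic-prop}), and the chain rule to reassemble the claim. The difference lies in the granularity of the decomposition. The paper peels off a \emph{single} node per inductive step --- the smallest root $w$ --- writing $\TB{\BN} = \TBLOCK{\BN}{w};\TB{\BN\setminus\{w\}}$ and distinguishing $w\in M$ from $w\notin M$; each step therefore only ever manipulates one conditional probability table, and the base case is a one--node network. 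You instead peel off the \emph{entire} root layer $R=\ROOTS{\BN}$ at once, which mirrors the definition of $\TB{\cdot}$ more literally but forces you to discharge one extra obligation the paper never needs: the product formula $\wp{\TBLOCK{\BN}{r_1};\ldots;\TBLOCK{\BN}{r_m}}{\iverson{\TVAR{R}=\T{x}}} = \prod_i \CPT{r_i}(\T{z}_i)(\El{\T{x}}{i})$, i.e.\ the conditional independence of the roots given the inputs, which you correctly reduce to a sub--induction on $m$ via the scaling lemma (valid because roots depend only on inputs, so distinct root blocks neither read nor write each other's variables). In exchange, your single--case treatment of $M\cap R$ versus $M\setminus R$ replaces the paper's two--case split on whether the peeled node is observed in $M$. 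Two minor points of bookkeeping: your base case $\NODES=\emptyset$ is not literally a $\bnl$ program (the grammar has no empty block sequence), so you should either start the induction at the root layer itself or make the $\SKIP$ convention explicit; and, like the paper, you implicitly read $\PROB{\bigwedge_{v\in M}v=\BVAL{v}\mid\T{u}=\T{y}}$ for non--downward--closed $M$ as the marginal of the product of CPTs, which is what makes your final appeal to the law of total probability and the chain rule a purely formal identity. Neither point is a gap in substance.
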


\begin{proof}
By induction on the number of nodes $n$ of the extended Bayesian network $\BN$.
To improve readability, let $\varphi = \bigwedge_{v \in M} \TVAR{v} = \BVAL{v}$.

\emph{I.B.} For $n = 1$, $\BN$ contains exactly one node, say $w$. 
Let $\VALUES^{|\DEP(w)|} = \{ \T{y_1}, \ldots, \T{y_k} \}$.
\begin{align*}
        & \wp{\TB{\BN}}{\iverson{\varphi}} \\
        \eeq &
        \wp{\TBLOCK{\BN}{w}}{\iverson{\varphi}} \\
        \eeq &
        \sum_{1 \leq j < k} \left( \prod_{1 \leq \ell < j} \iverson{\neg \TGUARD{\BN}{w}{\T{y_{\ell}}}} \right)
        \cdot \iverson{\TGUARD{\BN}{w}{\T{y_{j}}}} \cdot \wp{\TASSIGN{\BN}{w}{\T{y_{j}}}}{\iverson{\varphi}}
        \tag{apply $\wpsymbol$ to $\TBLOCK{\BN}{w}$} \\
        & 
        \quad \,+\, 
        \left( \prod_{1 \leq \ell < k} \iverson{\neg \TGUARD{\BN}{w}{\T{y_{\ell}}}} \right)
        \cdot \wp{\TASSIGN{\BN}{w}{\T{y_{k}}}}{\iverson{\varphi}} \\
        \eeq &
        \sum_{1 \leq j \leq k} \iverson{\T{u} = \T{y_j}} \cdot \wp{\TASSIGN{\BN}{w}{\T{y_{j}}}}{\iverson{\varphi}}
        \tag{guards partition $\VALUES^{|\DEP(w)|}$} \\
        \eeq &
        \sum_{1 \leq j \leq k} \iverson{\T{u} = \T{y_j}} \cdot \CPT{w}(\T{y_j})(\BVAL{w}) \cdot \iverson{\varphi}\subst{\TVAR{w}}{\BVAL{w}}
        \tag{Definition of $\TASSIGN{\BN}{w}{.}$, Table~\ref{table:wp}} \\
        \eeq &
        \sum_{1 \leq j \leq k} \iverson{\T{u} = \T{y_j}} \cdot \PROB{w = \BVAL{w} ~|~ \DEP(w) = \T{y_j}} \cdot \iverson{\varphi}\subst{\TVAR{w}}{\BVAL{w}}
        \tag{Definition of $\CPTSYM$} \\
        \eeq &
        \sum_{1 \leq j \leq k} \iverson{\T{u} = \T{y_j}} \cdot \PROB{w = \BVAL{w} ~|~ \DEP(w) = \T{y_j}}.
        \tag{$\varphi = \true$ or $\varphi = (w = \BVAL{w})$} 
\end{align*}
\emph{I.S.} Now consider an extended Bayesian network $\BN$ whose (smallest) root is $w$.
We distinguish two cases: $w \in M$ and $w \notin M$.
First, assume $w \notin M$. Then
\begin{align*}
        & \wp{\TB{\BN}}{\iverson{\varphi}} \\
        \eeq &
        \wp{\TBLOCK{\BN}{w};\TB{\BN'}}{\iverson{\varphi}}
        \tag{Construction of $\TB{\BN}$} \\
        \eeq &
        \wp{\TBLOCK{\BN}{w}}{\wp{\TB{\BN'}}{\iverson{\varphi}}}
        \tag{Table~\ref{table:wp}} \\
        \eeq &
        \wp{\TBLOCK{\BN}{w}}{
            \underbrace{\sum_{1 \leq j \leq k} \iverson{\T{u} = \T{y_j}} \cdot \PROB{\bigwedge_{v \in M} v = \BVAL{v} ~|~ \T{u} = \T{y_j}}}_{\eeq g}
        }
        \tag{I.H.} \\
        \eeq &
        \sum_{1 \leq j \leq k} \iverson{\T{u} = \T{y_j}} \cdot \PROB{\bigwedge_{v \in M} v = \BVAL{v} ~|~ \T{u} = \T{y_j}}
        \tag{$\wp{\TBLOCK{\BN}{w}}{g} = g$ as $\TBLOCK{\BN}{w}$ does not modify variables in $g$.} \\
\end{align*}
Now assume $w \in M$, and $\VALUES^{|\DEP(w)|} = \{\T{z_1},\ldots,\T{z_{m}}\}$. Then
\begin{align*}
        & \wp{\TB{\BN}}{\iverson{\varphi}} \\
        \eeq &
        \wp{\TBLOCK{\BN}{w};\TB{\BN'}}{\iverson{\varphi}}
        \tag{Construction of $\TB{\BN}$} \\
        \eeq &
        \wp{\TBLOCK{\BN}{w}}{\wp{\TB{\BN'}}{\iverson{\varphi}}}
        \tag{Table~\ref{table:wp}} \\
        \eeq &
        \wp{\TBLOCK{\BN}{w}}{
        \underbrace{\sum_{1 \leq j \leq k', a \in \VALUES} \iverson{w = a \wedge \T{u'} = \T{y_j'}} 
        \cdot \PROB{\bigwedge_{v \in M \setminus \{w\}} v = \BVAL{v} ~|~ w = a, \T{u'} = \T{y_j'}}}_{\eeq g}
        }
        \tag{I.H.} \\
        \eeq &
        \sum_{1 \leq j < k'} \left( \prod_{1 \leq \ell < j} \iverson{\neg \TGUARD{\BN}{w}{\T{z_{\ell}}}} \right)
        \cdot \iverson{\TGUARD{\BN}{w}{\T{z_{j}}}} \cdot \wp{\TASSIGN{\BN}{w}{\T{z_{j}}}}{g}
        \tag{apply $\wpsymbol$ to $\TBLOCK{\BN}{w}$} \\
        & 
        \quad \,+\, 
        \left( \prod_{1 \leq \ell < k'} \iverson{\neg \TGUARD{\BN}{w}{\T{z_{\ell}}}} \right)
        \cdot \wp{\TASSIGN{\BN}{w}{\T{z_{k}}}}{g} \\
        \eeq &
        \sum_{1 \leq j \leq m} \iverson{\T{\TVAR{\DEP(w)}} = \T{z_j}} \cdot \wp{\TASSIGN{\BN}{w}{\T{z_{j}}}}{g}
        \tag{guards partition $\VALUES^{|\INPUTS|}$ and actually all rationals due to $\ELSE$ at the end of each block.} \\
        \eeq &
        \sum_{1 \leq j \leq m} \iverson{\T{\TVAR{\DEP(w)}} = \T{z_j}} \cdot \CPT{w}(\T{z_j})(\BVAL{w}) \cdot g\subst{\TVAR{w}}{\BVAL{w}} 
        \tag{Definition of $\TASSIGN{\BN}{w}{.}$} \\
        \eeq &
        \sum_{1 \leq j \leq m} \iverson{\T{\TVAR{\DEP(w)}} = \T{z_j}} \cdot \PROB{w = \BVAL{w} ~|~ \DEP(w) = \T{z_j}} \cdot g\subst{\TVAR{w}}{\BVAL{w}} 
        \tag{Definition of $\CPT{w}(\T{z_j})(\BVAL{w})$} \\
        \eeq &
        \sum_{1 \leq j \leq m} \iverson{\T{\TVAR{\DEP(w)}} = \T{z_j}} \cdot \PROB{w = \BVAL{w} ~|~ \DEP(w) = \T{z_j}}  
        \tag{Definition of $g$} \\
        & \qquad \cdot \sum_{1 \leq i \leq k'} \iverson{\T{u'} = \T{y_i'}} \cdot \PROB{\bigwedge_{v \in M \setminus \{w\}} v = \BVAL{v} ~|~ w = \BVAL{w}, \T{u'} = \T{y_i'}} 
        \\
        \eeq &
        \sum_{1 \leq j \leq m} \iverson{\T{\TVAR{\DEP(w)}} = \T{z_j}} \sum_{1 \leq i \leq k'} \iverson{\T{u'} = \T{y_i'}} \cdot 
        \tag{Algebra} \\
        & \qquad \PROB{w = \BVAL{w} ~|~ \DEP(w) = \T{z_j}}  
            \cdot \PROB{\bigwedge_{v \in M \setminus \{w\}} v = \BVAL{v} ~|~ w = \BVAL{w}, \T{u'} = \T{y_i'}} \\
        \eeq &
        \sum_{1 \leq j \leq m} \iverson{\T{\TVAR{\DEP(w)}} = \T{z_j}} \sum_{1 \leq i \leq k'} \iverson{\T{u'} = \T{y_i'}}
            \cdot \PROB{\bigwedge_{v \in M} v = \BVAL{v} ~|~ \DEP(w) = \T{z_j}, \T{u'} = \T{y_i'}} \\
        \tag{Chain rule} \\
        \eeq &
        \sum_{1 \leq i \leq k} \iverson{\T{u} = \T{y_i}}
            \cdot \PROB{\bigwedge_{v \in M} v = \BVAL{v} ~|~ \T{u} = \T{y_i}} \\
            \tag{All inputs of $\BN$, i.e. $\T{u}$ are given by $\DEP(w)$ and inputs of $\BN'$, i.e. $\T{u'}$} \\
        \eeq &
        \sum_{\T{y} \in \VALUES^{k}} \iverson{\T{u} = \T{y}}
            \cdot \PROB{\bigwedge_{v \in M} v = \BVAL{v} ~|~ \T{u} = \T{y}}.
\end{align*}
\end{proof}

\end{document}